\documentclass[11pt]{article}
\usepackage{amsmath}
\usepackage{graphicx}
\usepackage{psfrag}
\usepackage{epsf}
\usepackage{enumerate}
\usepackage[round]{natbib}
\usepackage{url} 
\usepackage{amsthm}
\usepackage{xpatch}
\usepackage{subcaption}

\usepackage{longtable}
\usepackage{mathrsfs}
\usepackage{tikz}
\usepackage{color}
\usepackage{amssymb}
\usepackage{latexsym}
\usepackage{xr}
\usepackage{epstopdf}
\usepackage{float}
\usepackage{multirow}
\usepackage{longtable}

\usepackage{bbm}

\numberwithin{equation}{section}




\def\RR{\mathbb R}
\def\ZZ{\mathbb Z}

\def\EE{\mathbb E}
\def\PP{\mathbb P}

\def\PP{\mathbb P}

\newtheorem{theorem}{Theorem}[section]
\newtheorem{proposition}{Proposition}[section]
\newtheorem{lemma}{Lemma}[section]
\newtheorem{corollary}{Corollary}[section]

\theoremstyle{definition}
\newtheorem{remark}{Remark}[section]
\newtheorem{example}{Example}[section]
\newtheorem{definition}{Definition}[section]

\makeatletter
\xpatchcmd{\proof}{\@addpunct{.}}{\@addpunct{:}}{}{}
\makeatother



\textwidth 6.5 in \hoffset -.7in
\textheight 9 in \voffset -.9in

\includeonly{appendix_doc}

\begin{document}

\def\spacingset#1{\renewcommand{\baselinestretch}%
{#1}\small\normalsize} \spacingset{1}


\title{Stationary subspace analysis of nonstationary covariance processes: eigenstructure description and testing
\footnote{AMS subject classification. Primary: 60G12. Secondary: 62M10.}
\footnote{Keywords and phrases: Multivariate nonstationarity, 
Local and global dimensions, Dimension test, Eigen-decomposition.}
\footnote{The second author was supported in part by the National Science Foundation grant DMS-1712966.   The work of
the third author was supported by the National Science Foundation grant DMS-1612984.}}

\author{
Raanju R.\ Sundararajan\\ Southern Methodist University \\ King Abdullah University of Science and Technology \and Vladas Pipiras\\ University of North Carolina
 \and  Mohsen Pourahmadi\\ Texas A\&M University }
\date{\today}

\maketitle

\bigskip

\begin{abstract}

\noindent Stationary subspace analysis (SSA) searches for linear combinations of the components of nonstationary vector time series that are stationary. These linear combinations and their number define an associated stationary subspace and its dimension. SSA is studied here for  zero mean nonstationary covariance processes. We characterize stationary subspaces and their dimensions in terms of eigenvalues and eigenvectors of certain symmetric matrices. This characterization is then used to derive formal statistical tests for estimating dimensions of stationary subspaces. Eigenstructure-based techniques are also proposed to estimate stationary subspaces, without relying on previously used computationally intensive optimization-based methods. Finally, the introduced methodologies are examined on simulated and real data.

\end{abstract}

\section{Introduction}
\label{s:intro}

The goal of this work is to provide new fundamental insights into the so-called stationary subspace analysis (SSA), a technique for finding linear combinations of  components of a multivariate time series that are stationary. More precisely, consider a setup where  
 the observed $p$-vector nonstationary  time series $ X_{t} $ is  a linear transformation of a $d$-vector stationary series
 $U_{t}^{s}$  and a $(p-d)$-vector nonstationary series  $U_t^{n}$ through
\begin{equation}\label{e:intro-ssamodel}
X_t = M U_t = \left[ \begin{array}{c|c}
M_s & M_n
\end{array} \right] \; \begin{bmatrix}
U_t^s \\
U_t^{n}
\end{bmatrix},
\end{equation}
and $M$ is an unknown $p \times p$ (invertible) mixing matrix, $M_s$ and $M_n$ are
 $p \times d $ and $ p \times (p-d)$ matrices, respectively. It is further assumed that no linear
  transformation of $U_t^n$ is stationary. Given the data  $X_1,X_2,\hdots, X_T$,
   SSA seeks to find the demixing matrix $B = (M^{-1})'$ so that $B' X_t = U_t$
   is naturally partitioned into its
    stationary and nonstationary sources. The space spanned by the first $d$ columns of $B$ is referred to as a stationary subspace and $d$ as its dimension.

SSA was introduced and studied by \cite{ssa09}, with applications to analyzing EEG data from neuroscience experiments. In that work, the observed vector time series is assumed to be independent across time and the notion of stationarity is with respect to the first two moments, that is, the mean and lag-0 covariance are required to be time invariant.  The demixing matrix in SSA
   is found in the spirit of ANOVA by  dividing
     the observed time series data into $N$ segments and minimizing a Kullback-Leibler (KL) divergence between
     Gaussian distributions measuring differences in the means and covariances across these segments.
        A sequential
          likelihood ratio test is used  in \citet{ssa09} and  \citet{SSAchangepoint} to determine
          the dimension of the stationary subspace $d$
          under the additional assumption of  normality of the data.
 The frequency domain or dependent SSA (DSSA)  in \cite{sundararajan:2017} avoids dividing the data into segments and relies on the approximate
   uncorrelatedness of the discrete Fourier transform (DFT) of a second-order
    stationary time series at Fourier frequencies. The sum of the Frobenius norms of the estimated
        covariances of the DFTs at the first few  lags is used as a discrepancy measure  and the demixing matrix is obtained by optimizing this measure. Then, a sequential test of second-order stationarity is used
           to determine $d$  and the
            consistency of the estimated $d$ is discussed using the asymptotic
           distribution of the test statistic under the alternative hypothesis of local stationarity of the time series  (\citet{dahlhaus97,dahlhaus12}).

Overall, the research thus far suggests the need for a better mathematical formulation and understanding of the problem. A reformulation and, in particular, an optimization-free solution seems to be the key in finding a transparent and interpretable solution of the SSA problem. In this work, we shall tackle these issues for a special but general case of \eqref{e:intro-ssamodel}, namely, that of zero mean vectors $U_t$, assuming $U_t = C(t)Y_t$ with i.i.d.\ zero mean vectors $Y_t$. We shall further write this model formulation as

\begin{equation}\label{e:intro-vcmodel}
X_t = A \Big( \frac{t}{T} \Big) Y_t, \quad t=1,2, \hdots ,T,
\end{equation}
where $T$ is the sample size and the time dependence is brought into $A(\frac{t}{T}) = M \cdot C(t)$. Additional assumptions on the matrix-valued function $A:(0,1 ) \rightarrow \mathbb{R}$ and the i.i.d.\ vectors $Y_t$ can be found below in Section \ref{s:models}. The modification \eqref{e:intro-vcmodel} of \eqref{e:intro-ssamodel} takes the heterogeneity out of $U_t$ and places it into the deterministic matrix-valued function $A(\cdot)$. The nonstationary covariance process \eqref{e:intro-vcmodel} will be said to follow a \textit{varying covariance} (\textit{VC}) model. The assumption of zero mean in \eqref{e:intro-vcmodel} is made for several reasons. For one, all previous works that identify EEG data analysis as the motivating application involve this assumption. We are currently working in parallel on analogous approaches to the SSA problem for time-varying means (\citet{ssa-means}), and will possibly look at the combined model in the future. In the latter regard, we should also note that dealing with SSA for varying covariances is seemingly much more involved than for varying means. Indeed, as seen from this work, the SSA for the VC model has a surprisingly rich structure.

Our contributions to the SSA for the VC model \eqref{e:intro-vcmodel} are as follows. First, by using basic ideas from linear algebra, we provide an interpretation of a stationary subspace and its dimension $d$ in terms of eigenvalues and eigenvectors of certain symmetric matrices. This interpretation, in fact, is given at two levels: ``local'' or for fixed $u \in (0,1)$, and ``global'' or for $(0,1) = \{ u :\; u \in (0,1) \}$, where $u$ is thought here as a variable of $A(u)$ replaced by $t/T$ in \eqref{e:intro-vcmodel}. Second, in the context of the obtained interpretation, we develop formal statistical tests for both ``local'' and ``global'' dimensions of stationary subspaces. Together with the algebraic interpretation, these tests are the key theoretical  contributions of this work. Third, by leveraging the new interpretation of stationary subspaces, we provide more direct and algebraic ways to construct them. These are shown to outperform the computationally more expensive optimization-based solutions of the previous SSA approaches in a number of simulation settings. We should also note that the proposed dimension tests in Section \ref{s:inf-dimens} assume the existence of a common stationary subspace and dimension across the ``local'' levels (see Section \ref{s:char-implic} for more details); testing for the latter remains an open problem. Furthermore, this work concerns the asymptotics under $T \rightarrow \infty$ with a fixed $p$. Fourth, we revisit an SSA application to EEG data from a Brain-Computer Interface (BCI) experiment and provide additional insights by using the proposed methods.

The outline of the paper is as follows. Section \ref{s:models} reintroduces more formally the VC model and its stationary subspace and dimension. Section \ref{s:char} gives an eigenstructure-based characterization of a stationary subspace and its dimension. Section \ref{s:inf-dimens} considers statistical tests for the dimension of a stationary subspace at both the ``local'' and ``global'' levels. Section \ref{s:inf-subspace} discusses estimation  methods for stationary subspaces that are based on algebraic constructs and do not involve iterative and computationally heavy optimization methods. Sections \ref{s:simulation} and \ref{s:application} illustrate the proposed methodologies using simulated and real data. Section 8 concludes. 


\section{Model of interest and its stationary subspace}
\label{s:models}

We focus throughout this work on the varying covariance (VC) model \eqref{e:intro-vcmodel}, where $ A:(0,1) \rightarrow \mathbb{R}^{p \times p}$ is a smoothly varying matrix-valued function and  $Y_t$ are i.i.d.\ random vectors with i.i.d.\ entries, $\EE(Y_t)=0$ and $\EE(Y_t Y_t') = I_p$.
Further technical assumptions can be found below.

\begin{definition}\label{d:model-vc-ss}
If $d$ is the largest integer in $\{0,1,2,\hdots,p\}$ for which there is a $p \times d$ matrix $B_1$ such that
\begin{equation}\label{e:model-vc-ss}
	B_1' A^2(u) B_1 = \Sigma, \; \; \forall \; u\in (0,1),
\end{equation}
where  $A^2(u) = A(u) A(u)'$ , $\Sigma$ does not depend on $u$ and $B_1' B_1 = I_d$,  then the space ${\cal B}_1$ spanned by the columns of $B_1$ will be called a (second-order) \textit{stationary subspace} of dimension $d$ of the model \eqref{e:intro-vcmodel}.
\end{definition}

Note that (\ref{e:model-vc-ss}) states effectively that the covariance matrix of $B_1' X_t$ does not depend on $t$. It
%
%
can be reformulated as follows: Let $\overline{A}^2 = \int_0^1 A^2(u) du$ and define a $p \times p$ symmetric matrix $M(u)$ as
\begin{equation} \label{e:models-M}
M(u) = A^2(u) - \overline{A}^2.
\end{equation}
Then, the condition \eqref{e:model-vc-ss} is equivalent to
\begin{equation}\label{e:model-vc-ss-2}
	B_1' M(u)  B_1  = 0, \; \; \forall \; u \in (0,1).
\end{equation}

\noindent Indeed, \eqref{e:model-vc-ss} implies \eqref{e:model-vc-ss-2} after integrating \eqref{e:model-vc-ss} over $u \in (0,1)$, noting that $\int_0^1 \Sigma du = \Sigma$ and subtracting the two sides of the resulting relation $B_1' \overline{A}^2 B_1 = \Sigma$ from those of \eqref{e:model-vc-ss}. Similarly \eqref{e:model-vc-ss-2} implies \eqref{e:model-vc-ss} with $\Sigma =B_1' \overline{A}^2 B_1 $. The matrix $M(u)$ will play a central role henceforth.

Our approach to finding a matrix $B_1$ and the corresponding stationary subspace $\mathcal{B}_1$ and dimension $d$ is based on the relation \eqref{e:model-vc-ss-2} for a fixed $u$, that is, $B_1' M(u) B_1 = 0$ for a fixed $u$ and $B_1 = B_1(u)$ of dimension $d = d(u)$. As shown in the next section, for a fixed $u$, the matrix $B_1(u)$ and its dimension $d(u)$ can be characterized using the eigenstructure of the matrix $M(u)$. When it comes to $M(u)$, we shall be using the terminology of the following definition.

\begin{definition} \label{d:model-vc-ss-local}

Let $B_1 = B_1(u)$ be a matrix with $d=d(u)$ columns that satisfies \eqref{e:model-vc-ss-2} for a fixed $u \in (0,1)$. The space ${\cal B}_1(u)$ spanned by the columns of $B_1(u)$ will be called a \textit{local stationary subspace} of \textit{local dimension} $d(u)$.  The respective quantities in Definition 2.1 will be referred to as a \textit{global stationary subspace} and a \textit{global dimension}.
\end{definition}

Relationships between local and global stationary subspaces and their dimensions are discussed in Section \ref{s:char-implic}.


\section{Matrix pseudo nullity and pseudo null space}
\label{s:char}

In this section we describe an eigenstructure-based characterization of a stationary subspace and its dimension. We first define the notions of \textit{pseudo null space} and \textit{psuedo nullity} for symmetric matrices and then later, in Section \ref{s:char-implic}, associate them to stationary subspaces and their dimensions. In view of the relation \eqref{e:model-vc-ss-2}  and Definition \ref{d:model-vc-ss-local}, we start with the following definition.

\begin{definition} \label{d:descr-psudo-nullity}
Let $M$	be any $p\times p$ symmetric matrix. A {\it pseudo nullity} of $M$, denoted by $d(M)$, is defined as the largest positive number $d_1$ such that
\begin{equation}\label{e:M-pdim}
	C_1' M C_1 = 0,
\end{equation}
for a $p\times d_1$ matrix $C_1$ with $C_1' C_1 = I_{d_1}$. A {\it pseudo null space} of $M$, denoted as ${\cal P}(M)$, is defined as
the linear span of the $d_1$ columns of the matrix $C_1$ in (\ref{e:M-pdim}). A column of $C_1$, that is, a column vector $s$ such that $s'Ms = 0$ will be
called a {\it pseudo eigenvector}.
\end{definition}

If $M$ is positive semi-definite, note that its pseudo nullity is its nullity (i.e.\ the number of zero eigenvalues of $M$) and its pseudo null space is its null space; thus, the psuedo- terminology is used to draw attention to the contrast between these two cases. Otherwise, we should caution the reader against drawing other parallels between the two contexts. For example, if $s_1$ and $s_2$ are two pseudo eigenvectors (which can be either orthogonal or non-orthogonal), note that there is a priori no reason to have $s_1'Ms_2=0$ and hence $C_1' M C_1 = 0$ with $C_1= (s_1\, s_2)$. In particular, if for e.g.\ $d(M)=2$, and $s_1$ and $s_2$ are orthogonal, the linear space spanned by $s_1$ and $s_2$ is not necessarily a pseudo null space.

Another word of caution is that ${\cal P}(M)$ is not unique in general. This, perhaps surprising, fact will be explained below. By writing ${\cal P}(M)$, we mean one of the pseudo null spaces.

\subsection{Characterization of pseudo nullity}
\label{s:char-pdim}

We characterize here the pseudo nullity $d(M)$ of a symmetric matrix $M$ in terms of its inertia.  More precisely, let
\begin{equation}\label{e:M-0+-}
	d_0 = d_0(M), \quad d_{+} = d_+(M), \quad d_{-} = d_-(M)
\end{equation}
be the number of zero, positive and negative eigenvalues of $M$, respectively. Part of the proof of the following result is constructive and will be used to construct pseudo null spaces (stationary subspaces) in Section \ref{s:inf-subspace}.

\begin{proposition}\label{p:rank_claim_1}
Let $M$ be a symmetric matrix. Then, $d(M) = d_0 +  \min( d_+, d_-)$.
\end{proposition}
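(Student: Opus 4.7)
The plan is to recognize that the condition $C_1'MC_1 = 0$ with $C_1'C_1 = I_{d_1}$ is equivalent to asking that the column span $W$ of $C_1$ be a \emph{totally isotropic subspace} of the symmetric bilinear form $(x,y)\mapsto x'My$. Indeed, if $x'Mx = 0$ for every $x\in W$, then by the polarization identity and the symmetry of $M$ we get $2\, x'My = (x+y)'M(x+y) - x'Mx - y'My = 0$ for all $x,y \in W$. So the quantity $d(M)$ is exactly the maximal dimension of a totally isotropic subspace of $M$, and we need to show this equals $d_0 + \min(d_+, d_-)$.

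For the lower bound (which is the constructive half and will also be reused in Section \ref{s:inf-subspace}), I would diagonalize $M = Q\Lambda Q'$ with orthogonal $Q = (q_1,\ldots,q_p)$ and $\Lambda = \mathrm{diag}(\lambda_1,\ldots,\lambda_{d_+},-\mu_1,\ldots,-\mu_{d_-},0,\ldots,0)$ where all $\lambda_i,\mu_j>0$. Assume without loss of generality that $d_+\le d_-$. For each $i=1,\ldots,d_+$, set
\[
v_i \;=\; \alpha_i\, q_i + \beta_i\, q_{d_+ + i}, \qquad \alpha_i = \sqrt{\tfrac{\mu_i}{\lambda_i+\mu_i}}, \;\; \beta_i = \sqrt{\tfrac{\lambda_i}{\lambda_i+\mu_i}},
\]
so that $v_i'v_i=1$ and $v_i'Mv_i = \alpha_i^2\lambda_i - \beta_i^2\mu_i = 0$. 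Because the $q_k$ are mutually orthogonal, both $v_i'v_j=\delta_{ij}$ and $v_i'Mv_j=0$ hold for $i\neq j$. Appending the $d_0$ unit eigenvectors corresponding to the zero eigenvalues produces $d_+ + d_0$ orthonormal vectors which are pairwise orthogonal to the $v_i$'s in both the standard and $M$-inner products. Gathering them as columns of $C_1$ gives a matrix of the required form with $d_1 = d_0 + d_+ = d_0 + \min(d_+,d_-)$, proving $d(M) \ge d_0 + \min(d_+,d_-)$.

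For the upper bound, let $W\subseteq\mathbb{R}^p$ be any totally isotropic subspace of dimension $d_1$, and write $\mathbb{R}^p = V_+\oplus V_-\oplus V_0$ for the spectral decomposition into positive, negative and zero eigenspaces of $M$. Consider the linear map $\pi_+\colon W\to V_+$ obtained by projecting onto $V_+$ along $V_-\oplus V_0$. If $x\in\ker \pi_+$, write $x = x_- + x_0$ with $x_-\in V_-$, $x_0\in V_0$; then $0 = x'Mx = x_-'Mx_-$, which forces $x_-=0$ since $M$ is negative definite on $V_-$. Hence $\ker\pi_+\subseteq V_0$, so $\dim\ker\pi_+\le d_0$, and by rank–nullity $d_1 = \dim\pi_+(W)+\dim\ker\pi_+ \le d_+ + d_0$. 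The symmetric argument with $\pi_-$ yields $d_1\le d_- + d_0$, so $d_1\le d_0 + \min(d_+,d_-)$, completing the proof.

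I do not anticipate a genuine obstacle here; the only item requiring some care is the bookkeeping in the lower-bound construction (orthonormality of the $v_i$'s together with the zero eigenvectors, and simultaneous vanishing of the \emph{off-diagonal} $M$-inner products, not just the isotropic diagonal condition). Everything else is a standard inertia/Witt-index type argument for indefinite symmetric forms.
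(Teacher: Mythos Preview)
Your proposal is correct. The lower-bound construction is essentially identical to the paper's: diagonalize $M$, pair each positive eigenvector with a negative one via coefficients satisfying $\alpha_i^2\lambda_{+,i}+\beta_i^2\lambda_{-,i}=0$, then adjoin the null-space eigenvectors. The paper writes the coefficients as $\alpha_i^{1/2},\beta_i^{1/2}$ with $\alpha_i+\beta_i=1$ and leaves the explicit values implicit, but it is the same idea.

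The upper bound is where you genuinely diverge. The paper invokes the Poincar\'e Separation Theorem: for any $C_1$ with $C_1'C_1=I_{d_1}$, the eigenvalues $\mu_i$ of $C_1'MC_1$ interlace those of $M$ as $\lambda_i\le\mu_i\le\lambda_{p-d_1+i}$; taking all $\mu_i=0$ forces $\lambda_i\le 0\le\lambda_{p-d_1+i}$ for $i=1,\ldots,d_1$, from which $d_1\le d_0+\min(d_+,d_-)$ follows by counting. Your projection/rank--nullity argument (project the isotropic subspace onto $V_+$ along $V_-\oplus V_0$, observe that the kernel lands in $V_0$ because $M$ is definite on $V_-$) is the classical Witt-index computation for real symmetric forms. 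It is more self-contained---no interlacing theorem needed---and makes the connection to isotropic subspaces explicit; the paper's route is shorter if one already has Poincar\'e's theorem at hand and stays closer to the eigenvalue language used elsewhere in the paper. Your opening remark that $C_1'MC_1=0$ is equivalent to total isotropy of $\mathrm{col}(C_1)$ is correct and worth keeping, since it clarifies why the off-diagonal conditions come for free.
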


\begin{proof}
By the Poincare Separation Theorem (e.g.\ \citet{magnus:neudecker:1999}),
\begin{equation*}
\lambda_i \leq \mu_i \leq \lambda_{n-d_1+i}, \quad i=1,\ldots,d_1,
\end{equation*}
where $\mu_1 \leq \ldots \leq \mu_{d_1}$ are the ordered eigenvalues of $C_1'MC_1$ for any $p\times d_1$ matrix $C_1$ such that $C_1' C_1=I_{d_1}$. Taking $C_1$ as in (\ref{e:M-pdim}) with $d_1=d(M)$, we have $C_1' M C_1 = 0$, and hence
\begin{equation*}
\lambda_i \leq 0 \leq \lambda_{n-d_1+i}, \quad i=1,\ldots,d_1.
\end{equation*}
This shows that $d(M) \leq d_0 + \min(n_+ , n_-)$.

To prove the reverse inequality $d_0 + \min(n_+ , n_-) \leq d(M)$, let $d_* =  d_0 +  \min( d_+, d_-)$. The rest of the proof constructs a $p \times d_*$ matrix $C_1$ such that $C_1' M C_1 = 0$ and $C_1' C_1=I_{d_*}$, which yields the desired inequality. By the Schur decomposition (e.g.\ \citet{magnus:neudecker:1999}),
\begin{equation}\label{e:M-schur}
S'MS = \mbox{diag}(\lambda_1,\lambda_2,\ldots,\lambda_p),	
\end{equation}
where the columns $s_1, \ldots,s_p$ of $S$ are orthonormal and represent the eigenvectors  associated with the negative, positive and zero eigenvalues of $M$. We now need to separate the eigenvectors into those associated with the eigenvalues $\lambda_1,\ldots,\lambda_p$ of $M$. Let $s_{0,i}$ be the eigenvectors associated with the zero eigenvalues $\lambda_{0,i}$,  $i=1,\ldots,d_0$, $s_{+,i}$ be the eigenvectors associated with the positive eigenvalues $\lambda_{+,i}$, $i=1,\ldots,d_{+}$, and $s_{-,i}$ be the eigenvectors associated with the negative eigenvalues $\lambda_{-,i}$, $i=1,\ldots,d_-$. Note by (\ref{e:M-schur}) that
\begin{equation}\label{e:M-schur-2}
s_i's_j = \delta_{ij}, \quad  s_i'Ms_j = \delta_{ij} \lambda_{i},
\end{equation}
where $\delta_{ij}=1$ if $i=j$ and $0$ otherwise.

For the zero eigenvalues and the corresponding eigenvectors, we have $M s_{0,i} = 0 \cdot s_{0,i}$ and hence
\begin{equation}\label{e:M-zero-eig}
	s_{0,i}' M s_{0,i} = 0,\quad  i=1,\ldots,d_0.
\end{equation}
Similarly, for the positive and negative eigenvalues, we have
\begin{eqnarray}
	s_{+,i}' M s_{+,i} & = &  \lambda_{+,i}>0,\quad i=1,\ldots,d_{+}, \notag \\
	s_{-,i}' M s_{-,i} & = &  \lambda_{-,i}<0,\quad i=1,\ldots,d_{-}. \notag
\end{eqnarray}

Let $d_\pm = \min(d_{+},d_{-})$. Then, for some $\alpha_i \in (0,1)$ and $\beta_i = 1-\alpha_i$, we have
\begin{equation}\label{e:M-posneg-eig=0}
(\alpha_i^{1/2}s_{+,i} + \beta_i^{1/2}s_{-,i})' M (\alpha_i^{1/2}s_{+,i} + \beta_i^{1/2}s_{-,i})
= \alpha_i(s_{+,i}' M s_{+,i}) + \beta_i (s_{-,i}' M s_{-,i}) = 0,
\end{equation}
for $i=1,\ldots,d_\pm$. Set

\begin{equation} \label{e:M-posneg-eig}
s_{\pm,i} = \frac{\alpha_i^{1/2}s_{+,i} +  \beta_i^{1/2}s_{-,i}}{|| \alpha_i^{1/2}s_{+,i} + \beta_i^{1/2}s_{-,i} ||_2},\quad i=1,\ldots,d_\pm,
\end{equation}
and define a $(d_0+d_\pm) \times p = d_* \times p$ matrix $C_1$ as

\begin{equation*}
C_1 = (s_{0,1}\ \ldots\ s_{0,d_0}\ s_{\pm,1}\ \ldots \ s_{\pm,d_\pm}).
\end{equation*}

By using (\ref{e:M-schur-2}), (\ref{e:M-zero-eig}) and (\ref{e:M-posneg-eig=0}), we have $C_1' M C_1 = 0$. Since $s_i$ are orthonormal and in view of (\ref{e:M-posneg-eig}), we also have
$C_1' C_1 = I_{d_*}$. This shows that $ d_0 + d_{\pm}  \leq d(M)$ and concludes the proof.
\end{proof}

\subsection{Characterization of pseudo null space}
\label{s:char-pnull}

A closer examination of the proof of Proposition \ref{p:rank_claim_1} also suggests further insights into the structure of a pseudo null space ${\cal P}(M)$. The next auxiliary result describes an element of ${\cal P}(M)$. As in the proof of Proposition \ref{p:rank_claim_1}, we let $s_i$ ($s_{0,i}$, $s_{+,i}$ and $s_{-,i}$, resp.) be the orthonormal eigenvectors (associated with the zero, positive and negative eigenvalues, resp.) of $M$. The corresponding eigenvalues are denoted $\lambda_i$ ($\lambda_{0,i}=0$, $\lambda_{+,i}$, $\lambda_{-,i}$ resp.). We also let ${\cal N}_0 (M)$ denote the null space of the matrix $M$, that is, the linear space spanned by the eigenvectors $s_{0,i}$, $i=1,\ldots,d_0$.

\begin{lemma}\label{l:M-null}
Let $M$ be a symmetric matrix and $w$ be a pseudo eigenvector. Then, 	
\begin{equation}\label{e:M-null-decomp}
	w = w_0 + w_\pm,
\end{equation}
where $w_0\in {\cal N}_0 (M)$ and
\begin{equation}\label{e:M-null-decomp-1}
	w_\pm = \sum_{i=1}^{d_+} \alpha_{+,i} s_{+,i} + \sum_{i=1}^{d_-} \alpha_{-,i} s_{-,i},
\end{equation}
where $\alpha_{-,i},\alpha_{+,i}\in\RR$ are such that
\begin{equation}\label{e:M-null-decomp-2}
	\sum_{i=1}^{d_+} \alpha_{+,i}^2 \lambda_{+,i}  + \sum_{i=1}^{d_-} \alpha_{-,i}^2 \lambda_{-,i} = 0.
\end{equation}
Conversely, a vector $w$ expressed through (\ref{e:M-null-decomp})--(\ref{e:M-null-decomp-2}) is a pseudo eigenvector.
\end{lemma}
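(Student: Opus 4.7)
The plan is a direct calculation using the spectral decomposition that was already invoked in the proof of Proposition 3.1. Since $M$ is symmetric, the collection $\{s_{0,i}\}_{i=1}^{d_0} \cup \{s_{+,i}\}_{i=1}^{d_+} \cup \{s_{-,i}\}_{i=1}^{d_-}$ forms an orthonormal basis of $\mathbb{R}^p$. First I would write an arbitrary vector $w \in \mathbb{R}^p$ uniquely in this basis as
\begin{equation*}
w = \sum_{i=1}^{d_0} \alpha_{0,i} s_{0,i} + \sum_{i=1}^{d_+} \alpha_{+,i} s_{+,i} + \sum_{i=1}^{d_-} \alpha_{-,i} s_{-,i},
\end{equation*}
and set $w_0 = \sum_{i=1}^{d_0}\alpha_{0,i} s_{0,i}$ (which lies in $\mathcal{N}_0(M)$ by definition) and $w_\pm$ equal to the remaining sum. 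This produces the decomposition \eqref{e:M-null-decomp}--\eqref{e:M-null-decomp-1} for any $w$ whatsoever, so the content of the lemma is entirely in the characterization \eqref{e:M-null-decomp-2}.

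Next I would compute $w'Mw$ directly. Using the orthogonality relations \eqref{e:M-schur-2}, namely $s_i' M s_j = \delta_{ij}\lambda_i$, all cross terms vanish, and the zero-eigenvalue contributions $\alpha_{0,i}^2 \lambda_{0,i}$ are identically zero. What remains is
\begin{equation*}
w'Mw = \sum_{i=1}^{d_+} \alpha_{+,i}^2 \lambda_{+,i} + \sum_{i=1}^{d_-} \alpha_{-,i}^2 \lambda_{-,i}.
\end{equation*}
The forward direction of the lemma then says: if $w$ is a pseudo eigenvector, i.e.\ $w'Mw = 0$, then the right-hand side vanishes, which is exactly \eqref{e:M-null-decomp-2}. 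The converse direction is the same computation read in the opposite direction: if the coefficients satisfy \eqref{e:M-null-decomp-2}, then $w'Mw = 0$.

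I do not anticipate a real obstacle here; the lemma is essentially a bookkeeping statement about the quadratic form $w \mapsto w'Mw$ expressed in the eigenbasis of $M$. The only small subtlety worth flagging in the write-up is that the decomposition $w = w_0 + w_\pm$ is unique (because the eigenbasis is orthonormal), but the individual scalars $\alpha_{+,i}, \alpha_{-,i}$ are only unique up to the orthogonality within each eigenspace; this does not affect the statement since \eqref{e:M-null-decomp-2} is invariant under any orthonormal choice of eigenbasis within the positive and negative eigenspaces.
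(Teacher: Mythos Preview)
Your proposal is correct and matches the paper's own proof essentially line for line: expand $w$ in the orthonormal eigenbasis, compute $w'Mw$ using $s_i'Ms_j=\delta_{ij}\lambda_i$, and read off \eqref{e:M-null-decomp-2} in both directions. The extra remark you add about non-uniqueness of the $\alpha$'s within eigenspaces is a nice touch but not needed for the argument.
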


\begin{proof}
Any vector $w$ can be expressed as a linear combination of the basis vectors $s_i$ as in (\ref{e:M-null-decomp})--(\ref{e:M-null-decomp-1}). The relation (\ref{e:M-null-decomp-2}) follows since
$$
0 = w' M w = \sum_{i=1}^{d_0} \alpha_{0,i}^2 s_{0,i}'Ms_{0,i}  + \sum_{i=1}^{d_+} \alpha_{+,i}^2 s_{+,i}'Ms_{+,i} + \sum_{i=1}^{d_-} \alpha_{-,i}^2 s_{-,i}'Ms_{-,i}
= \sum_{i=1}^{d_+} \alpha_{+,i}^2 \lambda_{+,i} + \sum_{i=1}^{d_-} \alpha_{-,i}^2 \lambda_{-,i}.
$$
The converse statement follows similarly.
\end{proof}

The next lemma clarifies when two pseudo eigenvectors belong to a pseudo null space. See also the discussion following Definition \ref{d:descr-psudo-nullity}.

\begin{lemma}\label{l:M-null-2}
Let $M$ be a symmetric matrix and $w_k$, $k=1,2$, be its two pseudo eigenvectors expressed through (\ref{e:M-null-decomp-1})--(\ref{e:M-null-decomp-2}) with the coefficients $\alpha_{k,+,i}$ and $\alpha_{k,-,i}$, $k=1,2$. Then, $w_k$, $k=1,2$, belong to the same pseudo null space if and only if
\begin{equation}\label{e:M-null-decomp-3}
	\sum_{i=1}^{d_+} \alpha_{1,+,i}\alpha_{2,+,i} \lambda_{+,i}  + \sum_{i=1}^{d_-} \alpha_{1,-,i} \alpha_{2,-,i} \lambda_{-,i} = 0.
\end{equation}
\end{lemma}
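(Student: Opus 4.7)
The plan is to reduce the lemma to a single bilinear-form computation, and then evaluate that bilinear form using the orthonormal decomposition (\ref{e:M-schur-2}).

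First I would observe that two pseudo eigenvectors $w_1,w_2$ lie in a common pseudo null space if and only if every vector in the subspace they span is itself a pseudo eigenvector, i.e.\ $(aw_1+bw_2)'M(aw_1+bw_2)=0$ for all $a,b\in\RR$. Expanding this quadratic form and using $w_k'Mw_k=0$ (which holds by assumption), we see it collapses to the single linear condition
\begin{equation*}
w_1' M w_2 = 0.
\end{equation*}
So the whole lemma reduces to showing that $w_1'Mw_2$ equals the left-hand side of (\ref{e:M-null-decomp-3}).

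Next I would compute $w_1'Mw_2$ using the representation (\ref{e:M-null-decomp})--(\ref{e:M-null-decomp-1}). Since $Ms_{0,i}=0$, the null components $w_{k,0}\in {\cal N}_0(M)$ drop out of $Mw_2$, leaving
\begin{equation*}
M w_2 = \sum_{i=1}^{d_+}\alpha_{2,+,i}\lambda_{+,i}\, s_{+,i} + \sum_{i=1}^{d_-}\alpha_{2,-,i}\lambda_{-,i}\, s_{-,i}.
\end{equation*}
Taking the inner product with $w_1$ and applying the orthonormality relations $s_i's_j=\delta_{ij}$ from (\ref{e:M-schur-2}) kills the $w_{1,0}$ contribution and all cross-terms between the $s_{+,i}$'s and $s_{-,j}$'s, yielding exactly
\begin{equation*}
w_1' M w_2 = \sum_{i=1}^{d_+} \alpha_{1,+,i}\alpha_{2,+,i}\lambda_{+,i} + \sum_{i=1}^{d_-} \alpha_{1,-,i}\alpha_{2,-,i}\lambda_{-,i},
\end{equation*}
which is (\ref{e:M-null-decomp-3}). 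Setting this to zero then gives both directions at once.

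The one subtlety (and the main obstacle) is the ``if'' direction of the first step: given that $w_1'Mw_2=0$, one must actually exhibit a pseudo null space of the maximal dimension $d(M)$ that contains both vectors, since Definition \ref{d:descr-psudo-nullity} requires the span of a $p\times d(M)$ matrix $C_1$ with orthonormal columns and $C_1'MC_1=0$. This can be handled by adapting the constructive argument in the proof of Proposition \ref{p:rank_claim_1}: after Gram--Schmidt orthonormalization of $\{w_1,w_2\}$ (the bilinear condition is preserved), the quadratic form $v\mapsto v'Mv$ vanishes on $\mathrm{span}(w_1,w_2)$, and one extends this 2-dimensional isotropic subspace to a maximal isotropic subspace using the same pairing of positive with negative eigenvectors (together with the null eigenvectors) as in (\ref{e:M-posneg-eig})--(\ref{e:M-posneg-eig=0}). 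Once this extension is in hand, the equivalence is complete.
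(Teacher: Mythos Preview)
Your approach is essentially the same as the paper's: both reduce the statement to the condition $w_1'Mw_2=0$ and then compute this bilinear form via the eigenvector expansion, exactly as in Lemma~\ref{l:M-null}. The paper's proof is in fact a one-liner (``follows as in the proof of Lemma~\ref{l:M-null} above from requiring that $w_1'Mw_2=0$''), so your write-up is more detailed than what the paper provides.

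You are right to flag the ``if'' direction as a genuine subtlety: Definition~\ref{d:descr-psudo-nullity} requires a pseudo null space to have the \emph{maximal} dimension $d(M)$, so one must extend the isotropic plane $\mathrm{span}(w_1,w_2)$ to a $d(M)$-dimensional totally isotropic subspace. The paper does not address this point at all. Your proposed resolution is correct in substance, but the appeal to ``the same pairing of positive with negative eigenvectors as in (\ref{e:M-posneg-eig=0})--(\ref{e:M-posneg-eig})'' is slightly misleading, since that construction starts from the eigenbasis rather than from arbitrary isotropic vectors $w_1,w_2$. The clean way to state it is that any totally isotropic subspace of a real symmetric bilinear form extends to one of maximal dimension (Witt's extension theorem), and that maximal dimension is precisely $d_0+\min(d_+,d_-)=d(M)$ by Proposition~\ref{p:rank_claim_1}.
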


\begin{proof}
The result follows as in the proof of Lemma \ref{l:M-null} above from requiring that $w_1'Mw_2=0$.
\end{proof}

The next result characterizes a pseudo null space ${\cal P}(M)$.

\begin{proposition}\label{p:M-null}
Let $M$ be a symmetric matrix and ${\cal P}(M)$ be its pseudo null space. Then,
\begin{equation}\label{e:M-null}
	{\cal P}(M) = {\cal N}_0(M) \oplus {\cal N}_\pm (M),
\end{equation}
where ${\cal N}_\pm (M)$ is a linear space spanned by $\min(d_-,d_+)$ orthogonal pseudo eigenvectors expressed as (\ref{e:M-null-decomp-1})--(\ref{e:M-null-decomp-2}) and satisfying (\ref{e:M-null-decomp-3}) pairwise. Conversely, the right-hand side of (\ref{e:M-null}) defines a pseudo null space. Moreover, a pseudo null space is not unique in general.
\end{proposition}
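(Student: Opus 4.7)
The plan is to verify the two inclusions in (\ref{e:M-null}) separately and then exhibit a simple example of non-uniqueness. The key structural fact driving the argument is that the ordinary null space ${\cal N}_0(M)$ must sit inside \emph{every} pseudo null space; this will follow from a maximality argument that exploits $d(M)$ being the largest pseudo nullity.

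For the forward inclusion, take an arbitrary pseudo null space ${\cal P}(M)$, which by definition has dimension $d(M) = d_0 + \min(d_+, d_-)$. I first argue ${\cal N}_0(M) \subseteq {\cal P}(M)$: consider $\tilde V = {\cal P}(M) + {\cal N}_0(M)$, and for $w_k = u_k + v_k$ with $u_k \in {\cal P}(M)$, $v_k \in {\cal N}_0(M)$ ($k=1,2$), expand $w_1' M w_2$ into four terms. The term $u_1' M u_2$ vanishes because ${\cal P}(M)$ is pseudo null, while the remaining three vanish because $M v_k = 0$. Hence any orthonormal basis $\tilde C_1$ of $\tilde V$ satisfies $\tilde C_1' M \tilde C_1 = 0$, so $\dim \tilde V \leq d(M) = \dim {\cal P}(M)$ and thus $\tilde V = {\cal P}(M)$, proving the inclusion. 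I would then define ${\cal N}_\pm(M) := {\cal P}(M) \cap {\cal N}_0(M)^\perp$, which has dimension $d(M) - d_0 = \min(d_+, d_-)$ and yields the direct sum ${\cal P}(M) = {\cal N}_0(M) \oplus {\cal N}_\pm(M)$. To extract the structure of ${\cal N}_\pm(M)$, fix an orthonormal basis $c_1, \ldots, c_{\min(d_+,d_-)}$ of it. Each $c_i$ is a pseudo eigenvector since it lies in ${\cal P}(M)$, so Lemma \ref{l:M-null} expresses $c_i = c_{i,0} + c_{i,\pm}$; the condition $c_i \perp {\cal N}_0(M)$ forces $c_{i,0} = 0$, giving the form (\ref{e:M-null-decomp-1}) with coefficients satisfying (\ref{e:M-null-decomp-2}). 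Pairwise $c_i' M c_j = 0$ for $i \neq j$ holds because $C_1' M C_1 = 0$ for any orthonormal basis of ${\cal P}(M)$, and Lemma \ref{l:M-null-2} translates this into (\ref{e:M-null-decomp-3}).

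The converse direction amounts to reassembling these pieces. Given ${\cal N}_\pm(M)$ as described, its basis vectors lie in $\mathrm{span}\{s_{+,i}, s_{-,j}\}$ and so are orthogonal to ${\cal N}_0(M)$; the sum is therefore direct with dimension $d_0 + \min(d_+,d_-) = d(M)$. Concatenating orthonormal bases $\{s_{0,i}\}$ of ${\cal N}_0(M)$ and $\{c_j\}$ of ${\cal N}_\pm(M)$ gives an orthonormal basis $C_1$ of the sum, and a short case analysis of the entries of $C_1' M C_1$ using $M s_{0,i} = 0$, (\ref{e:M-null-decomp-2}), and (\ref{e:M-null-decomp-3}) shows $C_1' M C_1 = 0$; since the dimension already equals $d(M)$, the sum is a pseudo null space. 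For non-uniqueness I would offer $M = \mathrm{diag}(1,-1)$ with $d_0 = 0$ and $d_+ = d_- = 1$, so $d(M) = 1$: the distinct one-dimensional subspaces spanned by $(1,1)'/\sqrt{2}$ and $(1,-1)'/\sqrt{2}$ are both pseudo null spaces. The main obstacle I anticipate is the maximality argument establishing ${\cal N}_0(M) \subseteq {\cal P}(M)$; once that is in hand, the remainder is a clean application of Lemmas \ref{l:M-null} and \ref{l:M-null-2}.
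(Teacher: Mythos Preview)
Your proof is correct and follows essentially the same route as the paper, which simply says the result ``follows from Lemmas~\ref{l:M-null}--\ref{l:M-null-2} and Proposition~\ref{p:rank_claim_1}'' and points to Example~\ref{ex:Mu-analysis} for non-uniqueness. Your argument makes explicit the one step the paper leaves to the reader---the maximality argument showing ${\cal N}_0(M)\subseteq {\cal P}(M)$---and your $2\times 2$ non-uniqueness example is simpler than the paper's $3\times 3$ one, but these are elaborations rather than a different approach.
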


\begin{proof}
The first two statements follow from Lemmas \ref{l:M-null}--\ref{l:M-null-2} and Proposition  \ref{p:rank_claim_1}. The last statement is illustrated in Example \ref{ex:Mu-analysis} given below.
\end{proof}

\subsection{Implications for stationary subspace and its dimension}
\label{s:char-implic}

In view of Definitions \ref{d:model-vc-ss} and \ref{d:model-vc-ss-local} and their notation, the global stationary subspace $\mathcal{B}_1$ and its dimension $d$ are given by:
\begin{equation}\label{e:model-vc-dB1}
	d = d(u) = d(M(u)),\quad {\cal B}_1 = {\cal B}_1(u) =  {\cal P}(M(u)),\quad  \forall \; u\in (0,1).
\end{equation}

\noindent In view of Proposition \ref{p:rank_claim_1}, the first relation in (\ref{e:model-vc-dB1}) can now be reformulated as
\begin{equation}\label{e:model-vc-d-dim}
	d = d(u) = d(M(u)) = d_0(u) +  \min( d_+(u), d_-(u)),\quad  \forall \; u\in (0,1),
\end{equation}
where
\begin{equation}\label{e:model-vc-d-dim-2}
	d_0(u) = d_0( M(u)), \quad  d_+(u)= d_+( M(u) ), \quad d_-(u) = d_-( M(u) )
\end{equation}
are the numbers of zero, positive and negative eigenvalues of $M(u)$, respectively.

On the other hand, if
\begin{equation} \label{e:model-vc-d-dim-3r}
d(M(u)) = d_0(u) + \min(d_+(u) , d_-(u)) \equiv d^{*}, \;\; u \in (0,1),
\end{equation}
for some $d^{*}$, this does not necessarily mean that the dimension $d$ of the stationary subspace is $d^{*}$. The latter is because \eqref{e:model-vc-d-dim-3r} does not guarantee that
 \begin{equation} \label{e:model-vc-d-dim-4r}
 {\cal B}_1 (u) = {\cal P}(M(u) ) \equiv {\cal B}_1^{*}, \quad u \in (0,1),
 \end{equation}
for some ${\cal B}_1^{*}$, playing the role of a stationary subspace. But if \eqref{e:model-vc-d-dim-4r} holds, then \eqref{e:model-vc-d-dim-3r} does imply that $d^{*}$ is the dimension $d$ of the stationary subspace. The statistical tests developed in the subsequent sections will, in fact, be for testing \eqref{e:model-vc-d-dim-3r} and hence will lead to the dimension $d$ assuming \eqref{e:model-vc-d-dim-4r}. How testing can be done for \eqref{e:model-vc-d-dim-4r} remains an open question, though we shall also suggest new ways to estimate ${\cal B}_1$, based on developments in this section.

Finally, we illustrate the observations made above through a simple but instructive example. See also a subsequent remark.

\begin{example}\label{ex:Mu-analysis}
Consider a VC model with
\begin{equation}\label{e:ex-b2u}
	A^2(u) = \mbox{diag}(2+\sin(2\pi u),3-\sin(2\pi u),1+\sin(2\pi u)).
\end{equation}
Then, $\overline{A}^2 = \mbox{diag}(2,3,1)$ and
\begin{equation}\label{e:ex-b2u-bar}
	M(u) = A^2(u) - \overline{A}^2 = \sin(2\pi u)\cdot \mbox{diag}(1,-1,1).
\end{equation}
For fixed $u\neq 1/2$, the eigenvalues of $M(u)$ are $\sin(2\pi u)\cdot 1$ (of multiplicity 2) and $\sin(2\pi u)\cdot (-1)$. For further illustration, suppose $u\in (0,1/2)$, so that $\sin(2\pi u)>0$. Then, $\lambda_{+,1} = \lambda_{+,2} = \sin(2\pi u)\cdot 1$, $\lambda_{-,1} = \sin(2\pi u)\cdot (-1)$ and $d_+=2$, $d_-=1$, $d_0=0$, by using the notation in Section \ref{s:char-pnull} with $M = M(u)$. By Proposition \ref{p:rank_claim_1}, $d(u) = d(M(u)) = 0 + \min(1,2) =1$. The corresponding eigenvectors are $s_{+,1}= (1\ 0\ 0)'$, $s_{+,2}= (0\ 0\ 1)'$ and $s_{-,1}= (0\ 1\ 0)'$. By Proposition \ref{p:M-null}, a local stationary subspace or a pseudo null space of $M(u)$ can be expressed as
$$
{\cal B}_1(u) = {\cal P}(M(u)) = \mbox{lin}\{ \alpha_{-,1} s_{-,1} + \alpha_{+,1} s_{+,1} + \alpha_{+,2} s_{+,2} \}
$$
such that
$$
-\alpha_{-,1}^2 + \alpha_{+,1}^2 + \alpha_{+,2}^2  = 0
$$
and $\alpha_{-,1}^2 + \alpha_{+,1}^2 + \alpha_{+,2}^2=1$ for the norm to be $1$, where ``lin'' indicates a linear span. The latter two expressions yield $\alpha_{-,1} = (\alpha_{+,1}^2 + \alpha_{+,2}^2)^{1/2}$ (after choosing a positive sign for the square root) and $\alpha_{+,1}^2 + \alpha_{+,2}^2=1/2$. This further yields $\alpha:= \alpha_{+,1}\in [1/\sqrt{2},-1/\sqrt{2}]$, $\alpha_{+,2} = \pm (1/2 - \alpha^2)^{1/2}$ and $\alpha_{-,1} = 1/\sqrt{2}$. Thus, a pseudo null space can also be expressed as
\begin{equation}\label{e:ex-null}
	{\cal P}(M(u)) =  \mbox{lin}\{ (1/\sqrt{2}) s_{-,1} + \alpha s_{+,1} \pm (1/2 - \alpha^2)^{1/2} s_{+,2} \} =  \mbox{lin}\{ (\alpha,\ 1/\sqrt{2},\ \pm (1/2 - \alpha^2)^{1/2})' \},
\end{equation}
where $\alpha \in [1/\sqrt{2},-1/\sqrt{2}]$. Note that these spaces (vectors) are generally different for different $\alpha$'s. For example, for $\alpha=0$,
\begin{equation}\label{e:ex-null-1}
	{\cal P}(M(u)) =  \mbox{lin}\{ (0,\ 1/\sqrt{2},\ 1/\sqrt{2})' \} = \mbox{lin}\{ (0\ 1\ 1)' \}
\end{equation}
and for $\alpha=1/\sqrt{2}$,
\begin{equation}\label{e:ex-null-2}
	{\cal P}(M(u)) =  \mbox{lin}\{ (1/\sqrt{2},\ 1/\sqrt{2},\ 0)' \} = \mbox{lin}\{ (1\ 1\ 0)' \} .
\end{equation}
\end{example}

\begin{remark}\label{r:ex:Mu-analysis}
The fact that a pseudo null space in Example \ref{ex:Mu-analysis} is not unique should not be surprising from the following perspective. Let $X_t = (X_{1,t},X_{2,t},X_{3,t})'$ be a $3$-vector process following the VC model with (\ref{e:ex-b2u}). The pseudo eigenvectors $w_1 = (0\ 1\ 1)'$ in (\ref{e:ex-null-1}) and $w_2 = (1\ 1\ 0)'$ in (\ref{e:ex-null-2}) can be checked easily to be such that $w_1'X_t$ and $w_2'X_t$ are stationary.

It is also interesting and important to note here that the stationary dimension for this model is {\it not} $2$. For example, note that while $w_1'X_t$ and $w_2'X_t$ are indeed stationary, the $2$-vector process $(w_1'X_t, w_2'X_t)'$ is {\it not} stationary. Indeed, this is the case since e.g.\ $\EE (w_1'X_t) (w_2'X_t)=\EE(X_{2,t}+X_{3,t})(X_{1,t}+X_{2,t})=\EE X_{2,t}^2=3-\sin(2\pi t/T)$.
\end{remark}

\section{Inference of stationary subspace dimension}
\label{s:inf-dimens}

\noindent Here we discuss the statistical tests for the dimension of a stationary subspace at both the ``local'' and ``global'' levels along with their asymptotic properties.

According to (\ref{e:model-vc-d-dim})--(\ref{e:model-vc-d-dim-2}), the dimension of a stationary subspace of the VC model is the local dimension $d(u)$ or pseudo nullity $d(M(u))$ of the matrix $M(u)$ in (\ref{e:models-M}), assuming it is the same across $u$, which can further be expressed in terms of $d_0(u)$, $d_+(u)$ and $d_-(u)$. We are interested here in the statistical testing for $d_0(u)$, $d_+(u)$, $d_-(u)$ and hence also for $d(u) = d(M(u))$. We consider below two types of tests: local (that is, for fixed $u$) and global (that is, for an interval of $u$). Though it is the global test which is most relevant for (\ref{e:model-vc-d-dim}), we consider the local test since its statistic forms the basis of the global test and also because of independent interest.

For statistical inference, we obviously need an estimator of the matrix $M(u)$. It is set naturally as
\begin{equation}\label{e:model-vc-Mu-estim}
	\widehat M(u) = \widehat{A}^2(u) -  \widehat{\overline{A}}^2 :=  \frac{1}{T}\sum_{t=1}^{T}X_tX_t'K_h\Big(u - \frac{t}{T}\Big) - \frac{1}{T}\sum_{t=1}^{T}X_tX_t',
\end{equation}
where $K_h(u) = h^{-1}K(h^{-1}u)$, $K(\cdot)$ is a kernel function and $h$ denotes the bandwidth. A kernel is a symmetric function which integrates to $1$, with further regularity assumptions possibly made as well. In simulations and data application, we work with the triangle kernel $K(u) = 1-|u|$ if $|u|<1$ and $0$, otherwise.

\subsection{Local dimension test}
\label{s:inf-dimens-local}

In this section, $u$ is assumed to be fixed. A pseudo nullity $d(M(u))$ of a symmetric matrix $M(u)$ could be tested by adapting the matrix rank tests found in \citet{donald:2007rank}. For this, we need an asymptotic normality result for the estimator $\widehat M(u)$, which is stated next. The proof can be found in Appendix \ref{s:proofs-local}.

\begin{proposition}\label{p:M-loctest-vc-an}
Suppose that the assumptions for the proposition stated in Appendix \ref{s:proofs-local} hold, in particular, $T\to\infty$, $h\to 0$ so that $Th\to\infty$  and $Th^3\to 0$. Then, we have
\begin{equation}\label{e:M-loctest-vc-an}
	\frac{\sqrt{Th}}{\|K\|_2 \mu_4^{1/2}} A(u)^{-1} \Big( \widehat M(u) - M(u) \Big) A(u)'^{-1} \stackrel{d}{\to} {\cal Z}_p,
\end{equation}
where $\|K\|_2^2 = \int_\RR K(v)^2 dv$, $\mu_4 = \EE(Y_{i,t}^2-1)^2=\EE Y_{i,t}^4-1$ and ${\cal Z}_p$ is a symmetric $p\times p$ matrix having independent normal entries with variance $1$ on the diagonal and variance $1/\mu_4$ off the diagonal. Moreover, $\widehat{A}^k(u) \to_p A^k(u)$, $k=\pm 1,\pm 2$.
\end{proposition}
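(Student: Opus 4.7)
The plan is to separate the estimation error $\widehat M(u) - M(u)$ into a part coming from $\widehat A^2(u)$ and a part coming from $\widehat{\overline A}^2$, show that only the former contributes at the $\sqrt{Th}$ scale, and within that term isolate a mean-zero triangular array of independent matrices to which a multivariate Lindeberg–Feller CLT can be applied. Writing
\[
\widehat M(u) - M(u) = \bigl(\widehat A^2(u) - A^2(u)\bigr) - \bigl(\widehat{\overline A}^2 - \overline A^2\bigr),
\]
a standard $\sqrt T$-rate calculation (the second term is an average of i.i.d.-like matrices $X_tX_t'$ with finite fourth moment) gives $\widehat{\overline A}^2 - \overline A^2 = O_p(T^{-1/2})$, so multiplying by $\sqrt{Th}$ produces $O_p(\sqrt h) = o_p(1)$. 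Hence this piece is asymptotically negligible and the entire limiting distribution is determined by $\sqrt{Th}\bigl(\widehat A^2(u) - A^2(u)\bigr)$.

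Next I would split $\widehat A^2(u) - A^2(u)$ into a bias and a stochastic part. The bias is $T^{-1}\sum_t A^2(t/T)K_h(u-t/T) - A^2(u)$, which by a Riemann-sum approximation and a Taylor expansion around $u$, together with the symmetry of $K$ and smoothness of $A(\cdot)$, is $O(h^2) + O((Th)^{-1})$. The bandwidth condition $Th^3\to 0$ (and $Th\to\infty$) then makes $\sqrt{Th}$ times this bias vanish. For the stochastic part, I would substitute $X_t = A(t/T)Y_t$ and write
\[
\widehat A^2(u) - \EE\widehat A^2(u) = \frac{1}{T}\sum_{t=1}^T A(t/T)\bigl(Y_tY_t'-I_p\bigr)A(t/T)'\, K_h\!\bigl(u-t/T\bigr),
\]
an independent (across $t$) sum of mean-zero matrix summands.

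Pre- and post-multiplying by $A(u)^{-1}$ and $A(u)'^{-1}$, continuity of $A(\cdot)$ at $u$ together with the localization of $K_h$ near $u$ gives $A(u)^{-1}A(t/T) = I_p + O(|t/T - u|)$, so the transformed summand is $(Y_tY_t'-I_p)K_h(u-t/T)$ up to a term whose contribution to the variance vanishes in the limit. A direct moment calculation under the i.i.d.\ and unit-variance assumptions on the entries of $Y_t$ gives, for a generic entry,
\[
\mathrm{Var}\bigl[(Y_tY_t')_{ii}-1\bigr] = \mu_4, \qquad \mathrm{Var}\bigl[(Y_tY_t')_{ij}\bigr]=1 \ (i\neq j),
\]
with distinct entries uncorrelated across and within $t$. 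Combined with $T^{-1}\sum_t K_h(u-t/T)^2 \to h^{-1}\|K\|_2^2$, the variance of the $(i,j)$ entry of $\sqrt{Th}\,A(u)^{-1}(\widehat A^2(u)-\EE\widehat A^2(u))A(u)'^{-1}$ converges to $\|K\|_2^2\mu_4$ on the diagonal and $\|K\|_2^2$ off the diagonal. Dividing by $\|K\|_2\mu_4^{1/2}$ yields the advertised variances $1$ and $1/\mu_4$, and the symmetry of $\mathcal Z_p$ is inherited from that of $\widehat M(u)$.

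To upgrade convergence of moments to convergence in distribution, I would verify a Lindeberg condition for the triangular array of (vectorized) summands: boundedness of $K$ and local boundedness of $A(\cdot)$, together with $\EE\|Y_t\|^{4+\delta}<\infty$ (which I expect to be among the appendix assumptions), make each summand $O(1/(Th))$ after the $\sqrt{Th}$ scaling, and $(Th)^{-\delta/2}\to 0$ controls the truncated fourth moment; this is the standard Lyapunov bound. The final convergence $\widehat A^k(u)\to_p A^k(u)$ for $k=\pm 1,\pm 2$ follows from the above uniform convergence of $\widehat A^2(u)$, the continuous-mapping theorem applied to the matrix square root and to matrix inversion (valid since $A(u)$ is invertible under the model's assumptions), and Slutsky. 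The main obstacle I anticipate is the careful bookkeeping of the entrywise covariance structure under the $A(u)^{-1}\cdot A(u)'^{-1}$ transformation together with controlling the remainders from $A(t/T)-A(u)$ uniformly in $t$ near $u$; everything else reduces to textbook kernel-CLT arguments.
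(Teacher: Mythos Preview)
Your proposal is correct and follows essentially the same route as the paper: the same bias/stochastic decomposition of $\widehat A^2(u)-A^2(u)$, the same $O_p(T^{-1/2})$ handling of $\widehat{\overline A}^2-\overline A^2$, the same reduction of $A(u)^{-1}S_1A(u)'^{-1}$ to $\frac{1}{T}\sum_t (Y_tY_t'-I_p)K_h(u-t/T)$ plus negligible remainders, and the same Lyapunov argument and variance computation. One small overstatement: under the paper's Assumption (A2) only continuous differentiability of $A^2(\cdot)$ is available, so the bias term is $O(h)$ rather than the $O(h^2)$ you claim via symmetry of $K$; this is harmless since $\sqrt{Th}\cdot h=\sqrt{Th^3}\to 0$ already suffices, but your second-order Taylor step is not justified by the stated hypotheses.
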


Note that the asymptotic normality result (\ref{e:M-loctest-vc-an}) can be expressed as
\begin{equation}\label{e:M-loctest-ass}
	a_T F(u)(\widehat M(u) - M(u)) F(u)' \stackrel{d}{\to} {\cal Z}_p,
\end{equation}
where
\begin{equation}\label{e:M-loctest-ass-def}
	a_T=\frac{\sqrt{Th}}{\|K\|_2 \mu_4^{1/2}},\quad F(u)=A(u)^{-1}.
\end{equation}
Furthermore, by Proposition \ref{p:M-loctest-vc-an}, we have
\begin{equation}\label{e:M-loctest-ass-2}
	\widehat F(u)\stackrel{p}{\to} F(u)
\end{equation}
with $\widehat F(u) = \widehat{A}(u)^{-1}$. In principle, $\mu_4$ in $a_T$ needs to be estimated as well. But for simplicity and better readability, we shall assume that $\mu_4$ is known. Construction of consistent estimators $\widehat \mu_4$ is discussed in the supplementary technical appendix (\citet{vc-appendix}), and would be sufficient for the results below to hold assuming that $\mu_4$ is estimated through $\widehat \mu_4$.
Furthermore, if one is willing to assume Gaussianity of $Y_t$, note that $\mu_4=2$ can also be used.

The convergence (\ref{e:M-loctest-ass}) with (\ref{e:M-loctest-ass-2}) is the setting for the matrix rank tests considered in \cite{donald:2007rank}, with one small difference that plays little role as noted in the proof of Corollary \ref{c:M-loctest-n0} below. We explain below how the results of \cite{donald:2007rank} can be adapted to test for $d_0(u)$, $d_+(u)$, $d_-(u)$ and hence also
for $d(M(u))$. But first it is convenient to gather some of the notation to be used on a number of occasions below. In view of \eqref{e:M-loctest-ass}, note that the matrix $F(u) = A(u)^{-1}$ plays the role of standardization. In this sense, the focus should not be so much on $M(u)$ but rather on $F(u)M(u)F(u)'$. Note that $d_0(u) = d_0(M(u))) = d_0( F(u) M(u) F(u)' )$ and also $d_{\pm}(M(u)) = d_{\pm}( F(u) M(u) F(u)' )$. We let

\begin{equation}\label{e:M-eigenvalues}
\gamma_1(u) \leq \hdots \leq \gamma_p(u) \quad \textrm{and} \quad \widehat \gamma_{1}(u) \leq \hdots \leq \widehat \gamma_{p}(u)
\end{equation}
be the ordered eigenvalues of $F(u)M(u)F(u)'$ and $\widehat F(u) \widehat M(u)  \widehat F(u)'$, respectively. We have
\begin{align} \label{e:M-eigenvalues-sorted}
\gamma_1(u) \leq \hdots \leq \gamma_{d_-(u)}(u)<0=\gamma_{d_-(u)+1}(u)= \hdots
 = \gamma_{d_-(u)+d_0(u)}(u)  \notag \\
 < \gamma_{d_-(u)+d_0(u)+1}(u)
 \leq \hdots \leq \gamma_{p} (u).
\end{align}
Let also
\begin{equation} \label{e:M-eigenvalues-squared}
0 \leq \gamma_{2,1}(u) \leq \hdots \leq \gamma_{2,p}(u) \quad \textrm{and} \quad \widehat \gamma_{2,1}(u) \leq \hdots \leq \widehat \gamma_{2,p}(u)
\end{equation}
be the ordered eigenvalues of $(F(u) M(u) F(u)')^2$ and $(\widehat F(u) \widehat M(u)  \widehat F(u)')^2$, respectively. We have $(\widehat \gamma_{i}(u))^2 = \widehat \gamma_{2,j(i)}(u)$ for some $j(i)$, and a similar expression with the hats and also

\begin{equation} \label{e:M:eigenvalues-squared-ordered}
0 = \gamma_{2,1}(u) = \hdots = \gamma_{2,d_0(u)}(u)  < \gamma_{2,d_0(u)+1}(u) \leq \hdots \leq \gamma_{2,p}(u).
\end{equation}
In particular, $d_0(u) = d_0( F(u)M(u)F(u)') = d_0 ( ( F(u)M(u)F(u)')^2 ) $. By combining these observations, we have
\begin{equation}\label{e:M-dpm-est}
d_{\pm}(u) = \# \{i: \; \gamma_i(u) \gtrless 0, \; ( \gamma_i(u) )^2 = \gamma_{2,k}(u) \; \textrm{for some} \; k=d_0(u)+1,\hdots,p \}.
\end{equation}
In the proofs for Section \ref{s:inf-dimens-global}, we shall also use eigenspaces associated with the eigenvalues above but these will not be discussed here.

We first look at the inference about $d_0(u)$. Since $d_0(u)$ is equal to $p - \mbox{rk}(M(u))$, where $\mbox{rk}(M(u))$ is the rank of $M(u)$, we can test directly for $d_0(u)$ by using the MINCHI2 test of \citet{donald:2007rank}. By the discussion above, $\mbox{rk}\{ M(u) \} = \mbox{rk}\{ (F(u)M(u)F(u)')^2 \} $. It is then natural to consider for $r=0,\ldots,p$, the test statistic
\begin{equation}\label{e:M-loctest-minchi2}
	\widehat \xi_r(u) = a_T^2 \sum_{i=1}^{r} \widehat \gamma_{2,i}(u) = \frac{Th}{\|K\|_2^{2} \mu_4}  \sum_{i=1}^{r} \widehat \gamma_{2,i}(u).
\end{equation}
Its asymptotics is described in the following result.

\begin{corollary}\label{c:M-loctest-n0}
Suppose that the assumptions of Proposition \ref{p:M-loctest-vc-an} hold. Then, under $H_0:d_0(u)=r$,
\begin{equation}\label{2:M-loctest-n0-1}
	\widehat \xi_r(u) \stackrel{d}{\to} \chi^2 ( r(r+1)/2 )
\end{equation}
and under $H_1:d_0(u)<r, \;	\widehat \xi_r(u) \to_p +\infty$.
\end{corollary}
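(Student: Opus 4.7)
The plan is to adapt the MINCHI2 rank-test framework of \citet{donald:2007rank} to our symmetric-matrix setting, using as the sole stochastic input the asymptotic normality \eqref{e:M-loctest-ass} together with the consistency \eqref{e:M-loctest-ass-2} supplied by Proposition~\ref{p:M-loctest-vc-an}. First I would pin down the eigen-geometry of the population object: under $H_0$, the matrix $F(u)M(u)F(u)'$ has exactly $r$ zero eigenvalues, and its remaining $p-r$ eigenvalues are bounded away from $0$ because $M(u)$ is $T$-free. Let $V_0$ be a $p\times r$ matrix whose orthonormal columns span the associated null eigenspace, and complete it to an orthogonal $V=[V_0\ V_1]$.

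I would then run a perturbation argument. Set $\widetilde{\mathcal{Z}}_T = a_T(\widehat F(u)\widehat M(u)\widehat F(u)'-F(u)M(u)F(u)')$; combining \eqref{e:M-loctest-ass} with \eqref{e:M-loctest-ass-2} gives $\widetilde{\mathcal{Z}}_T\stackrel{d}{\to}\mathcal{Z}_p$ (the substitution of $\widehat F$ for $F$ in the outer factors creates only $o_p(1)$ error, which is the ``small difference'' from \citet{donald:2007rank} referenced just above the corollary). Conjugating by $V$ writes
\begin{equation*}
V'\widehat F(u)\widehat M(u)\widehat F(u)'V = V'F(u)M(u)F(u)'V + a_T^{-1}\,V'\widetilde{\mathcal{Z}}_TV,
\end{equation*}
whose top-left $r\times r$ block is $a_T^{-1}V_0'\widetilde{\mathcal{Z}}_TV_0$ while the bottom-right block stays bounded away from $0$. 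Weyl's inequality together with a Davis--Kahan eigenspace-stability estimate then show that the $r$ eigenvalues of $\widehat F\widehat M\widehat F'$ closest to $0$ are $O_p(a_T^{-1})$ and, once rescaled by $a_T$, converge jointly in distribution to the eigenvalues of $V_0'\mathcal{Z}_pV_0$. Since the remaining $p-r$ eigenvalues stay of order $1$, these $r$ perturbed eigenvalues are automatically the smallest in absolute value, so their squares coincide with $\widehat\gamma_{2,1}(u),\ldots,\widehat\gamma_{2,r}(u)$, and
\begin{equation*}
\widehat\xi_r(u)=a_T^2\sum_{i=1}^r\widehat\gamma_{2,i}(u)\stackrel{d}{\to}\mathrm{tr}\bigl((V_0'\mathcal{Z}_pV_0)^2\bigr)=\|V_0'\mathcal{Z}_pV_0\|_F^2.
\end{equation*}

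The last step identifies this limit as $\chi^2(r(r+1)/2)$. Writing $W=V_0'\mathcal{Z}_pV_0$, one has $\|W\|_F^2=\sum_a W_{aa}^2+2\sum_{a<b}W_{ab}^2$, a sum of $r+r(r-1)/2=r(r+1)/2$ terms which, under the covariance structure of $\mathcal{Z}_p$ (diagonal variance $1$, off-diagonal variance $1/\mu_4$, the latter absorbed into $a_T$ via the factor $\mu_4^{1/2}$), become independent $\chi^2(1)$'s by the MINCHI2 calibration of \citet{donald:2007rank}. For the divergence under $H_1:d_0(u)<r$, at least one of $\gamma_{2,1}(u),\ldots,\gamma_{2,r}(u)$ is strictly positive, so continuity of eigenvalues and consistency of $\widehat F\widehat M\widehat F'$ (which follows from \eqref{e:M-loctest-ass-2} and $\widehat M(u)\stackrel{p}{\to}M(u)$) yield $\sum_{i=1}^r\widehat\gamma_{2,i}(u)\stackrel{p}{\to}\sum_{i=1}^r\gamma_{2,i}(u)>0$, and multiplication by $a_T^2\to\infty$ forces $\widehat\xi_r(u)\to_p+\infty$.

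The main technical obstacle will be the $\chi^2$ calibration in the last paragraph: $\mathcal{Z}_p$ is not orthogonally invariant when $\mu_4\neq 2$, so one has to verify that the mixed diagonal/off-diagonal variance structure combines with the specific normalization encoded in $a_T$ to make $\|V_0'\mathcal{Z}_pV_0\|_F^2$ distribution-free in $V_0$ with the stated $\chi^2$ law. This is precisely the adaptation of \citet{donald:2007rank} for which the factor $\mu_4^{1/2}$ in $a_T$ was inserted, and it is where most of the bookkeeping will live.
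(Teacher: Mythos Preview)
Your approach is essentially the paper's: both invoke the MINCHI2 framework of \citet{donald:2007rank}, with the paper simply citing their Theorem~4.7 while you sketch the underlying perturbation argument. The $H_1$ case is handled identically.

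The one place worth commenting on is the technical obstacle you flag in your last paragraph. The paper resolves it in a single sentence rather than by bookkeeping: it asserts that the off-diagonal variance $1/\mu_4$ of $\mathcal{Z}_p$ \emph{plays no role} in the derivation of Theorem~4.7 in \citet{donald:2007rank}, because ``the argument is based on a trace and hence only on the variance on the diagonal.'' In other words, the paper does not verify distribution-freeness of $\|V_0'\mathcal{Z}_pV_0\|_F^2$ by direct computation in the present normalization, but instead appeals to the structure of the cited proof, where the limiting law is obtained through a trace identity in which only the diagonal variances of the limiting Gaussian array enter. So your instinct that this is the crux is right, but the resolution the paper offers is an appeal to the internal mechanics of the cited theorem rather than the calibration you anticipate. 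If you want a self-contained argument, you would need to reproduce that part of \citet{donald:2007rank} and check that the trace reduction indeed neutralizes the off-diagonal variance; your Davis--Kahan route gets you to the right limiting object but does not by itself supply that step.
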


\begin{proof}
The result follows from Theorem 4.7 in \citet{donald:2007rank}. Note that the variance in the off-diagonal of the matrix ${\cal Z}_p$ in (\ref{e:M-loctest-ass}) plays no role in the derivation of the theorem, since the argument is based on a trace and hence only on the variance on the diagonal.
\end{proof}

The corollary can be used to test for $d_0(u)$ in a standard way sequentially, namely, testing for $H_0:d_0(u)=r$ starting with $r=p$ and subsequently decreasing $r$ by $1$ till the null hypothesis is not rejected. Let $\widehat d_0(u)$ be the resulting estimator, which is consistent for $d_0(u)$ under a suitable choice of critical values in the sequential testing as the following result states.

\begin{corollary}\label{c:M-loctest-n0-2}
Suppose that the assumptions of Proposition \ref{p:M-loctest-vc-an} hold. Let $\widehat d_0(u)$ be the estimator of $d_0(u)$ defined above through the sequential testing procedure when using a significance level $\alpha=\alpha_T$ in testing. If $\alpha_T\to 0$ and $(-\log \alpha_T)/Th\to 0$, then $\widehat d_0(u)\to_p d_0(u)$.
\end{corollary}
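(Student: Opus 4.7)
The strategy is to decompose the error event $\{\widehat d_0(u) \neq d_0(u)\}$ into under-estimation ($\widehat d_0(u) < d_0(u)$) and over-estimation ($\widehat d_0(u) > d_0(u)$), and to show both probabilities vanish. Write $c_{r,\alpha_T}$ for the $(1-\alpha_T)$-quantile of $\chi^2(r(r+1)/2)$, so the step-$r$ test rejects iff $\widehat \xi_r(u) > c_{r,\alpha_T}$.

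For under-estimation, I would note that $\{\widehat d_0(u) < d_0(u)\}$ forces the sequential procedure to reject $H_0: d_0(u) = d_0(u)$ at the step $r = d_0(u)$, whose true limiting law is $\chi^2(d_0(u)(d_0(u)+1)/2)$ by Corollary \ref{c:M-loctest-n0}. Since the limit CDF is continuous, Polya's theorem yields uniform convergence of distribution functions, so $P(\widehat \xi_{d_0(u)}(u) > c_{d_0(u),\alpha_T}) - \alpha_T \to 0$, and hence the probability of under-estimation is $\alpha_T + o(1) \to 0$.

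For over-estimation, I would bound $P(\widehat d_0(u) > d_0(u))$ by the finite sum $\sum_{r=d_0(u)+1}^{p} P(\widehat \xi_r(u) \leq c_{r,\alpha_T})$ and show each term vanishes. Proposition \ref{p:M-loctest-vc-an} supplies $\widehat M(u) \stackrel{p}{\to} M(u)$ (from the $\sqrt{Th}$-rate in (\ref{e:M-loctest-vc-an})) and $\widehat F(u) \stackrel{p}{\to} F(u)$, giving $\widehat F(u) \widehat M(u) \widehat F(u)' \stackrel{p}{\to} F(u) M(u) F(u)'$; continuity of eigenvalues of symmetric matrices then yields $\widehat \gamma_{2,i}(u) \stackrel{p}{\to} \gamma_{2,i}(u)$. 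For $r > d_0(u)$, the ordering \eqref{e:M:eigenvalues-squared-ordered} gives $\gamma_{2,d_0(u)+1}(u) > 0$, and therefore
\begin{equation*}
\widehat \xi_r(u) \;\geq\; a_T^2\, \widehat \gamma_{2,d_0(u)+1}(u) \;=\; \frac{Th}{\|K\|_2^{2}\, \mu_4}\, \widehat \gamma_{2,d_0(u)+1}(u)
\end{equation*}
diverges at rate $Th$ in probability. Standard chi-squared tail asymptotics $P(\chi^2_k > x) \sim C_k x^{k/2-1} e^{-x/2}$ give $c_{r,\alpha_T} \sim -2 \log \alpha_T$ as $\alpha_T \to 0$, so the hypothesis $(-\log \alpha_T)/(Th) \to 0$ forces $c_{r,\alpha_T}/(Th) \to 0$, and thus $P(\widehat \xi_r(u) \leq c_{r,\alpha_T}) \to 0$.

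The main obstacle is the quantitative tracking of $c_{r,\alpha_T}$ while $\alpha_T$ varies with $T$: extracting the dominant rate $-2 \log \alpha_T$ from the chi-squared tail and matching it against the $Th$-divergence of the statistic under the alternative is what produces the sharp condition in the hypothesis. This mirrors the sequential rank-test consistency argument in \citet{donald:2007rank}; combining the under- and over-estimation bounds yields $\widehat d_0(u) \stackrel{p}{\to} d_0(u)$.
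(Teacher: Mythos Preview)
Your argument is correct and follows the standard under/over-estimation decomposition for sequential rank-test consistency. The paper itself gives no detailed proof, simply stating that the result can be proved as in Theorem~4.3 of \citet{fortuna:2008local}; your proposal spells out exactly that argument (you attribute it to \citet{donald:2007rank}, which is the same family of results), so the approaches coincide.
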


\begin{proof}
The result can be proved as e.g.\ Theorem 4.3 in \citet{fortuna:2008local}.
\end{proof}

We now turn to inference of $d_+(u)$ and $d_-(u)$. By the discussion surrounding equations \eqref{e:M-eigenvalues}--\eqref{e:M:eigenvalues-squared-ordered}, these quantities are also $d_{\pm}(F(u)M(u)F(u)')$. In view of the relation \eqref{e:M-dpm-est}, since $\widehat d_{0}(u)$ estimates $d_0(u)$ consistently and the sample eigenvalues converge to their population analogues, it is natural to set
\begin{equation}\label{e:M-loctest-n+-}
\vspace{-0.1cm}
	\widehat d_{\pm}(u) = \# \{i: \widehat \gamma_{i}(u)\gtrless 0,\quad (\widehat \gamma_{i}(u))^2 = \widehat \gamma_{2,k}(u),\quad \mbox{for some}\ k = \widehat d_0(u)+1,\ldots,p\}.
\end{equation}

\begin{corollary}\label{c:M-loctest-npm}
Under the assumptions of Corollary \ref{c:M-loctest-n0-2}, we have $\widehat d_+(u)\to_p d_+(u)$ and $\widehat d_-(u)\to_p d_-(u)$.
\end{corollary}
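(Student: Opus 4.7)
The plan is to combine the consistency of $\widehat d_0(u)$ (Corollary \ref{c:M-loctest-n0-2}) with the convergence of sample eigenvalues to their population analogues. First, I would observe that Proposition \ref{p:M-loctest-vc-an} immediately yields $\widehat M(u)\to_p M(u)$, and together with $\widehat F(u)\to_p F(u)$ (also part of Proposition \ref{p:M-loctest-vc-an}), the continuous mapping theorem gives
\begin{equation*}
	\widehat F(u)\widehat M(u)\widehat F(u)' \to_p F(u) M(u) F(u)'.
\end{equation*}
Since the (ordered) eigenvalues of a symmetric matrix are continuous functions of its entries (e.g.\ by Weyl's inequality), this yields $\widehat \gamma_i(u)\to_p \gamma_i(u)$ for $i=1,\ldots,p$ and likewise $\widehat \gamma_{2,i}(u)\to_p \gamma_{2,i}(u)$.

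Next, I would restrict attention to the high-probability event $E_T = \{\widehat d_0(u) = d_0(u)\}$, for which $\PP(E_T)\to 1$ by Corollary \ref{c:M-loctest-n0-2}. On $E_T$, the definition (\ref{e:M-loctest-n+-}) reduces to counting those indices $i$ for which $\widehat \gamma_i(u)\gtrless 0$ and $(\widehat \gamma_i(u))^2$ equals one of the $p-d_0(u)$ largest values $\widehat \gamma_{2,d_0(u)+1}(u),\ldots,\widehat \gamma_{2,p}(u)$. Using the strict ordering (\ref{e:M:eigenvalues-squared-ordered}) at the population level, there exists a gap $\delta>0$ such that $\gamma_{2,d_0(u)+1}(u)\geq \delta$; by the convergence of squared sample eigenvalues, with probability tending to $1$ the same $p-d_0(u)$ indices $i\in\{1,\ldots,d_-(u)\}\cup\{d_-(u)+d_0(u)+1,\ldots,p\}$ are selected in the sample version, namely those $i$ for which $\gamma_i(u)\neq 0$.

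Finally, for each such selected index the population eigenvalue $\gamma_i(u)$ is bounded away from $0$ by the ordering (\ref{e:M-eigenvalues-sorted}), so $\mbox{sign}(\widehat \gamma_i(u))=\mbox{sign}(\gamma_i(u))$ with probability tending to $1$. Counting positive and negative signs then gives $\widehat d_+(u) = d_+(u)$ and $\widehat d_-(u) = d_-(u)$ on an event of probability tending to $1$, which is the desired convergence in probability.

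The routine part of the argument is the eigenvalue convergence; the main bookkeeping obstacle is verifying that the mapping $i\mapsto j(i)$ linking the ordering of $\widehat \gamma_i(u)$ to the ordering of $\widehat \gamma_{2,i}(u)$ asymptotically picks out exactly the indices of the nonzero population eigenvalues. This is handled via the positive spectral gap at the population level together with the conditioning on $E_T$, so that the potentially troublesome ``near-zero'' sample eigenvalues are precisely those already excluded by $\widehat d_0(u) = d_0(u)$.
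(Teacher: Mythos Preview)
Your proposal is correct and follows essentially the same approach as the paper: combine the consistency $\widehat d_0(u)\to_p d_0(u)$ from Corollary \ref{c:M-loctest-n0-2} with the convergence of the sample eigenvalues $\widehat\gamma_i(u)\to_p\gamma_i(u)$ and $\widehat\gamma_{2,i}(u)\to_p\gamma_{2,i}(u)$ (via $(\widehat F(u)\widehat M(u)\widehat F(u)')^k\to_p(F(u)M(u)F(u)')^k$ for $k=1,2$), so that the sample counting rule (\ref{e:M-loctest-n+-}) collapses to the population rule (\ref{e:M-dpm-est}). The paper's proof is terser and does not spell out the spectral-gap and sign-consistency bookkeeping you provide, but the substance is identical.
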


\begin{proof}
The stated result follows from the following two observations. First, by Corollary \ref{c:M-loctest-n0-2}, $\widehat d_0(u)=d_0(u)$ a.s.\ for large enough $T$ (depending on $\omega$ in ``a.s.''). Second, since $(\widehat F(u)\widehat M(u) \widehat F(u)')^k\to_p (F(u)M(u)F(u)')^k$ for $k=1,2$, we have $\widehat \gamma_{2,j}(u)\to_p \gamma_{2,j}(u)$, $j=1,\ldots,p$, and $\widehat \gamma_i(u)\to_p \gamma_i(u)$, $i=1,\ldots,p$. Thus, in the limit, the right-hand side of \eqref{e:M-loctest-n+-} becomes that of \eqref{e:M-dpm-est}, and hence also its left-hand side, that is, $d_{\pm}(u)$.
\end{proof}

Finally, a natural estimator for $d(u) = d(M(u))$ is
\begin{equation}\label{e:M-loctest-d0}
	\widehat d(u) = \widehat d(M(u)) = \widehat d_0(u) + \min(\widehat d_+(u),\widehat d_-(u)).
\end{equation}
Corollaries \ref{c:M-loctest-n0-2}--\ref{c:M-loctest-npm} imply the consistency of this estimator, which is the main result of this section.

\begin{theorem}\label{t:M-loctest-}
Under the assumptions of Corollary \ref{c:M-loctest-n0-2}, we have $\widehat d(M(u))\to_p d(M(u))$.
\end{theorem}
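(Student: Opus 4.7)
The proof plan is essentially a straightforward assembly of the three consistency corollaries that immediately precede the theorem. Since $\widehat d(M(u))$ is defined through \eqref{e:M-loctest-d0} as
\[
	\widehat d(M(u)) = \widehat d_0(u) + \min(\widehat d_+(u),\widehat d_-(u)),
\]
and the analogous algebraic identity $d(M(u)) = d_0(u) + \min(d_+(u),d_-(u))$ holds by Proposition~3.1 (restated in \eqref{e:model-vc-d-dim}), it suffices to pass to probability limits in each summand.

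The first step is to invoke Corollary~4.2, which yields $\widehat d_0(u)\to_p d_0(u)$, and Corollary~4.3, which yields $\widehat d_\pm(u)\to_p d_\pm(u)$. The second step is to combine these by a continuous mapping argument. The map $(a,b,c)\mapsto a+\min(b,c)$ is continuous on $\mathbb{R}^3$, so joint convergence in probability of $(\widehat d_0(u),\widehat d_+(u),\widehat d_-(u))$ to $(d_0(u),d_+(u),d_-(u))$ transfers directly to $\widehat d(M(u))\to_p d(M(u))$. Joint convergence in probability follows from marginal convergence in probability (a standard fact, since $(X_n,Y_n)\to_p (X,Y)$ iff $X_n\to_p X$ and $Y_n\to_p Y$).

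Alternatively, and more transparently, one can exploit integer-valuedness: all three estimators take values in $\{0,1,\ldots,p\}$, as do their limits. Convergence in probability of an integer-valued sequence to an integer-valued limit is equivalent to $\PP(\widehat d_0(u)=d_0(u),\ \widehat d_+(u)=d_+(u),\ \widehat d_-(u)=d_-(u))\to 1$, and on this event one has the exact equality $\widehat d(M(u))=d(M(u))$.

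There is essentially no obstacle here beyond quoting the preceding corollaries correctly; the technical content (the asymptotic normality of $\widehat M(u)$, the MINCHI2 machinery of Donald et al., and the sequential-test consistency argument in the spirit of Fortuna) has already been absorbed into Corollaries~4.2 and~4.3. The proof therefore reduces to a one-line continuous-mapping argument and can be presented in two or three sentences.
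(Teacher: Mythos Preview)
Your proposal is correct and matches the paper's own approach: the paper simply states that the theorem follows from Corollaries~\ref{c:M-loctest-n0-2} and~\ref{c:M-loctest-npm}, without writing out any further argument. Your continuous-mapping (or integer-valuedness) justification is exactly the implicit step the paper leaves to the reader.
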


\subsection{Global dimension test}
\label{s:inf-dimens-global}

In the previous section, we considered testing about $d_{0}(u)$, $d_{+}(u)$, $d_{-}(u)$ and $d(M(u))$ for fixed $u$. The relation (\ref{e:model-vc-d-dim}) of interest, however, concerns these quantities for all $u\in(0,1)$. We would thus like to be able to make inference ``globally,'' that is, for an interval of $u$. Since our approach to $d(M(u))$ goes through $d_{0}(u)$, $d_{+}(u)$ and $d_{-}(u)$, we shall make inference about these quantities first. To deal with the possibility that these quantities might differ across $u\in(0,1)$, we shall work under the assumption that
\begin{equation}\label{e:global-equal}
	d_{0}(u) \equiv d_0,\quad d_{+}(u) \equiv d_+,\quad d_{-}(u) \equiv d_-,\quad \forall \;  u \in {\cal H}\subset (0,1),
\end{equation}
and develop a global dimension test about $d_0$, $d_+$ and $d_-$ in (\ref{e:global-equal}) for fixed ${\cal H}$. In practice, we shall apply the developed test over refined dyadic partitions, first for $(0,1)$, then for $(0,1/2)$ and $(1/2,1)$, then for $(0,1/4),(1/4,1/2),(1/2,3/4)$ and $(3/4,1)$ etc. What to expect under this splitting of the interval $(0,1)$ is discussed in Section \ref{s:inf-splitting}, and will be illustrated in Sections \ref{s:simulation} and \ref{s:application}.

We first focus on inference about $d_0$ in (\ref{e:global-equal}). Our global test will be based on the quantity $\int_{\cal H} \widehat \xi_r(u) du$, where $\widehat \xi_r(u)$ is the statistic (\ref{e:M-loctest-minchi2}) used in the local dimension test. Its asymptotics are described in the next result, which also defines the global statistic $\widehat \xi_r$. Recall the notation $\mu_4 = \EE (Y_{i,t}-1)^4$ used in Proposition \ref{p:M-loctest-vc-an}; see also the discussion following \eqref{e:M-loctest-ass-2}. We also let $|{\cal H}|$ be the length of the interval ${\cal H}$ and $\overline K(u) = \int_\RR K(v) K(u-v) dv$ be the so-called convolution kernel.

\begin{proposition}\label{p:global-an}
Suppose that the assumptions for the proposition stated in Appendix \ref{s:proofs-global} hold. Then, under $H_0: d_{0}(u)\equiv r$ for all $u\in {\cal H}$,
\begin{equation}\label{e:global-an-1}
	\widehat \xi_r := \widehat \xi_{r, {\cal H}} := \frac{\int_{\cal H} \widehat \xi_r(u) du - |{\cal H}| \frac{r( \mu_4+r-1)}{ \mu_4}}{\sqrt{h \frac{\|\overline K\|_2^2}{\|K\|_2^4} \frac{2(r \mu_4^2+2r(r-1))}{ \mu_4^2} |{\cal H}|}} \stackrel{d}{\to} {\cal N}(0,1)
\end{equation}
and under $H_1: d_{0}(u) < r$ for some $u\in {\cal H}, \; \widehat \xi_r \to_p \infty$.
\end{proposition}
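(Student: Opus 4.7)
Under $H_0$, let $P_0(u)$ denote the orthogonal projection onto the null eigenspace of $M(u)$, which is $r$-dimensional for every $u \in \mathcal{H}$. The plan is to first reduce the integrated test statistic to an integrated quadratic form in a standardized kernel-smoothed matrix process, then compute its mean and variance, and finally invoke a blocking CLT for integrated kernel functionals. Let $\widehat N(u) = \widehat F(u) \widehat M(u) \widehat F(u)'$. Because the $p$ eigenvalues $\gamma_1(u),\ldots,\gamma_p(u)$ of $F(u)M(u)F(u)'$ have exactly $r$ zeros with a uniform spectral gap $\delta:=\min_{u\in\overline{\mathcal H}}|\gamma_{r+1}(u)|>0$ (by smoothness of $M(\cdot)$ and compactness of $\overline{\mathcal H}$), Weyl's inequality together with a uniform-in-$u$ version of Proposition \ref{p:M-loctest-vc-an} (via a maximal inequality for the kernel-smoothed sum) yields
\begin{equation*}
a_T^2 \sum_{i=1}^r \widehat\gamma_{2,i}(u) = \operatorname{tr}\bigl(Z_T(u)^2\bigr) + o_p(1),\quad \text{uniformly in }u\in\mathcal{H},
\end{equation*}
where $Z_T(u):= a_T\, P_0(u) F(u) (\widehat M(u)-M(u)) F(u)' P_0(u)$ is an $r\times r$ symmetric random matrix.

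\textbf{Moment calculations.} Under the bandwidth conditions $Th\to\infty$, $Th^3\to 0$, the bias of $\widehat M(u)$ contributes an $o_p(1)$ remainder, so entrywise $Z_T(u)$ is a kernel-smoothed i.i.d.\ sum whose diagonal and off-diagonal limiting variances match those of $\mathcal{Z}_p$ in Proposition \ref{p:M-loctest-vc-an}, namely $1$ and $1/\mu_4$. Hence $\mathbb{E}\operatorname{tr}(Z_T(u)^2) \to r + r(r-1)/\mu_4 = r(\mu_4+r-1)/\mu_4$. For the covariance, a direct computation using $K_hK_h$ and the change of variable $v=(u_1-u_2)/h$ gives
\begin{equation*}
\operatorname{Cov}\!\bigl((Z_T(u_1))_{ij},(Z_T(u_2))_{ij}\bigr) = \sigma_{ij}^2\,\frac{\overline K((u_1-u_2)/h)}{\|K\|_2^2} + o(1),
\end{equation*}
with entries of distinct index pairs asymptotically uncorrelated. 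Combining this with $\operatorname{tr}(Z^2)=\sum_i Z_{ii}^2 + 2\sum_{i<j} Z_{ij}^2$ and the Gaussian identity $\operatorname{Cov}(X_1^2,X_2^2)=2\operatorname{Cov}(X_1,X_2)^2$ (applied in the limit, justified by bounded sixth moments of the $Y_t$'s), I obtain
\begin{equation*}
\operatorname{Cov}\!\bigl(\operatorname{tr} Z_T(u_1)^2,\operatorname{tr} Z_T(u_2)^2\bigr) \to \frac{\overline K((u_1-u_2)/h)^2}{\|K\|_2^4}\cdot\frac{2(r\mu_4^2+2r(r-1))}{\mu_4^2}.
\end{equation*}
Integrating over $u_1,u_2\in\mathcal H$ and rescaling by $v=(u_1-u_2)/h$ produces the asymptotic mean $|\mathcal H|\,r(\mu_4+r-1)/\mu_4$ and variance $h|\mathcal H|\,\frac{\|\overline K\|_2^2}{\|K\|_2^4}\cdot\frac{2(r\mu_4^2+2r(r-1))}{\mu_4^2}$, exactly matching the normalization in \eqref{e:global-an-1}.

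\textbf{Asymptotic normality and the alternative.} To promote these moment computations to a CLT, I would partition $\mathcal H$ into blocks of length $b_T$ with $h\ll b_T\ll 1$. Each block's contribution to $\int_{\mathcal H}\widehat\xi_r(u)du$ depends only on the $Y_t$'s with $t/T$ within distance $h$ of the block, so non-adjacent blocks are independent; boundary contributions between adjacent blocks are $O(h/b_T)$ in relative variance and negligible. Lyapunov's CLT applied to the centered block sums (with higher moments bounded by Rosenthal's inequality) then delivers \eqref{e:global-an-1}. For $H_1$, pick $u^*\in\mathcal H$ with $d_0(u^*)<r$; by smoothness, $\gamma_{2,r}(u)\geq c>0$ on a neighborhood $\mathcal N\subset\mathcal H$, and uniform consistency $\widehat\gamma_{2,r}(u)\to_p\gamma_{2,r}(u)$ on $\mathcal N$ forces $\int_{\mathcal H}\widehat\xi_r(u)du \gtrsim a_T^2 c|\mathcal N| \asymp Th$, which dominates the $O(\sqrt{h})$ denominator, hence $\widehat\xi_r\to_p+\infty$.

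\textbf{Main obstacle.} The delicate step is the uniform-in-$u$ perturbation reduction: establishing a maximal-inequality refinement of Proposition \ref{p:M-loctest-vc-an} strong enough to replace $\widehat F(u)$ by $F(u)$ and $\widehat N(u)$-eigenprojectors by $P_0(u)$ simultaneously for all $u\in\mathcal H$, with a rate $o_p(a_T^{-1})$ that survives the eigenvalue sum. Secondarily, the Isserlis reduction in the covariance calculation is only formally Gaussian and must be validated by an explicit fourth-cumulant bound showing that the non-Gaussian correction from $\mathbb{E}Y_{i,t}^6-\text{Gaussian}$ vanishes in the limit; this requires a moment assumption on $Y_t$ of order at least six (and likely eight), which I would expect to be part of the assumptions deferred to Appendix \ref{s:proofs-global}.
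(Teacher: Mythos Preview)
Your overall strategy---eigenprojector reduction followed by a blocking CLT---is a legitimate alternative to the paper's route, but it is \emph{not} the paper's approach and it contains a real gap in the moment step.

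The paper does not reduce via eigenprojector perturbation. Instead it follows \citet{donald:2011local}: a determinant/Schur-complement identity (Lemma~\ref{l:proof-global-gamma-replace}) shows that $a_T^2\widehat\gamma_{2,i}(u)$ coincide, up to a uniformly negligible error, with the eigenvalues of $a_T^2 D(u)'\widehat M(u) D(u)D(u)'\widehat M(u) D(u)$, where the $p\times r$ matrix $D(u)$ spans $\ker M(u)$ but is normalized by $D'F^{-2}D=D'A^2D=I_r$ (not $D'D=I_r$). This normalization is essential: it makes $D(t/T)'(X_tX_t'-A^2(t/T))D(t/T)=\widetilde Y_t\widetilde Y_t'-I_r$ with $\widetilde Y_t=D(t/T)'X_t$ having identity covariance, and the trace then becomes an explicit quadratic in $\widetilde Y_t$. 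The paper further decomposes $\widehat M-M=S_1+S_2-S_3-S_4$, shows all cross terms integrate to $o_p(h^{1/2})$, and for the main term writes $\int\widehat L_{r,11}\,du$ as a diagonal sum $I_1$ (which supplies the centering) plus a degenerate second-order $U$-statistic $I_2$ to which de~Jong's CLT is applied. Analyticity of $A^2$ (Assumption~(A2g)) is used to obtain smooth $D(u)$; your use of eigen\emph{projectors} rather than eigenvectors could in principle sidestep this, which is a genuine advantage of your route.

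The gap in your argument is the moment computation. Your $P_0(u)$ should project onto $\ker(F(u)M(u)F(u)')$, not $\ker M(u)$: with the latter, $P_0FMF'P_0\neq 0$ in general, so the reduction $a_T^2\sum_i\widehat\gamma_{2,i}=\operatorname{tr}(Z_T^2)+o_p(1)$ fails. More seriously, even with the correct projection, your claim that the entries of $Z_T(u)$ inherit the variances $1$ (diagonal) and $1/\mu_4$ (off-diagonal) of $\mathcal Z_p$ is false: writing $P_0=QQ'$ with $Q'Q=I_r$, the compressed matrix $Q'\mathcal Z_p Q$ has diagonal variance $\tfrac{2}{\mu_4}+(1-\tfrac{2}{\mu_4})\sum_kQ_{ki}^4$, which equals $1$ only if $\mu_4=2$ or $Q$ is a coordinate embedding. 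The law of $\mathcal Z_p$ is \emph{not} orthogonally invariant outside the Gaussian case, so your Isserlis-based shortcut does not recover the centering $r(\mu_4+r-1)/\mu_4$ or the variance constant as stated. The paper's basis $D$ with $D'A^2D=I_r$ is chosen precisely so that the leading stochastic term reduces to $(Y_tY_t'-I_p)$ in the original coordinates before any rotation is applied, and the constants are then read off from the raw fourth moments of $Y_t$ rather than from $\mathcal Z_p$ after compression. Your blocking-CLT idea for $\widetilde I_2$ is plausible but would need to be carried out on the degenerate $U$-statistic structure; the paper's use of de~Jong's theorem handles this directly by checking the conditions $G_{T,1},G_{T,2},G_{T,4}=o(1)$ via kernel-integral estimates (Lemma~\ref{l:proofs-global-2-kernel-result}).
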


The proof of the proposition can be found in Appendix \ref{s:proofs-global}, and follows the approach taken in \citet{donald:2011local}.

As for the local tests (see the discussion around Corollary \ref{c:M-loctest-n0-2}), Proposition \ref{p:global-an} can be used to test for $d_0$ sequentially, namely, testing for $H_0:d_0=r$ starting with $r=p$ and subsequently decreasing $r$ by $1$ till the null hypothesis is not rejected. Let $\widehat d_{0}$ be the resulting estimator.

\begin{corollary}\label{c:global-d0-consistency}
Suppose that the assumptions of Proposition \ref{p:global-an} hold. Let $\widehat d_{0}$ be the estimator of $d_{0}$ defined above through the sequential testing procedure when using a significance level $\alpha=\alpha_T$ in testing. If $\alpha_T\to 0$ and $(-\log \alpha_T)/Th\to 0$, then $\widehat d_{0}\to_p d_{0}$.
\end{corollary}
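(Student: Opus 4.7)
The plan is to mirror the template used for Corollary \ref{c:M-loctest-n0-2} (compare \citet{fortuna:2008local}), replacing the chi-squared null limit of the local statistic with the standard-normal null limit of $\widehat \xi_r$ provided by Proposition \ref{p:global-an}. The sequential procedure starts at $r=p$ and rejects $H_0: d_0 = r$ at level $\alpha_T$ whenever $\widehat \xi_r > z_{1-\alpha_T}$, where $z_{1-\alpha_T}\sim \sqrt{-2\log \alpha_T}\to\infty$ since $\alpha_T\to 0$. To obtain $\widehat d_0 \to_p d_0$ it suffices to establish (a) $P(\widehat \xi_{d_0} > z_{1-\alpha_T})\to 0$, so that the search does not stop prematurely at some $r>d_0$, and (b) $P(\widehat \xi_r > z_{1-\alpha_T})\to 1$ for every fixed $r>d_0$, so that oversized candidates are rejected. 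Part (a) is immediate from Proposition \ref{p:global-an}: under $H_0$ with $r=d_0$, $\widehat \xi_{d_0}\to_d \mathcal N(0,1)$, while $z_{1-\alpha_T}\to\infty$, so comparing to any fixed large threshold gives $P(\widehat \xi_{d_0} > z_{1-\alpha_T})\to 0$.

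The main content, and where the bandwidth-level condition $(-\log \alpha_T)/Th\to 0$ enters, is part (b). Fixing $r>d_0$ and recalling \eqref{e:M-loctest-minchi2}, one has $\widehat \xi_r(u) = a_T^2 \sum_{i=1}^{r} \widehat \gamma_{2,i}(u)$ with $a_T^2 = Th/(\|K\|_2^2 \mu_4)$. Proposition \ref{p:M-loctest-vc-an} yields $\widehat F(u)\widehat M(u)\widehat F(u)'\to_p F(u)M(u)F(u)'$, and continuity of the eigenvalue map transfers this to $\widehat \gamma_{2,i}(u)\to_p \gamma_{2,i}(u)$ pointwise. Because $d_0(u)\equiv d_0<r$ on $\mathcal H$ by \eqref{e:global-equal}, the ordering in \eqref{e:M:eigenvalues-squared-ordered} forces $\gamma_{2,i}(u)>0$ for $i=d_0+1,\ldots,r$, so $\sum_{i=1}^r \gamma_{2,i}(u)\geq c_0(u)>0$ on $\mathcal H$ with $c_0$ continuous by the smoothness assumed of $A(\cdot)$. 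Integrating then gives $\int_{\mathcal H}\widehat \xi_r(u)\,du \geq c\, Th$ on events of probability tending to $1$, for some $c>0$. Plugged into the normalization \eqref{e:global-an-1}, the $O(1)$ centering term is negligible, the numerator is of order $Th$ and the denominator of order $\sqrt{h}$, so $\widehat \xi_r \gtrsim T\sqrt{h}$ in probability. Since $z_{1-\alpha_T}=O(\sqrt{-\log \alpha_T})$ and $(-\log \alpha_T)/(T^2 h) = T^{-1}\cdot(-\log \alpha_T)/(Th)\to 0$ under the stated assumption, it follows that $T\sqrt{h}/z_{1-\alpha_T}\to\infty$, delivering $P(\widehat \xi_r > z_{1-\alpha_T})\to 1$.

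The hard part will be making the lower bound in part (b) rigorous uniformly in $u$, rather than merely pointwise. The pointwise-in-$u$ consistency of the estimated eigenvalues must be upgraded to a statement about the integrated statistic $\int_{\mathcal H}\widehat \xi_r(u)\,du$. My preferred route is to expand $\widehat \xi_r(u)$ around its population counterpart $a_T^2 \sum_{i=1}^r \gamma_{2,i}(u)$ via a Weyl-type perturbation bound for the eigenvalues of $(\widehat F(u)\widehat M(u)\widehat F(u)')^2$, and then control the remainder in $L^1(\mathcal H)$ using the second-moment bounds on $\widehat M(u)-M(u)$ already developed inside the proof of Proposition \ref{p:global-an}. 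With this uniform lower bound in hand, the rest of the argument is the standard sequential-testing template and should go through verbatim.
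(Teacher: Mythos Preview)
Your approach is correct and matches the paper's, which simply cites Theorem~4.3 of \citet{fortuna:2008local}; you are spelling out that template with the normal null limit in place of the chi-squared one.

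Two small points. First, the verbal descriptions of (a) and (b) are swapped: condition (a), $P(\widehat\xi_{d_0}>z_{1-\alpha_T})\to 0$, controls $P(\widehat d_0<d_0)$ (it prevents the search from continuing past $r=d_0$), while condition (b) controls $P(\widehat d_0>d_0)$ (it prevents the search from stopping prematurely at some $r>d_0$). The mathematics is unaffected.

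Second, the ``hard part'' you flag is already done for you. The uniform-in-$u$ control needed for part (b) is exactly what the paper establishes at the end of the proof of Proposition~\ref{p:global-an} for the alternative: by Proposition~\ref{p:proofs-global-2-AM} one has $\sup_{u\in\mathcal H}\bigl|a_T^{-2}\widehat\xi_r(u)-\sum_{i=1}^r\gamma_{2,i}(u)\bigr|\to_p 0$, and since $\int_{\mathcal H}\sum_{i=1}^r\gamma_{2,i}(u)\,du>0$ when $r>d_0$, the divergence rate $\widehat\xi_r\asymp Th^{1/2}$ follows directly. There is no need to redo Weyl-type perturbation bounds or $L^1(\mathcal H)$ remainder estimates; simply invoke the alternative part of Proposition~\ref{p:global-an} together with its stated rate. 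With that, your comparison $Th^{1/2}/z_{1-\alpha_T}\to\infty$ under $(-\log\alpha_T)/(Th)\to 0$ completes the argument.
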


\begin{proof}
The result can be proved as in e.g.\ Theorem 4.3 in \citet{fortuna:2008local}, by noting that the two conditions provided in the statement of the theorem are in fact equivalent.
\end{proof}

We now turn to inference about $d_+$ and $d_-$ in (\ref{e:global-equal}). Recall the definition of the eigenvalues $\widehat \gamma_{i}(u)$ in \eqref{e:M-eigenvalues}, whose squares are the eigenvalues $\widehat \gamma_{2,j}(u)$ in \eqref{e:M-eigenvalues-squared} entering the test statistics $\widehat \xi_r(u)$ and $\widehat \xi_r$. From the discussions surrounding \eqref{e:M-eigenvalues}--\eqref{e:M:eigenvalues-squared-ordered}, the $\widehat d_0$  consecutive eigenvalues $\widehat \gamma_{i}(u)$ can be thought to be associated with the $d_0$ zero eigenvalues of $F(u) M_u F(u)'$. If we can estimate the starting index for these consecutive eigenvalues, we could then deduce the numbers $d_\pm$ of positive and negative eigenvalues of $F(u) M(u) F(u)'$.
In the case $\widehat d_0 \geq 1$, the above suggests to consider

\begin{equation}\label{e:global-stat-d+}
\widehat \zeta_r = \Big| \int_{\cal H} (\widehat \gamma_{r}(u) + \ldots + \widehat \gamma_{r+\widehat d_0-1}(u)) du \Big|,\quad r=1,\ldots,p-\widehat d_0+1,
\end{equation}
that is, the quantities involving the sums of the $\widehat{d}_0$ consecutive eigenvalues $\widehat{\gamma}_i(u)$, and to estimate this starting index through

\begin{equation}\label{e:global-stat-d+-r}
	\widehat r = \mathop{\rm argmin}_{r=1,\ldots,p-\widehat d_0+1} \widehat \zeta_r.
\end{equation}

\noindent (Whenever $r$ does not match the starting index, a larger value of $\widehat \zeta_r$ is expected, since it will be driven by $\widehat \gamma_{i}(u) $ associated with the positive or negative eigenvalues of $F(u) M(u) F(u)'$.) With the estimated index $\widehat{r}$, it is then natural to set further
\begin{equation}\label{e:global-d+-}
	\widehat d_- = \widehat r - 1,\quad \widehat d_+ = p - \widehat d_0 - \widehat d_-.
\end{equation}

If $\widehat d_0=0$, the preceding argument does not apply and, in fact, the quantity (\ref{e:global-stat-d+}) is not even defined. In this case, we suggest to consider
\begin{equation}\label{e:global-stat-d+-2}
	\widehat \eta_r = \Big|\int_{\cal H} (\widehat \gamma_{1}(u) + \ldots + \widehat \gamma_{r}(u)) du \Big| + \Big|\int_{\cal H} (\widehat \gamma_{r+1}(u) + \ldots + \widehat \gamma_{p}(u)) du \Big|,\quad r=0,\ldots,p,
\end{equation}
and to set
\begin{equation}\label{e:global-d+-2}
	\widehat d_- =  \mathop{\rm argmax}_{r=0,\ldots,p}\, \widehat \eta_r,\quad \widehat d_+ = p - \widehat d_-.
\end{equation}
The idea behind this definition and further motivation for using \eqref{e:global-stat-d+}--\eqref{e:global-d+-} can be found in the proof of the following corollary.

\begin{corollary}\label{c:global-d+--consistency}
Under the assumptions of Corollary \ref{c:global-d0-consistency}, we have $\widehat d_{+}\to_p d_{+}$ and $\widehat d_{-}\to_p d_{-}$.
\end{corollary}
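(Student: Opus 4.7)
The plan is to build on the consistency $\widehat d_0 \to_p d_0$ from Corollary \ref{c:global-d0-consistency}: on the probability-tending-to-$1$ event $\{\widehat d_0 = d_0\}$, the selection rules \eqref{e:global-stat-d+}--\eqref{e:global-d+-2} become deterministic functions of the sample eigenvalues $\widehat\gamma_i(u)$ together with the now ``known'' window width $d_0$. It therefore suffices to show that the sample argmin (resp.\ argmax) converges in probability to the population argmin (resp.\ argmax), and that the latter is uniquely attained at the correct index.

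The preliminary analytic input is that, for each $i=1,\ldots,p$,
\[
\int_{\mathcal{H}} \widehat\gamma_i(u)\,du \;\to_p\; \int_{\mathcal{H}} \gamma_i(u)\,du.
\]
This follows from $\widehat A^k(u)\to_p A^k(u)$ in Proposition \ref{p:M-loctest-vc-an}, promoted to a uniform-in-$u\in\mathcal{H}$ statement via a standard kernel equicontinuity argument under the bandwidth conditions, combined with continuity of eigenvalue functionals on symmetric matrices and dominated convergence to pass to the integral. Consequently $\widehat\zeta_r\to_p\zeta_r$ and $\widehat\eta_r\to_p\eta_r$, where $\zeta_r,\eta_r$ are the population analogues of \eqref{e:global-stat-d+} and \eqref{e:global-stat-d+-2} obtained by replacing $\widehat\gamma_i$ with $\gamma_i$.

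It then remains to identify the population optimizers. In the subcase $d_0\ge 1$, assumption \eqref{e:global-equal} together with the ordering \eqref{e:M-eigenvalues-sorted} imply that $\zeta_r=0$ precisely when the window $\{r,\ldots,r+d_0-1\}$ coincides with the zero-eigenvalue indices $\{d_-+1,\ldots,d_-+d_0\}$, i.e.\ at $r=d_-+1$ and nowhere else: for $r\le d_-$ the window contains $\gamma_r(u)<0$ alongside only zeros, producing a strictly negative integrated sum, and symmetrically for $r\ge d_-+2$ a strictly positive one (the nonzero eigenvalues being bounded away from $0$ uniformly on $\mathcal{H}$ under the regularity assumed for $A(\cdot)$). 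The argmin is thus uniquely $d_-+1$, and because the index set is finite the pointwise convergence $\widehat\zeta_r\to_p\zeta_r$ forces $\widehat r\to_p d_-+1$; then $\widehat d_-=\widehat r-1\to_p d_-$ and $\widehat d_+=p-\widehat d_0-\widehat d_-\to_p d_+$. In the subcase $d_0=0$ with $d_-,d_+\ge 1$, a short check shows that $\eta_r$ is uniquely maximized at $r=d_-$: that value keeps both integrands in \eqref{e:global-stat-d+-2} of pure sign, yielding $\eta_{d_-}=\int_{\mathcal{H}}\sum_i|\gamma_i(u)|du$, while any other $r$ forces strict sign cancellation in one of the two integrals, hence a strictly smaller $\eta_r$. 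The boundary cases $d_-\in\{0,p\}$ (when $d_0=0$) render $\eta_r$ constant in $r$ and must be resolved by a tie-breaking convention in the argmax (e.g.\ returning the smallest maximizing index), which then correctly recovers $\widehat d_-\to_p d_-$.

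I expect the main obstacle to be the uniform-in-$u$ convergence needed to secure $\int_{\mathcal{H}}\widehat\gamma_i(u)du\to_p\int_{\mathcal{H}}\gamma_i(u)du$; beyond that the argument reduces to an elementary finite-index argmin/argmax continuity step and sign-counting over a known eigenvalue ordering, with the $d_0=0$ degenerate cases being a minor tie-breaking subtlety.
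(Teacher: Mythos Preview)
Your proposal is correct and follows essentially the same approach as the paper: reduce to the event $\{\widehat d_0=d_0\}$ via Corollary~\ref{c:global-d0-consistency}, use convergence of the sample eigenvalues to their population counterparts to obtain $\widehat\zeta_r\to_p\zeta_r$ and $\widehat\eta_r\to_p\eta_r$, and conclude by identifying the population argmin/argmax. If anything, you supply more detail than the paper does---the paper simply asserts that ``the population quantities satisfy \eqref{e:global-stat-d+-r}, \eqref{e:global-d+-} and \eqref{e:global-d+-2} with all the hats removed,'' whereas you explicitly verify uniqueness of the population optimizer and flag the $d_0=0$ degenerate boundary cases.
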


\begin{proof}
The result follows from the following two observations. First, by Corollary \ref{c:global-d0-consistency}, we may assume that for large enough $T$, $\widehat{d}_0 = d_0$. Second, the eigenvalues entering \eqref{e:global-stat-d+} and \eqref{e:global-stat-d+-2} converge to their population counterparts, so that the relations \eqref{e:global-stat-d+} and \eqref{e:global-stat-d+-2} become in the limit, respectively, the relations $\zeta_r = | \int_{\cal H} ( \gamma_{r}(u) + \ldots +  \gamma_{r+ d_0-1}(u)) du |$ and $\eta_r = |\int_{\cal H} ( \gamma_{1}(u) + \ldots + \gamma_{r}(u)) du | + |\int_{\cal H} ( \gamma_{r+1}(u) + \ldots +  \gamma_{p}(u)) du |$.  The conclusion follows by observing that the population quantities satisfy \eqref{e:global-stat-d+-r}, \eqref{e:global-d+-} and \eqref{e:global-d+-2} with all the hats removed.
\end{proof}

Finally, a natural estimator for $d := d_0 + \min(d_+,d_-)$ is
\begin{equation}\label{e:global-d}
	\widehat d = \widehat d_{0} + \min(\widehat d_{+},\widehat d_{-}).
\end{equation}
Corollaries \ref{c:global-d0-consistency}--\ref{c:global-d+--consistency} imply the consistency of this estimator, which is the main result of this section.

\begin{theorem}\label{t:global-d}
Under the assumptions of Corollary \ref{c:global-d0-consistency}, we have $\widehat d\to_p d$.
\end{theorem}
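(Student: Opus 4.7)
The plan is to observe that this theorem is essentially a combination/corollary of the two preceding corollaries (Corollaries \ref{c:global-d0-consistency} and \ref{c:global-d+--consistency}) together with an elementary continuous mapping argument adapted to integer-valued random variables. Since all three quantities $\widehat d_0$, $\widehat d_+$, $\widehat d_-$ take values in the finite set $\{0,1,\ldots,p\}$, consistency in probability is equivalent to $\PP(\widehat d_0 = d_0, \widehat d_+ = d_+, \widehat d_- = d_-)\to 1$ as $T\to\infty$.

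First, I would invoke Corollary \ref{c:global-d0-consistency} to get $\widehat d_0 \to_p d_0$, and Corollary \ref{c:global-d+--consistency} to get $\widehat d_+ \to_p d_+$ and $\widehat d_- \to_p d_-$. Because each estimator is integer-valued and each target value is a fixed integer, for every $\varepsilon>0$ and for $T$ sufficiently large we have $\PP(\widehat d_0 \neq d_0)<\varepsilon/3$, $\PP(\widehat d_+ \neq d_+)<\varepsilon/3$, and $\PP(\widehat d_- \neq d_-)<\varepsilon/3$. A simple union bound then yields $\PP(\widehat d_0 = d_0,\ \widehat d_+ = d_+,\ \widehat d_- = d_-) > 1-\varepsilon$ for all sufficiently large $T$.

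Next, I would note that on the event $\{\widehat d_0 = d_0,\ \widehat d_+ = d_+,\ \widehat d_- = d_-\}$, the definition \eqref{e:global-d} immediately gives
\begin{equation*}
\widehat d = \widehat d_0 + \min(\widehat d_+,\widehat d_-) = d_0 + \min(d_+,d_-) = d.
\end{equation*}
Consequently $\PP(\widehat d = d) \geq 1-\varepsilon$ for all sufficiently large $T$, which is exactly the statement $\widehat d \to_p d$ for an integer-valued estimator.

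There is no substantive obstacle here; the argument is a short deduction from the two preceding corollaries. The only subtlety worth mentioning explicitly is that the naive continuous mapping theorem applied to the function $(x,y,z)\mapsto x+\min(y,z)$ already suffices (since $\min$ is continuous), but the integer-valued structure makes the conclusion even sharper: convergence in probability is equivalent to exact equality with probability tending to one, and hence the assumption \eqref{e:global-equal} of a common inertia over ${\cal H}$ is preserved in the estimation step without any additional regularity arguments.
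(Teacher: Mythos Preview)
Your proposal is correct and takes essentially the same approach as the paper, which simply states that the result follows immediately from Corollaries~\ref{c:global-d0-consistency} and~\ref{c:global-d+--consistency} without writing out any details. Your explicit union-bound argument for integer-valued estimators is a faithful (and slightly more detailed) rendering of that one-line deduction.
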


\subsection{Global dimension tests under interval splitting}
\label{s:inf-splitting}

The global pseudo dimension test was developed in Section \ref{s:inf-dimens-global} under the assumption \eqref{e:global-equal} for a subinterval $\mathcal{H} \subset (0,1)$. When global testing is to be performed on (0,1), we suggest to carry out the introduced global test over subintervals of refined partitions. In this section, we describe how the procedure is carried out and the results can be interpreted.

We shall need the following basic property of the global test statistic $\widehat \xi_{r, {\cal H}}$ in \eqref{e:global-an-1}. Suppose $\mathcal{H}_1$ and $\mathcal{H}_2$ are two disjoint intervals such that
\begin{equation}
\mathcal{H} = \mathcal{H}_1 + \mathcal{H}_2.
\end{equation}
Then, since $\int_{\mathcal{H}} = \int_{\mathcal{H}_1} + \int_{\mathcal{H}_2}  $ and $|\mathcal{H}| = |\mathcal{H}_1| + |\mathcal{H}_2|$, it follows from the definition \eqref{e:global-an-1} of $\widehat \xi_{r, {\cal H}}$ that
\begin{equation} \label{e:inf-split-1}
\widehat \xi_{r, {\cal H}} = \widehat \xi_{r, \mathcal{H}_1} \Big( \frac{|\mathcal{H}_1|}{\mathcal{ |H| }} \Big)^{1/2} + \widehat \xi_{r, \mathcal{H}_2} \Big(\frac{|\mathcal{H}_2|}{\mathcal{|H|}} \Big)^{1/2}.
\end{equation}
In particular, in view of Proposition \ref{p:global-an}, under $d_0(u) \equiv r, \; u \in \mathcal{H}_i$,
\begin{equation}\label{e:inf-split-2-dist}
 \widehat \xi_{r, \mathcal{H}_i} \stackrel{d}{\to} N(0,1) \; =: \; Z_i, \; i=1,2,
\end{equation}
where $Z_i$'s can be considered independent because the VC model involves independent variables across $\mathcal{H}_1$ and $\mathcal{H}_2$. The relations \eqref{e:inf-split-1} and \eqref{e:inf-split-2-dist} then imply
\begin{equation} \label{e:inf-split-3-dist2}
\widehat \xi_{r, {\cal H}} \stackrel{d}{\to} Z_1 \Big( \frac{|\mathcal{H}_1|}{\mathcal{|H|}} \Big)^{1/2} + Z_2 \Big( \frac{|\mathcal{H}_2|}{\mathcal{|H|}} \Big)^{1/2} \stackrel{d}{=} N(0,1),
\end{equation}
which is consistent with what is expected under $d_0(u) \equiv r, \; u \in \mathcal{H} = \mathcal{H}_1 + \mathcal{H}_2$ by Proposition \ref{p:global-an}. Such considerations will allow having some consistency over refined partitions in the sense described below. We first consider the case of $d_0(u)$, and then discuss those of $d_{\pm}(u)$ and $d(u) = d(M(u))$.

Thus, let
\begin{equation}\label{e:inf-split-4-interval-split-defin}
\mathcal{H}_i^{(k)} = \Big( \frac{i-1}{2^k} , \frac{i}{2^k} \Big], \; i=1,
\hdots,2^k,
\end{equation}
form refined partitions of $(0,1 \rbrack$ for $k \geq 0$. Let $\widehat{d}_{0,i}^{(k)}$ be the global estimator of $d_0$ over $\mathcal{H}_i^{(k)}$ by using the test statistic $\widehat \xi_{r, \mathcal{H}_i^{(k)}}$. In view of \eqref{e:inf-split-1}--\eqref{e:inf-split-3-dist2} and for the sake of consistency, when estimating $d_0$ over finer partitions, we suggest to adjust the normal critical value for comparing $\widehat \xi_{r, \mathcal{H}_i^{(k)}}$. More precisely, if $c_{\alpha}^{(0)} = c_{\alpha}$ is a normal critical value used at the level $k=0$, we use the critical value $c_{\alpha}^{(k)} = 2^{-k/2}c_{\alpha}$ at level $k$.

As a consequence of the choice of the critical values, there are only the following three possibilities for estimators $\widehat{d}_{0,i}^{(k)}$ when refining a partition:
\begin{itemize}

\item[(P1)] $\widehat{d}_{0,i}^{(k)} = r, \; \widehat{d}_{0,2i-1}^{(k+1)} = r,\; \widehat{d}_{0,2i}^{(k+1)} = r $ ;

\item[(P2)] $\widehat{d}_{0,i}^{(k)} = r$ and either $\widehat{d}_{0,2i-1}^{(k+1)} = r, \; \widehat{d}_{0,2i}^{(k+1)} < r$ or $\widehat{d}_{0,2i-1}^{(k+1)} < r, \; \widehat{d}_{0,2i}^{(k+1)} = r$ ;

\item[(P3)] $\widehat{d}_{0,i}^{(k)} = r$ and either $\widehat{d}_{0,2i-1}^{(k+1)} > r, \; \widehat{d}_{0,2i-1}^{(k+1)} > \widehat{d}_{0,2i}^{(k+1)} = r$ or $\widehat{d}_{0,2i}^{(k+1)} > r, \; \widehat{d}_{0,2i}^{(k+1)} > \widehat{d}_{0,2i-1}^{(k+1)}$.

\end{itemize}

\noindent Indeed, let us explain the first two of these possibilities, and also indicate a case which is not one of the possibilities.

The possibility (P1) arises in the following scenario: one has $\widehat{d}_{0,i}^{(k)} = r$ when $\widehat \xi_{j, \mathcal{H}_i^{(k)}} > 2^{-k/2}c_{\alpha}$ for $j=p,p-1,\hdots,r+1$ and $\widehat \xi_{r, \mathcal{H}_i^{(k)}} \leq 2^{-k/2}c_{\alpha}$. Similarly, $\widehat{d}_{0,2i-1}^{(k+1)} = r $ and  $\widehat{d}_{0,2i}^{(k+1)} = r$ when $\widehat \xi_{j, \mathcal{H}_{2i-1}^{(k+1)}} , \widehat \xi_{r, \mathcal{H}_{2i}^{(k+1)}}  > 2^{-(k+1)/2}c_{\alpha} $ for $j=p,p-1,\hdots,r+1$ and $\widehat \xi_{r, \mathcal{H}_{2i-1}^{(k+1)}} , \widehat \xi_{r, \mathcal{H}_{2i}^{(k+1)}} \leq 2^{-(k+1)/2}c_{\alpha}$. This is consistent with the special case of \eqref{e:inf-split-1},

\begin{equation}\label{e:inf-split-5}
\widehat \xi_{j, \mathcal{H}_i^{(k)}} = \widehat \xi_{j, \mathcal{H}_{2i-1}^{(k+1)}} \cdot 2^{1/2} + \widehat \xi_{j, \mathcal{H}_{2i}^{(k+1)}} \cdot 2^{1/2},
\end{equation}
in the sense that the relationships of the two summands of \eqref{e:inf-split-5} to the respective critical values is the same as that for their sum.

The possibility (P2), on the other hand, corresponds to the scenario when $\widehat \xi_{i, \mathcal{H}_{2i-1}^{(k+1)}} , \widehat \xi_{j, \mathcal{H}_{2i}^{(k+1)}}  > 2^{-(k+1)/2}c_{\alpha}$ for $j=p,p-1,\hdots,r+1$, but then either $\widehat \xi_{r, \mathcal{H}_{2i-1}^{(k+1)}}  \leq  2^{-(k+1)/2}c_{\alpha}$ and $\widehat \xi_{r, \mathcal{H}_{2i}^{(k+1)}}  >  2^{-(k+1)/2}c_{\alpha}$ or $\widehat \xi_{r, \mathcal{H}_{2i-1}^{(k+1)}}  >  2^{-(k+1)/2}c_{\alpha}$ and $\widehat \xi_{r, \mathcal{H}_{2i}^{(k+1)}}  \leq 2^{-(k+1)/2}c_{\alpha}$. The case that is excluded from the possibilities listed above, is $\widehat{d}_{0,2i-1}^{(k+1)} < r$ and  $\widehat{d}_{0,2i}^{(k+1)} < r$ (and $\widehat{d}_{0,i}^{(k)} = r$) which would happen if $\widehat \xi_{r, \mathcal{H}_{2i-1}^{(k+1)}} , \widehat \xi_{r, \mathcal{H}_{2i}^{(k+1)}}   >  2^{-(k+1)/2}c_{\alpha}$ but is impossible in view of \eqref{e:inf-split-5}. In our experiments with simulated and real data, the possibilities (P1) and (P2) seem to be the most common.

Figure \ref{fig:splitting_population}, left plot, presents global testing results under splitting for one realization of a VC model (Model 5 from Section \ref{s:simulation_dimension}). In this plot, the estimate of $\widehat{d}_{0,1}^{(0)}$ for $(0,1 \rbrack$ is presented at the point 1/2 of the $x$-axis which is the midpoint of $(0,1 \rbrack$. The estimator $\widehat{d}_{0,1}^{(1)}$ and $\widehat{d}_{0,2}^{(1)}$ are presented at points 1/4 and 3/4 respectively, which are the midpoints of the intervals $(0,1/2 \rbrack$ and $(1/2,1 \rbrack$. The presentation is continued in the same way, till the level $k=2$ is reached and the estimates $\widehat{d}_{0,1}^{(2)}, \widehat{d}_{0,2}^{(2)} , \widehat{d}_{0,3}^{(2)}, \widehat{d}_{0,4}^{(2)}$ are presented at the finest considered level.

\begin{figure}[t]
\begin{center}
\includegraphics[scale=0.38]{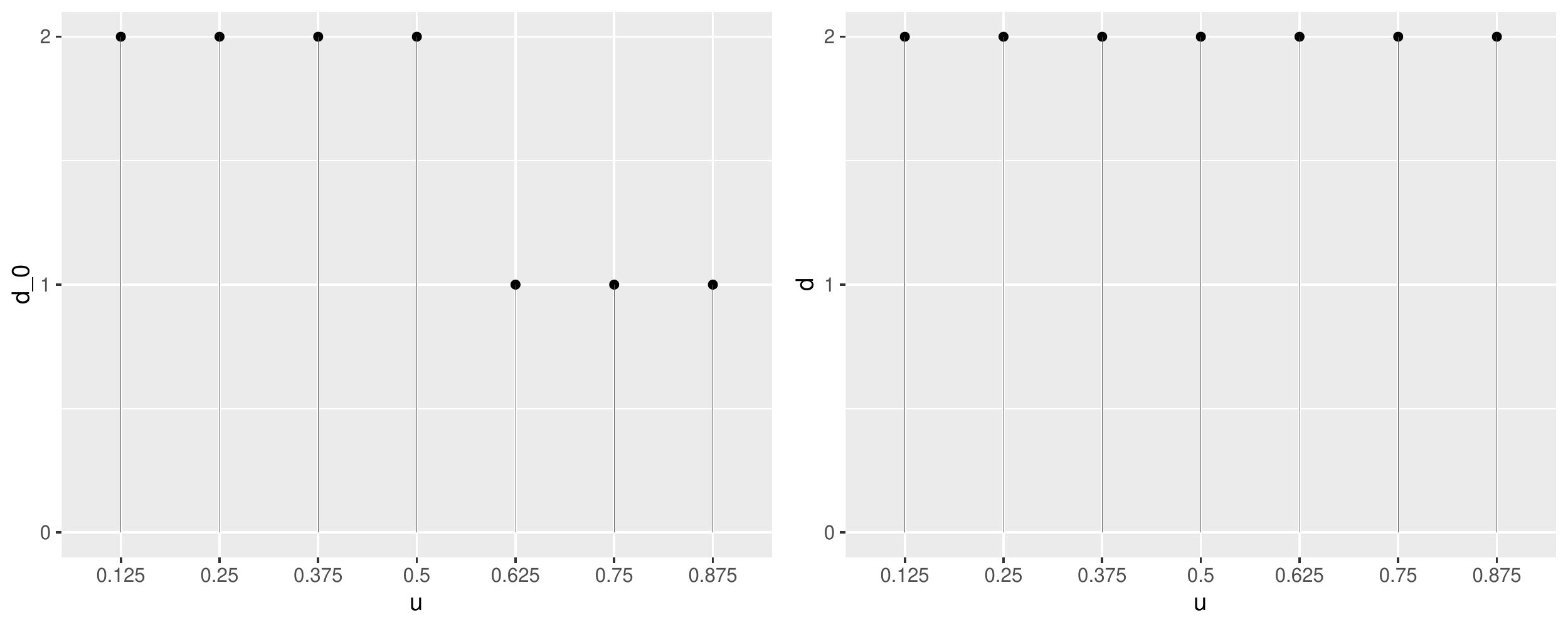}
\end{center}
\vspace{-0.8cm}
\caption{Left: Visualization of the estimates of $\widehat{d}_{0,i}^{(k)}$ for $k=0,1,2$ from one realization of a VC model. Right: Similar visualization of $\widehat{d}_i^{(k)}$. }\label{fig:splitting_population}
\end{figure}

The estimators $\widehat{d}_{0,i}^{(k)}$ lead to the corresponding estimates $\widehat{d}_{\pm,i}^{(k)}$ and $\widehat{d}_{i}^{(k)}$. The latter could be presented similarly as in the described left plot of Figure \ref{fig:splitting_population}. This is illustrated in the right plot of Figure \ref{fig:splitting_population}.  

\section{Estimation of stationary subspace}
\label{s:inf-subspace}

We turn here to the estimation of a stationary subspace of the VC model, which is related to pseudo null spaces, given in Definition \ref{d:descr-psudo-nullity}, of matrices $M(u)$ through (\ref{e:model-vc-dB1}). We shall not provide here any formal statistical tests related to a stationary subspace but rather make a number of related comments, inspired by the developments in Section \ref{s:char-pnull}. In Section \ref{s:11-subspace}, we introduce a particular class of local stationary subspaces. In Section \ref{s:cluster_subspace}, a graph-based method is provided that aims at forming clusters of the many local stationary subspaces introduced in Section  \ref{s:11-subspace}. Finally, a technique to select one subspace out of the clusters is provided and this serves as our estimate of a stationary subspace.  

\subsection{(1,1)-local stationary subspaces} 
\label{s:11-subspace}

The results of Section \ref{s:char-pnull} show that pseudo null spaces have a rich structure and are typically not unique, for a given matrix. Motivated by the proof of Proposition \ref{p:rank_claim_1}, we shall restrict our attention to their special cases, given in the following definition. The definition and subsequent developments use the notation of Section \ref{s:char}, namely that of $M$, $d_0$, $d_\pm$, $\lambda_i$, $\lambda_{0,i}$, $\lambda_{\pm,i}$, $s_i$, $s_{0,i}$, $s_{\pm,i}$.

\begin{definition}\label{d:11-null-spaces}
Let $M$ be a symmetric matrix and suppose $\min(d_+,d_-)\geq 1$. A {\it $(1,1)$-pseudo null space} of $M$ is defined as
\begin{equation}\label{e:11-null-spaces}
	{\cal P}_{(1,1)}(M) = \mbox{\rm lin}\{ s_{0,1},\ldots,s_{0,d_0},\alpha_i s_{+,p(i)} + \beta_i s_{-,n(i)},\ i=1,\ldots,\min(d_+,d_-) \},
\end{equation}
for a fixed $\alpha_i,\beta_i$, where $p(i)\in \{1,\ldots,d_+\}$ are different across $i$, $n(i)\in \{1,\ldots,d_-\}$ are different across $i$ and 
\begin{equation}\label{e:11-null-spaces-2}
	\alpha_i^2 \lambda_{+,p(i)} + \beta_i^2 \lambda_{-,n(i)}=0,\ i=1,\ldots,\min(d_+,d_-).
\end{equation}
When $M = M(u)$ with $M(u)$ as in \eqref{e:models-M}, a (1,1)-pseudo null space $\mathcal{P}(M(u))$ will be called a (1,1)-\textit{local stationary subspace} and denoted as $\mathcal{B}_{(1,1)}(u)$.
\end{definition}

The fact that ${\cal P}_{(1,1)}(M)$ defines a pseudo null space for $M$ follows as in the proof of Proposition \ref{p:rank_claim_1}.

\begin{remark}\label{r:11-number}
The total number of $(1,1)$-pseudo null spaces of $M$ is
\begin{equation}\label{e:11-number}
	n(M)  \; = \; { d_+ \choose \min(d_+,d_-)} \cdot { d_- \choose \min(d_+,d_-)} \cdot (\min(d_+,d_-)!),
\end{equation}
where the first two terms account for the selection of eigenvectors associated with the positive and negative eigenvalues, and the last term for pairing them off. Depending on the values of $d_\pm$, the total number (\ref{e:11-number}) can be quite large: e.g.\ with $d_+=3$ and $d_-=5$, the number is $60$. 
\end{remark}

\begin{remark}\label{r:11-expl}
The prefix ``$(1,1)$-'' in Definition \ref{d:11-null-spaces} refers to the fact that a pseudo eigenvector of a pseudo null space is constructed by taking one (1) eigenvector $s_{+,p(i)}$ associated with the positive eigenvalues and one (1) eigenvector $s_{-,n(i)}$ associated with the negative eigenvalues. More elaborate constructions are possible as well, for example, by taking two (2) eigenvectors associated with the positive eigenvalues and one (1) eigenvector with the negative ones, as in Example \ref{ex:Mu-analysis}, which could be called a $(2,1)$-pseudo null space. In this work though, we shall consider only $(1,1)$-pseudo null spaces.
\end{remark}

If $\widehat M$ estimates $M$ with the analogous estimators $\widehat d_0$, $\widehat d_\pm$, $\widehat \lambda_i$, $\widehat \lambda_{0,i}$, $\widehat \lambda_{\pm,i}$, $\widehat s_i$, $\widehat s_{0,i}$, $\widehat s_{\pm,i}$ of the respective quantities, we would similarly like to have the sample counterparts of the $(1,1)$-pseudo null space in (\ref{e:11-null-spaces}). Definition \ref{d:11-null-spaces} may not, however, extend directly to the sample quantities since $\widehat d_\pm$ may not necessarily represent the actual number of positive/negative eigenvalues $\widehat \lambda_{\pm,i}$ and hence the condition (\ref{e:11-null-spaces-2}) may not be satisfied. To deal with this possibility, we define the sample counterparts $\widehat {\cal P}_{(1,1)}(M)$ in the same way as in  (\ref{e:11-null-spaces}) by using the quantities with the hats, except that $\widehat d_+$ and $\widehat d_-$ are replaced by $\widetilde d_+$ and $\widetilde d_-$, which are defined as
\begin{equation}\label{e:d+-tilde}
	\widetilde d_+ = \max\{r: r\leq \widehat d_+,\ 0<\widehat \lambda_{p-r+1} \leq \ldots \leq \widehat \lambda_p \}
\end{equation}
and 
\begin{equation}\label{e:d--tilde}
	\widetilde d_- = \max\{r: r\leq \widehat d_-,\ \widehat \lambda_1 \leq \ldots \leq \widehat \lambda_r<0 \},
\end{equation} 
where we assumed that the eigenvalues $\widehat \lambda_i$ appear in the non-decreasing order. That is, we define a sample $(1,1)$-pseudo null space as
\begin{equation}\label{e:11-null-spaces-sample}
	\mathcal{P}_{(1,1)}( \widehat{M} ) = \mbox{\rm lin}\{ \widehat s_{0,1},\ldots,\widehat s_{0,\widehat d_0},\alpha_i \widehat s_{+,p(i)} + \beta_i \widehat s_{-,n(i)},\ i=1,\ldots,\min(\widetilde d_+,\widetilde d_-) \},
\end{equation}
where $p(i)\in \{1,\ldots,\widetilde d_+\}$ are different across $i$, $n(i)\in \{1,\ldots,\widetilde n_+\}$ are different across $i$ and 
\begin{equation}\label{e:11-null-spaces-2-sample}
	\alpha_i^2 \widehat \lambda_{+,p(i)} + \beta_i^2 \widehat \lambda_{-,n(i)}=0,\ i=1,\ldots,\min(\widetilde d_+,\widetilde d_-).
\end{equation}

\noindent Replacing $\widehat{M}$ above by $\widehat{M}(u)$ from \eqref{e:model-vc-Mu-estim} for $u \in (0,1)$, estimators $\widehat{\mathcal{B}}_{(1,1)}(u) = \mathcal{P}_{(1,1)}(\widehat{M}(u))$ can be defined for (1,1)-local stationary subspaces. Techniques to cluster these (1,1)-local stationary subspaces is discussed next. 

\subsection{Clustering (1,1)-local stationary subspaces} 
\label{s:cluster_subspace}

According to \eqref{e:model-vc-dB1}, there is a stationary subspace $\mathcal{B}_1$ for a VC model if there is at least one identical stationary subspace for all $u \in (0,1)$. In Section \ref{s:11-subspace}, we defined (1,1)-local stationary subspaces $\mathcal{B}_{(1,1)}(u)$ whose number can already be quite large for a fixed $u$; see Remark \ref{r:11-number}. A natural possibility in defining a candidate for a stationary subspace $\mathcal{B}_1$ is to consider all (1,1)-local stationary subspaces across $u$'s and select a subspace representing their ``majority'' in some suitable sense. In light of this observation, using clustering seems natural and this approach is pursued here on the estimated (1,1)-local stationary subspaces. 

More specifically, we discuss a graph-based approach to clustering the many (1,1)-local stationary subspaces, or equivalently, the many (1,1)-pseudo null spaces using distances that are computed based on  canonical angles between spaces. In addition, a method to select one (1,1)-local stationary subspace out of the many is discussed and this is considered as our estimate of a  stationary subspace. 

As in Section \ref{s:11-subspace}, we let $\widehat{\mathcal{B}}_{(1,1)}(u) = \mathcal{P}_{(1,1)}( \widehat{M}(u))$ denote a sample (1,1)-local stationary subspace of the matrix $\widehat M(u)$ from (\ref{e:model-vc-Mu-estim}). Let $\widehat{\mathcal{B}}_{(1,1)} = \bigcup_u  \widehat{\mathcal{B}}_{(1,1)}(u)$ be the union of all (1,1)-local stationary subspaces over $u$. In practice, the union $\bigcup_u$ is replaced by a set of discrete points $\{u_1,u_2,\hdots,u_{n_u} \}$ in $(0,1)$. 

Consider a graph $G=(V,E)$ where every vertex $v \in V$ corresponds to a (1,1)-local stationary subspace in $\widehat {\cal B}_{(1,1)}$. The adjacency matrix $E$ will be defined in terms of a   distance between (1,1)-local stationary subspaces using canonical angles. Let $\mathscr{B}_1$ and $\mathscr{B}_2$ be two (1,1)-local stationary subspaces in $\widehat {\cal B}_{(1,1)}$ of dimensions $d_1$ and $d_2$, respectively. Letting $d =   \min( d_1,d_2)$, the canonical angles computed between these two spaces are given by $\theta_1 \leq \theta_2 \leq \hdots \leq \theta_d$, where 

\begin{gather}\label{e:canonical_angle_definition}
\theta_1 \; = \; \min_{ x_1 \in \mathscr{B}_1, \; y_1 \in \mathscr{B}_2 } \arccos \Big( \frac{x_1 y_1}{||x_1|| \cdot ||y_1|| } \Big), \\
\theta_j \; = \; \min_{ \substack{  x_j \in \mathscr{B}_1, \; y_j \in \mathscr{B}_2; \\ x_j \perp x_1,x_2, \hdots ,x_{j-1}, \; y_j \perp y_1,y_2,..,y_{j-1}  }  }  \arccos \Big( \frac{x_j y_j}{ ||x_j|| \cdot  ||y_j||  } \Big), \quad j=2,3,\hdots,d.
\end{gather}
The vectors $x_1,x_2,\hdots,x_d$ and $y_1,y_2, \hdots ,y_d$ are called canonical vectors. We measure the  distance between spaces $\mathscr{B}_1$ and $\mathscr{B}_2$ as $\max_{1 \leq i \leq d} \theta_i, \;$ and define the adjacency matrix $E = (e_{i,j})$ for $i,j=1,2,\hdots,|V|$, as
\begin{equation} \label{e:adjacency_matrix}
e_{i,j} \; = \; 1, \;\; \textrm{if} \;\; \theta(v_i,v_j)<\theta_0,
\end{equation}

\noindent where $\theta(v_i,v_j)$ is the maximum canonical angle between the (1,1)-local stationary subspaces corresponding to vertices $v_i$ and $v_j$, and $\theta_0$ is a threshold. The choice of $\theta_0$ dictates the number of clusters estimated with more being formed for lower values of $\theta_0$. In our numerical work, we set $\theta_0 =20^{\circ}$. 

Finally, in order to obtain the clusters of vertices in the graph $G$, we utilize the Walktrap algorithm of \cite{pons_walktrap}. Here, a transition probability matrix $P=(p_{ij})$ is constructed with $p_{ij} = \frac{A_{ij}}{d(i)}$ where $A=(A_{ij})$ denotes the adjacency matrix of $G$ and $d(i)$ denotes the degree of vertex $v_i$. A random walk process defined on this graph $G$ is based on the powers of the matrix $P$, that is, the probability of moving from vertex $v_i$ to $v_j$ through a random walk of length $t$ is given by $P^{t}_{ij}$. The closeness of vertices in the graph is measured by these probabilities from the observation that if two vertices $v_i$ and $v_j$ are in the same cluster, $P^t_{ij}$ must be high.  

Let $C_1,C_2, \hdots ,C_L $ be the $L$ cluster of vertices produced by the Walktrap algorithm that have a size of at least 3 vertices. We first obtain the centers $\{c_1,c_2, \hdots ,c_L \}$ of these $L$ clusters by computing the sine of the maximum canonical angle, 
\begin{equation} \label{e:cluster_center}
c_l \; = \; \min_{v \in C_l} \; \sum_{v' \neq v} \; \sin (\theta(v' , v)), \quad  l=1,2,\hdots,L.
\end{equation}     
The (1,1)-local stationary subspaces corresponding to the $L$ cluster centers are considered as the candidate stationary subspaces returned by our method. Additionally, in order to select a single (1,1)-local stationary subspace (our stationary subspace estimator) out of these $L$ representative subspaces, we assess the ``denseness'' of each cluster through 
\begin{equation} \label{e:avg_cluster_closeness}
T ( C_l ) \; = \; \frac{  \frac{1}{|C_l|} \sum_{v \in C_l} \log( d(v)) }{ \frac{1}{|C_l|} \sum_{v \in C_l} \Big( \frac{1}{|C_l|-1} \sum_{v' \in C_l, \; v' \neq v }  \sin (\theta(v' , v) \Big)   }, \quad l=1,2,\hdots,L, 
\end{equation}  
 where $d(v)$ denotes the degree of vertex $v$ within cluster $C_l$. We then identify the cluster with maximum $T(\cdot)$ among the $L$ clusters and select the final (1,1)-local stationary subspace corresponding to the center of that cluster. That is, we select our stationary subspace estimate as the (1,1)-local stationary subspace corresponding to the cluster center $c_s$ in cluster $C_s$, where
\begin{equation}\label{e:subspace_select_graph}
C_s \; = \; \underset{l}{\operatorname{arg max}} \;  T(C_l).   
\end{equation}

\begin{example} \label{ex:subspace-cluster}
To illustrate the above technique, we consider the VC model \eqref{e:intro-vcmodel} with $p=3$, i.i.d.\ $Y_t \sim N(0,I_3)$, $T=1000$ and $A^2(u) = \mbox{diag}(2+0.5 \sin (2 \pi u) , 3 - \sin (2 \pi u) , 1.5 + \sin (2 \pi u) )$. The stationary subspace dimension for this model is $d=1$ and the true (1,1)-local stationary subspaces are given by $(\frac{1}{\sqrt{2}},\frac{1}{\sqrt{2}},0)$, $(0,\frac{1}{\sqrt{2}},\frac{1}{\sqrt{2}})$, $(\frac{1}{\sqrt{2}},-\frac{1}{\sqrt{2}},0)$ and $(-\frac{1}{\sqrt{2}},\frac{1}{\sqrt{2}},0)$ for all $u$'s. We generated one realization of the series $X_t$ based on this model and obtained all the estimated (1,1)-local stationary subspaces across a set of points $\{ 0.04, 0.08,0.12,\hdots,0.96 \}$. Figure \ref{fig:subspaces_model_1} depicts a 3D plot that includes the population (1,1)-local stationary subspaces (solid circles), the estimated (1,1)-local stationary subspaces (crosses), the cluster centers $c_l$ for $l=1,2,3,4$ of the 4 clusters (open circles) and the selected (1,1)-local stationary subspace based on \eqref{e:subspace_select_graph} (solid square, marked as VC Final). Finally, the estimated stationary subspace from DSSA (\cite{sundararajan:2017}) is plotted along with the other spaces (diamond). 

In Table \ref{tab:cluster-sizes}, we list the cluster centers of the 4 main clusters in Figure \ref{fig:subspaces_model_1}, the sizes of those 4 clusters and the proportions of $u \in \{ 0.04, 0.08,0.12,\hdots,0.96 \}$ with points in the respective clusters.  Observe that the (1,1)-local stationary subspace selected as the stationary subspace based on \eqref{e:subspace_select_graph} lies in the most ``dense'' cluster and the DSSA stationary subspace also lies in the same cluster. Here, the 4 biggest clusters formed by the method comprise roughly 84\% of the total number of (1,1)-local stationary subspaces. The (1,1)-local stationary subspace selected as the stationary subspace and identified as the solid square point in Figure \ref{fig:subspaces_model_1} lies in the biggest cluster that contains roughly 24\% of the (1,1)-local stationary subspaces. This subspace also lies in the cluster with the highest proportion of $u$'s with points in that cluster. 
\end{example}

\begin{figure}[t]
\begin{center}
\includegraphics[scale=0.87]{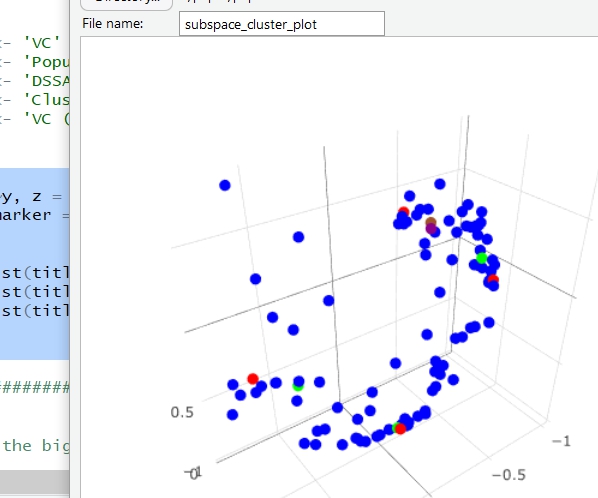}
\end{center}
\vspace{-0.55cm}
\caption{Local and global stationary subspaces in Example \ref{ex:subspace-cluster}: 3D plot of the population (1,1)-local stationary subspaces (solid circles), the estimated (1,1)-local stationary subspaces (crosses), the cluster centers of the 4 clusters (open circles),  the stationary subspace based on \eqref{e:subspace_select_graph} (solid square, marked as VC Final), the estimated stationary subspace from DSSA (diamond).}\label{fig:subspaces_model_1}
\end{figure}

\begin{table}[t] 
\begin{center}
\begin{tabular}{|c |c |c | }
\hline
Cluster center &  Cluster size (in \%) & Proportion of $u$'s  \\
\hline 
$(0.661,  0.750 , -0.003)$ & 19.79  & 79.16   \\
\hline
$(0.029, -0.641 , 0.767)$ & 23.95  & 91.67  \\
\hline 
$(-0.114 , 0.778 , 0.618)$ & 23.95 & 95.83  \\
\hline 
$(0.938, -0.346, -0.018)$ & 17.70  & 70.85  \\
\hline 
\end{tabular}
\caption{ The 4 cluster centers, cluster sizes (in percentage), and the proportions of $u$'s for the 4 biggest clusters in Figure \ref{fig:subspaces_model_1}. The selected (1,1)-local stationary subspace identified as the purple point in Figure \ref{fig:subspaces_model_1} corresponds to $(-0.114 , 0.778 , 0.618)$. } \label{tab:cluster-sizes}
\end{center}
\end{table}

\section{Simulation study}
\label{s:simulation}

In Section \ref{s:simulation_dimension}, we evaluate the empirical performance of the proposed method in estimating the dimension of a stationary subspace for several VC models. In Section \ref{s:simulation_subspace}, we assess the ability of the proposed method in estimating a  stationary subspace using a few discrepancy measures.

\subsection{Dimension estimation comparison}
\label{s:simulation_dimension}

We first consider several VC models \eqref{e:intro-vcmodel}, characterized through $A(u)$, for which the dimensions $d_0(u) = d_0$, $\min (d_+(u) , d_-(u) ) = \min (d_+ , d_-)$ and $d(u) \equiv d$ do not depend on $u$ (and neither do the local stationary subspaces). The model matrices $A(u)$, the respective matrices in \eqref{e:models-M} and the dimensions $p$, $d_0$, $ \min (d_+ , d_-)$ are: \\

\noindent \textbf{Model 1:} $p=3$, $d_0=0$, $\min (d_+, d_-) = 1$, $d=1$,
\begin{gather*}
A^2(u) = \mbox{diag} (2 + 0.5\sin(2 \pi u) , 3 - \sin(2 \pi u) , 1.5 + \sin(2 \pi u) ), \\
M(u) = \mbox{diag} (0.5\sin(2 \pi u) ,  - \sin(2 \pi u) , \sin(2 \pi u)). \; \; \quad \quad \quad \; \quad
\end{gather*}

\vspace{0.5cm}

\noindent \textbf{Model 2:} $p=3$, $d_0=2$, $\min (d_+, d_-) = 0$, $d=2$,
\begin{equation*}
A^2(u) = \mbox{diag} (3 -2u , 3 , 4 ), \; M(u) = \mbox{diag} (1-2u,0,0).
\end{equation*}

\vspace{0.5cm}

\noindent \textbf{Model 3:} $p=4$, $d_0=1$, $\min (d_+, d_-) = 1$, $d=2$,
\begin{equation*}
A^2(u) =
\begin{pmatrix}
e^1 &&  1 && 0 && 0 \\
1 &&  2+ \sin(2 \pi u)  && 0 && 0 \\
0 && 0 && 3 - 2u && 0.5 \\
0 && 0 && 0.5 && 3 - \sin(2 \pi u)
\end{pmatrix} ,
\end{equation*}
$M(u) = \mbox{diag} (0 , \sin(2 \pi u) , 1-2u , -\sin(2 \pi u))$.

\vspace{0.5cm}

\noindent \textbf{Model 4:} $p=4$, $d_0=3$, $\min (d_+, d_-) = 0$, $d=3$,
\begin{equation*}
A^2(u) =
\begin{pmatrix}
e^u &&  \rho && \rho^2 && \rho^3 \\
\rho &&  1  && \rho && \rho^2 \\
\rho^2 && \rho && 4 && \rho \\
\rho^3 && \rho^2 && \rho && e^1
\end{pmatrix}, \;  \rho=0.5, \; M(u) = \mbox{diag} ( e^u-e-1 , 0,0,0  ).
\end{equation*}

\vspace{0.5cm}

Models 1 and 2 are 3-dimensional and have diagonal $A^2(u)$. Models 3 and 4 are 4-dimensional with non-diagonal $A^2(u)$.

We suppose that the models above are Gaussian with i.i.d.\ $ N(0,I_p)$ vectors $Y_t$. In the simulations we take  $T \in \{ 200, 500, 1000, 2000 \}$ as the sample sizes. In estimating $A^2(u)$ in \eqref{e:model-vc-Mu-estim}, bandwidth choices $h$ ranging from $T^{-0.1}$ to $T^{-0.5}$ were attempted and the best results were obtained for $h \in (T^{-0.3}, T^{-0.4})$. We fix  $h =  T^{-0.35}$ and present the results for this choice.  We focus on testing for $\mathcal{H} = (0,1)$ only and use 100 Monte Carlo replications.

We compare the performance of the proposed dimension estimator $\widehat{d}$ given in \eqref{e:global-d} with the sequential estimation procedure provided in Section 2.2.5 of \cite{sundararajan:2017}. The method found in the latter work will be referred to as DSSA and the proposed method will be denoted as VC. The estimation results for the two methods and the four considered models are presented through violin plots in Figure \ref{fig:all_model_violin}. Violin plots are intended as a somewhat qualitative visualization of the results -- perhaps slightly more informative histograms for the results can be found in \citet{vc-appendix}. The plots show that estimation improves with increasing sample size $T$, and that the proposed VC method performs better than the competing DSSA method in detecting the true dimension $d$.

\begin{figure}[t]
\begin{center}
\includegraphics[scale=0.42]{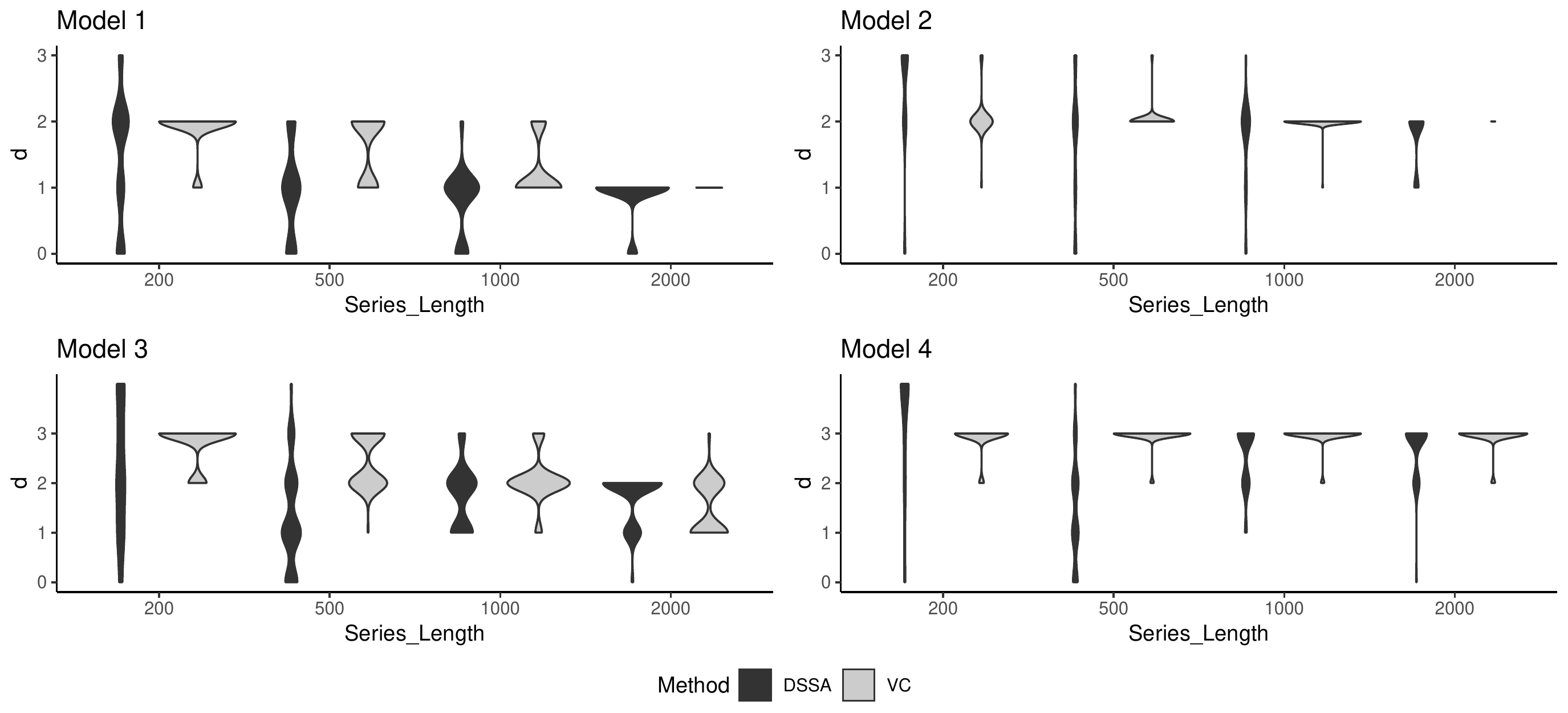}
\end{center}
\caption{Models 1-4 : Violin plots of the estimates of $d$ for the indicated sample sizes for the two competing methods: DSSA and VC (proposed method).  } \label{fig:all_model_violin}
\end{figure}


We now turn to VC models whose dimensions depend on $u$'s. We consider the following models:

\vspace{0.6cm}

\noindent \textbf{ Model 5:} $p=3$,  $d_{0}(u)=2$ and $d(u)=2$ if $u<0.5$ and $d_{0}(u)=1$ and $d(u) = 2$ if $u>0.5$,

\begin{equation*}
A^2(u) = 1_{(0,0.5)}(u) \; \mbox{diag} (2 + \sin(2 \pi u) , 2.0901 , 3) +  1_{(0,0.5)}(u) \;  \mbox{diag} (2 + \sin(2 \pi u) , 4 + 3\sin(2 \pi u) , 3).
\end{equation*}

\vspace{0.5cm}

\noindent \textbf{ Model 6:} $p=3$,  $d_{0}(u)=3$ and $d_u=3$ if $u<0.5$ and $d_0(u)=2$ and $d(u) = 2$ if $u>0.5$,

\begin{equation*}
A^2(u) =
1_{(0,0.5)}(u) \; \begin{pmatrix}
4 &&  0.5 && 0 \\
0.5 &&  3.125  && 0 \\
0 && 0 && 1
\end{pmatrix} +
1_{(0.5,1)}(u) \; \begin{pmatrix}
4 &&  0.5 && 0 \\
0.5 &&  3u^2 + 2u  && 0 \\
0 && 0 && 1
\end{pmatrix} .
\end{equation*}

\vspace{0.5cm}

\noindent The entries 2.0901 in Model 5 and 3.125 in Model 6 ensure smoothness of $A^2(u)$ at $u=0.5$.

We report the estimation results for the models in Figures \ref{fig:all_splitting_plots__model_5} and \ref{fig:all_splitting_plots__model_6}. The plots across the two figures are analogous and hence only those in Figure \ref{fig:all_splitting_plots__model_5} will be explained. The top two plots in Figure \ref{fig:all_splitting_plots__model_5} concern the estimates of $d_0(u)$, and the bottom two plots are those of $d(u)$. The structure of the left plots is similar to that of Figure \ref{fig:splitting_population}. That is, the estimates over $\mathcal{H} = (0,1)$ are depicted at $u=0.5$, over $\mathcal{H} = (0,0.5)$ at $u=0.25$ and those over $\mathcal{H} = (0.5,1)$ at $u=0.75$. The only difference here is that the results are reported over 100 realizations in the form of violin plots. The plots on the right, on the other hand, present the local estimates at $u$'s indicated on the horizontal axis, according to the methods discussed in Section \ref{s:inf-dimens-local}.

Several observations are in place regarding Figures \ref{fig:all_splitting_plots__model_5} and \ref{fig:all_splitting_plots__model_6}. Note that the estimates are generally sensitive to the choice of the subinterval, (0,0.5) or (0.5,1), in the direction of the true dimension value. The estimates of the local dimensions also tend to capture their variability across the different $u$'s.

\begin{figure}[t]
\begin{center}
\includegraphics[scale=0.37]{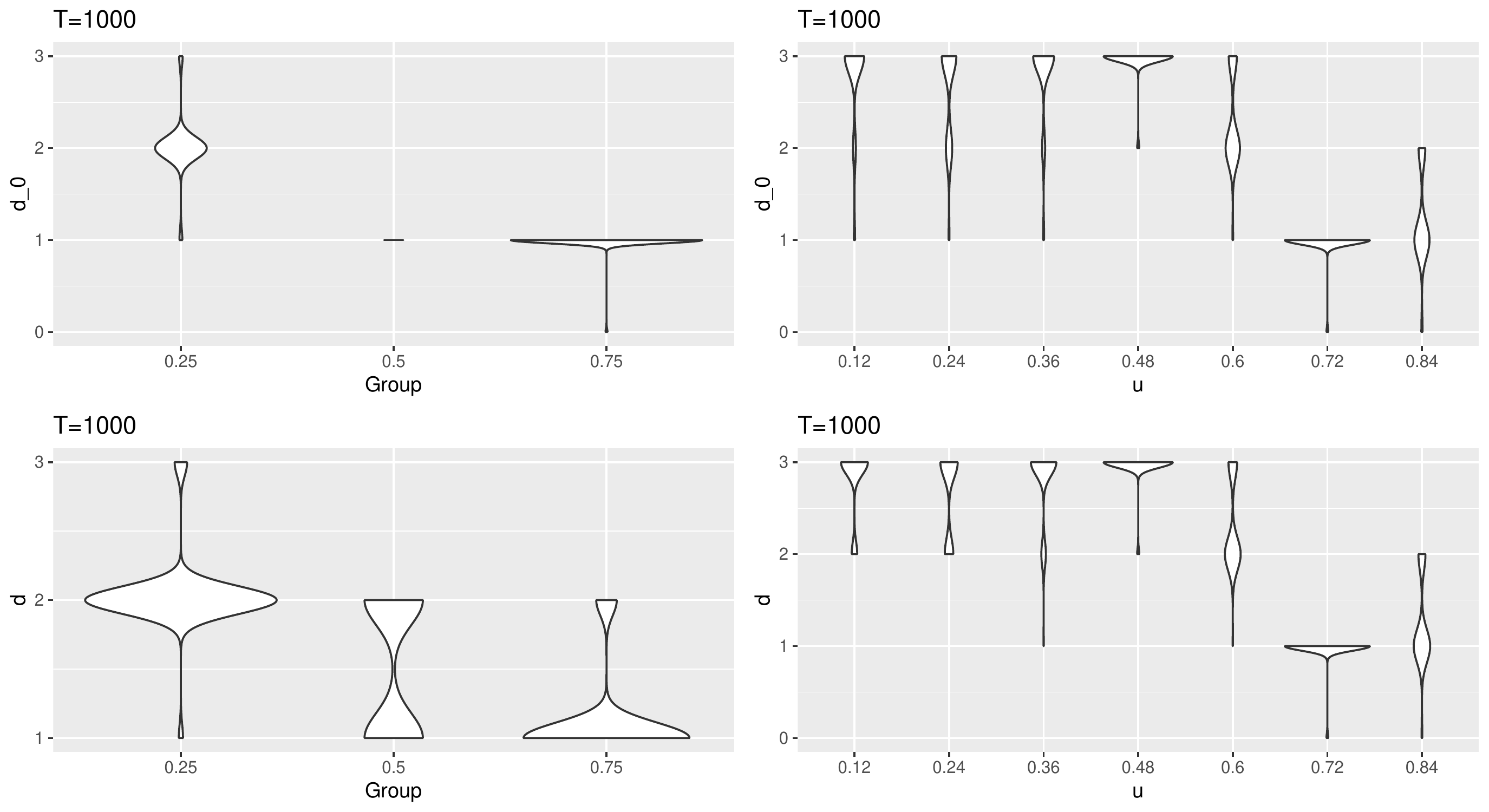}
\end{center}
\caption{Model 5: Top left: Violin plots of the estimates of $d_0(u)$ obtained over (0,0.5), (0,1), (0.5,1) depicted at $u=0.25$, $u=0.5$ and $u=0.75$, respectively. Top right: Violin plots of the local estimates of $d_0(u)$ obtained over indicated $u$ in $(0,1)$. Bottom left and right: Analogous plots but for the estimates of $d(u)$. }\label{fig:all_splitting_plots__model_5}
\end{figure}

\begin{figure}[t]
\begin{center}
\includegraphics[scale=0.37]{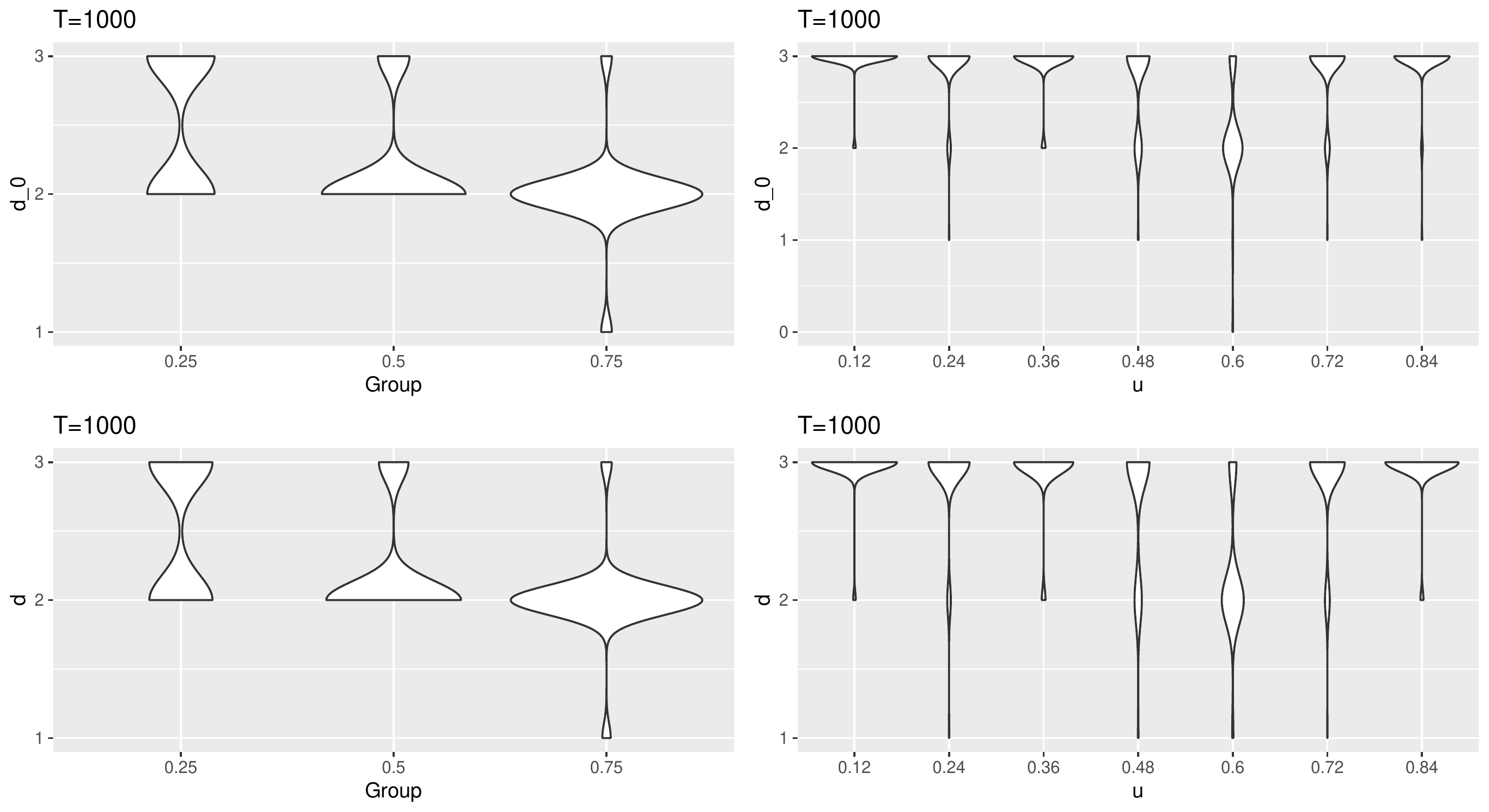}
\end{center}
\caption{Model 6: Top left: Violin plots of the estimates of $d_0(u)$ obtained over (0,0.5), (0,1), (0.5,1) depicted at $u=0.25$, $u=0.5$ and $u=0.75$, respectively. Top right: Violin plots of the local estimates of $d_0(u)$ obtained over indicated $u$ in $(0,1)$. Bottom left and right: Analogous plots but for the estimates of $d(u)$. }\label{fig:all_splitting_plots__model_6}
\end{figure}

\subsection{Subspace estimation comparison}
\label{s:simulation_subspace}

We compare here the performance of the proposed method in estimating a subspace (from Section \ref{s:cluster_subspace}) to DSSA of \cite{sundararajan:2017}, in terms of  three discrepancy measures. The first measure concerns  departure from stationarity based on the sizes of the DFT covariances as given in Eq. (9) of \cite{sundararajan:2017}. More precisely, for an estimated stationary subspace process $Y_t = \widehat{B}_1^{'} X_t$, we set
\begin{equation}\label{e:dssaoptim}
D_1(\widehat{B}_1) \; = \; \sum_{r=1}^{m} \; ||\; \Re ( \widehat{\Gamma}_r^{Y} )\;||_F^2 \; + \;||\; \Im (\widehat{\Gamma}_r^{Y} )\;||_F^2,
\end{equation}
where $|| A ||_F $ denotes the Frobenius norm of a matrix $A \in \mathbb{R}^{d \times d}$, $\Re(\cdot)$ and $\Im(\cdot)$ denote the entrywise real and imaginary parts respectively, and  $\widehat{\Gamma}_r^{Y}$ is the $d \times d$ lag-$r$ DFT sample autocovariance matrix given by
\begin{equation}\label{e:dftcov}
\widehat{\Gamma}_r^Y = \frac{1}{T}\; \sum_{k=1}^{T} J_{Y}(\omega_k) J_{Y}(\omega_{k+r})^{\ast},
\end{equation}
with $\omega_k = 2 \pi k / T$ referring to a Fourier frequency,  $J_Y(\cdot)$ being the discrete Fourier transform (DFT) of the $d$-variate series $ Y_{t}$ and $J_{Y}(\cdot)^{\ast}$ denoting the complex conjugate transpose. The number of DFT covariance lags $m$ in \eqref{e:dssaoptim} is fixed to 3.

The second measure is based on the relation \eqref{e:model-vc-ss-2}. For any candidate subspace $\widehat{B}_1$, we set
\begin{equation}\label{e:vcoptim}
D_2(\widehat{B}_1) = \sum_k || \widehat{B}_1^{'} (A^2(u_k)  -  \overline{A}^2) \widehat{B}_1 ||_F = \\ \notag
\sum_{k} || \widehat{B}_1^{'} ( M(u_k) ) \widehat{B}_1 ||_F.
\end{equation}

The last measure is based on canonical angles computed between the population and estimated subspaces $\mathcal{B}_1$ and $\widehat{\mathcal{B}}_1$ that are spanned by the columns of  the $p \times d$ matrices $B_1$ and $\widehat{B}_1$, respectively. As in \eqref{e:canonical_angle_definition}, let $\theta_1 \leq \theta_2 \leq \hdots \leq \theta_d$ be the $d$ canonical angles between the spaces $\mathcal{B}_1$ and $\widehat{\mathcal{B}}_1$. Then, set


\begin{equation}
D_3(\widehat{B}_1) = ( \sum_{j=1}^{d} \; \sin^2( \theta_j ) )^{1/2}. 
\end{equation}

In the simulations here, we consider the same four models, Models 1--4, as in Section \ref{s:simulation_dimension}. We first present estimation results for the measure $D_3$ in Table \ref{tab:compare_d3}. For our method, labeled as VC in the table, the stationary subspace estimate is taken as discussed in Section \ref{s:cluster_subspace}. At the population level, the corresponding stationary subspace is  also selected using the same technique. In the case when such multiple subspaces are available at the population level, we compute the distance $D_3$ to all of them and then take the minimum. For the DSSA method, we take the subspace estimate as defined in \cite{sundararajan:2017}.

As seen from the table, the VC method generally performs better than DSSA in yielding smaller average distances over multiple realizations. We stress here again that the VC method is computationally much less intensive than DSSA.

\begin{table}[t]
\begin{center}
\begin{tabular}{ |c|c|c|c|c |c |c|c|c|}
\hline
 & \multicolumn{2}{|c|}{Model 1} & \multicolumn{2}{|c|}{Model 2} & \multicolumn{2}{|c|}{Model 3} & \multicolumn{2}{|c|}{Model 4}  \\
\hline
$T$ & DSSA & VC & DSSA & VC & DSSA & VC & DSSA & VC \\
\hline
\multirow{2}{*}{200} &  0.3277  &  0.3489  &  0.7548  &  0.4963  &  0.3633  &  0.3489 &  0.7453  & 0.6580 \\
 & (0.1429) & (0.1479) & (0.0993) & (0.2348) & (0.1693) & (0.1401)  & (0.0914) & (0.1761) \\
\hline
\multirow{2}{*}{500} &  0.2976  &  0.2878  &  0.6927  &  0.4145  &  0.3575  &  0.3301  &  0.6866  &  0.6196  \\
 & (0.1399) & (0.1288) & (0.1002) & (0.2078) & (0.1657) & (0.1329)  & (0.1348) & (0.1931) \\
\hline
\multirow{2}{*}{1000} &  0.2757  &  0.2968 &  0.6049  &  0.3393 &  0.3241  & 0.2649 &  0.6806  &  0.5655   \\
 & (0.1306) & (0.1301) & (0.1070) & (0.1953) & (0.1453) & (0.1228)  & (0.1175) & (0.2004)  \\
\hline
\multirow{2}{*}{2000} &  0.2678  &  0.2374 &  0.5094  &  0.2751 &  0.2461  &  0.1931 &  0.5932  &  0.4447 \\
 & (0.1323) & (0.1311)  & (0.1015) & (0.1905) & (0.1347) & (0.1202) & (0.1311) & (0.2261)  \\
\hline
\end{tabular}
\end{center}
\caption{Models 1-4: The canonical angle based measure $D_3(\widehat{B}_1)$, between the population and estimated subspace. The empirical standard errors of the respective measures are provided in brackets.} \label{tab:compare_d3}
\end{table}

We now turn to the measures $D_1$ and $D_2$. Here, we shall examine the VC and DSSA methods through these measures from a different perspective. The relevant plots can be found in Figure \ref{fig:m1_compare_d1d2} for Model 1,  associated with the indicated sample sizes. In Figure \ref{fig:m1_compare_d1d2}, top panel, the horizontal solid circles  in the plots represent the measure $D_1(\widehat{B}_1)$  for the DSSA estimate $\widehat{B}_1$, averaged over multiple realizations. The other two curves correspond to the  measure $D_1(\widehat{B}_1(u))$ computed for the estimates of $\widehat{B}_1(u)$ of (1,1)-local stationary subspaces from Section \ref{s:inf-subspace}, either averaged over multiple realizations (triangles) or with the minimum value taken (squares). The interpretation of the plots of Figure \ref{fig:m1_compare_d1d2}, bottom panel, is analogous but for the measure $D_2$. The figures show that VC method performs better than DSSA even ``locally'' for most values of $u$ under measure $D_2$ whereas DSSA performs better under measure $D_1$. Analogous figures for Models 2--4 can be found in \cite{vc-appendix}. The results are similar to Figure \ref{fig:m1_compare_d1d2}, especially for larger sample sizes.

%
%
%

\begin{figure}[!htb]
\minipage{0.24\textwidth}
  \includegraphics[width=\linewidth]{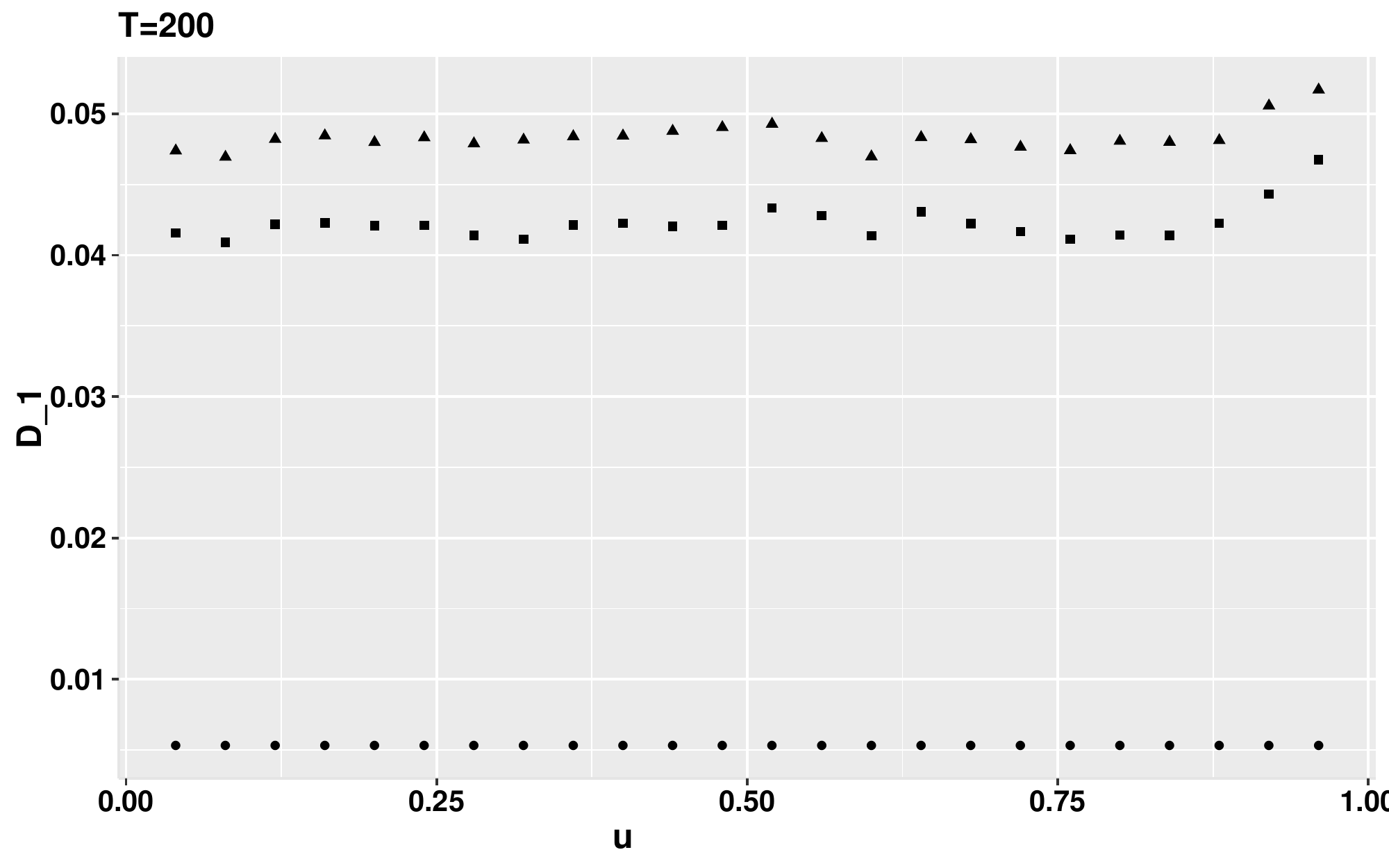}
\endminipage\hfill
\minipage{0.24\textwidth}
  \includegraphics[width=\linewidth]{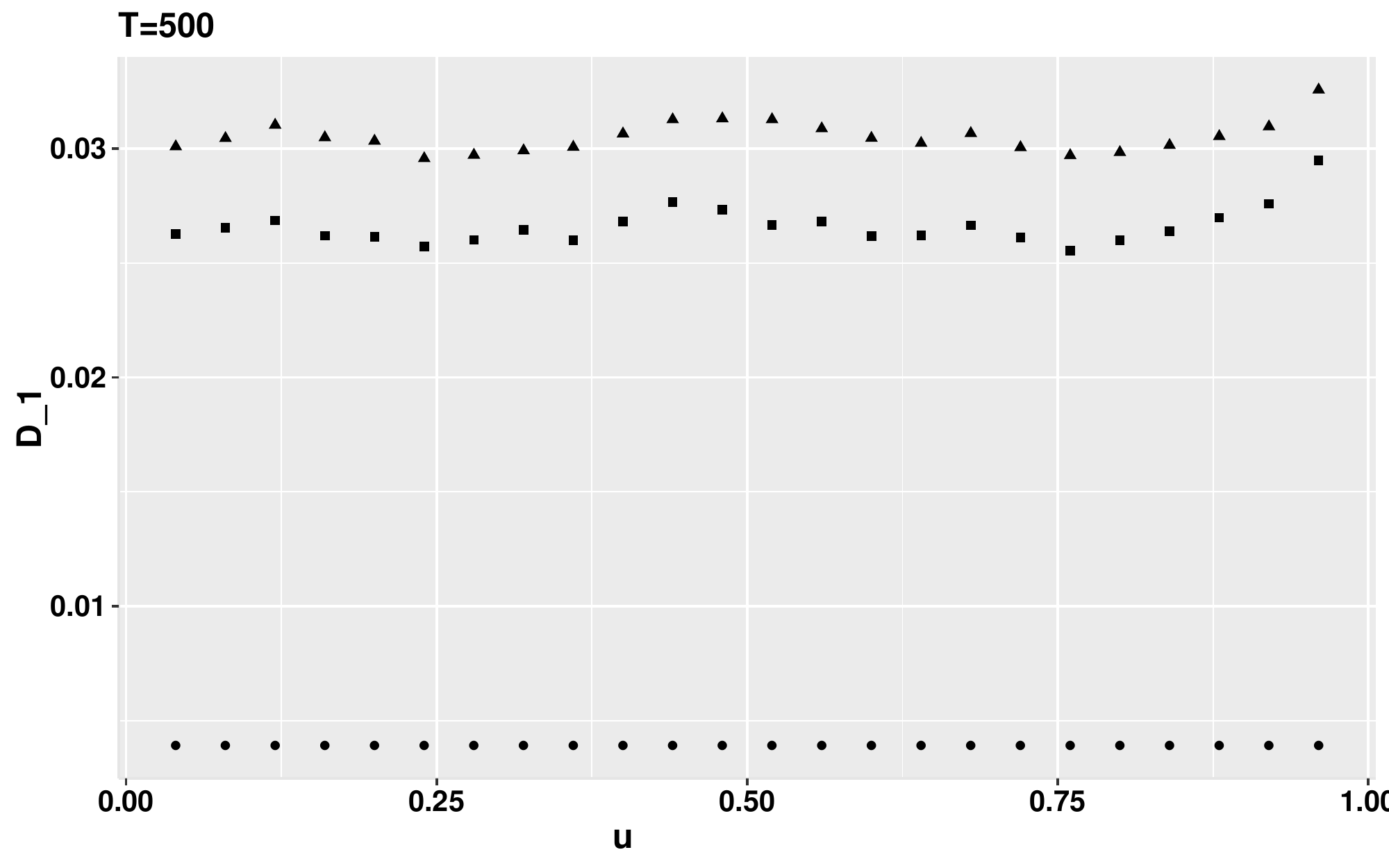}
\endminipage\hfill
\minipage{0.24\textwidth}%
  \includegraphics[width=\linewidth]{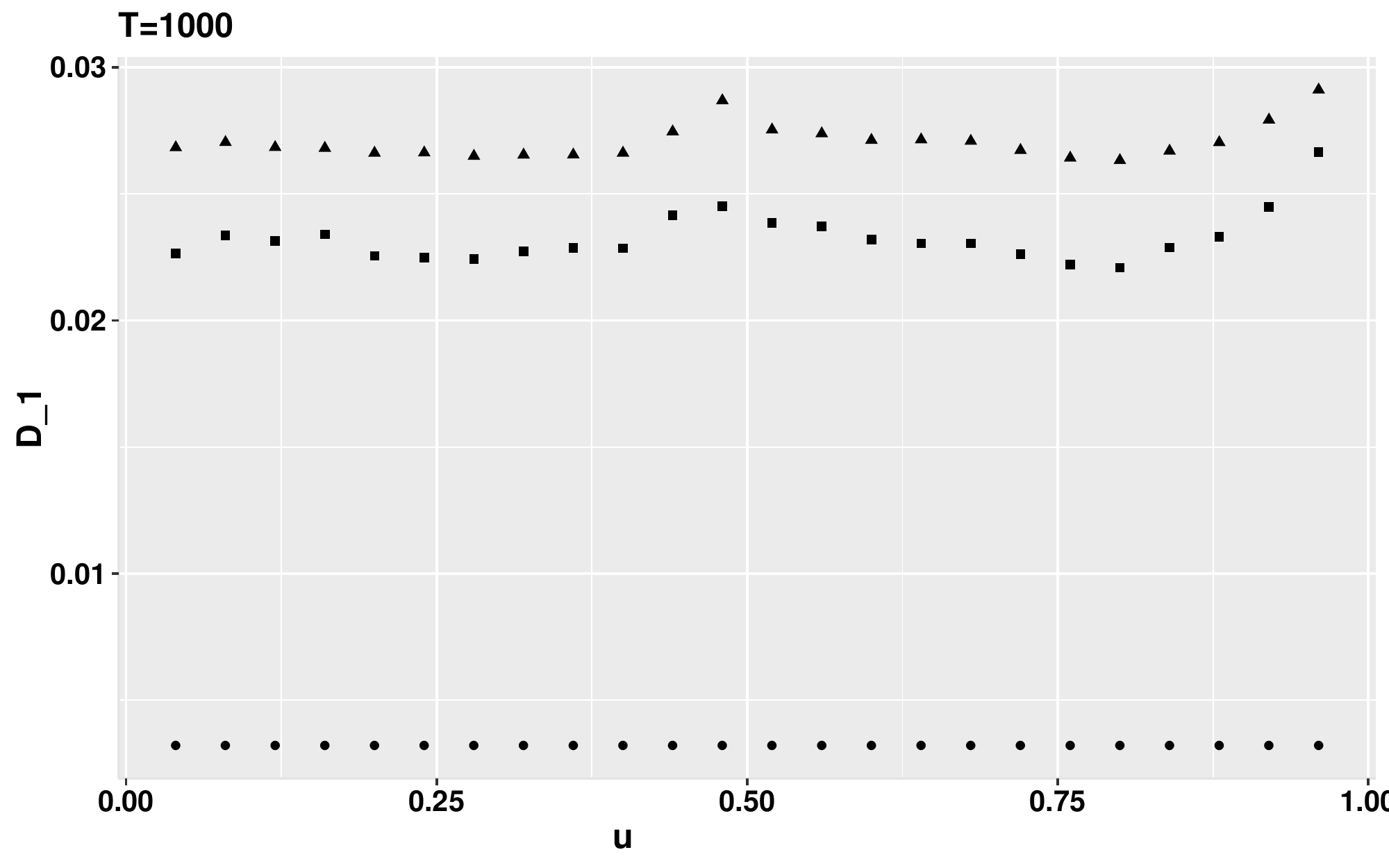}
\endminipage
\minipage{0.24\textwidth}%
  \includegraphics[width=\linewidth]{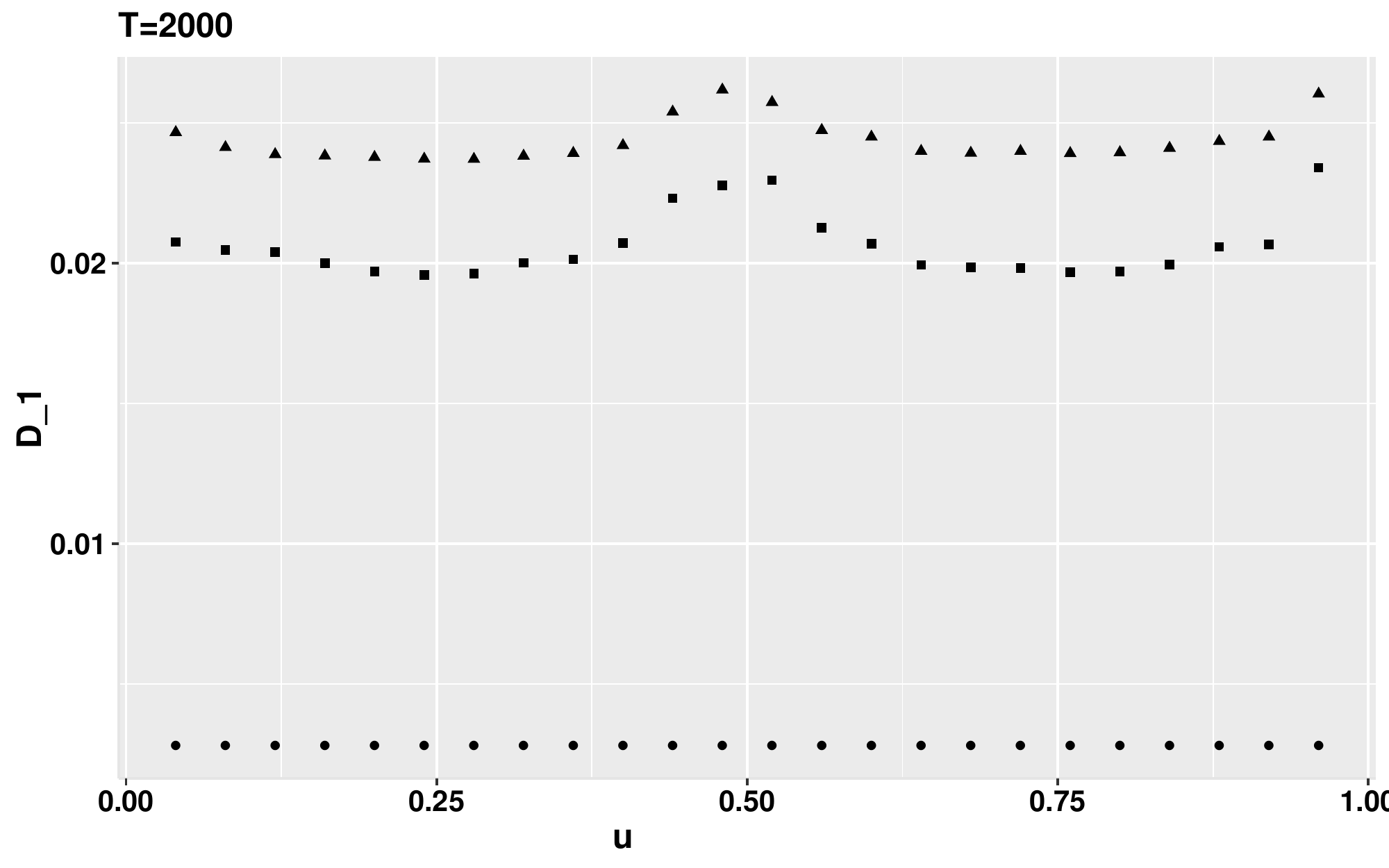}
\endminipage
\hfill
\minipage{0.24\textwidth}
  \includegraphics[width=\linewidth]{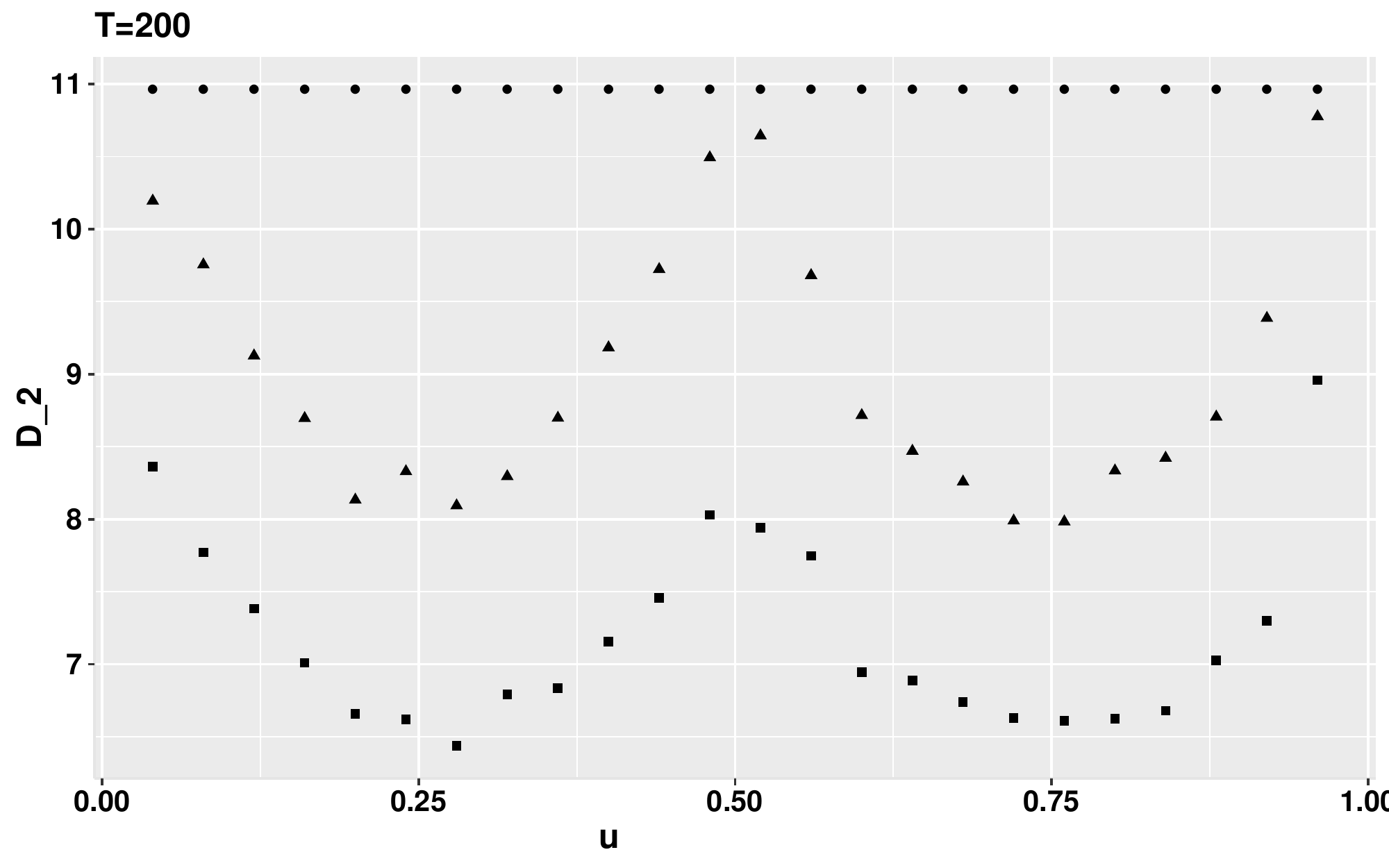}
\endminipage\hfill
\minipage{0.24\textwidth}
  \includegraphics[width=\linewidth]{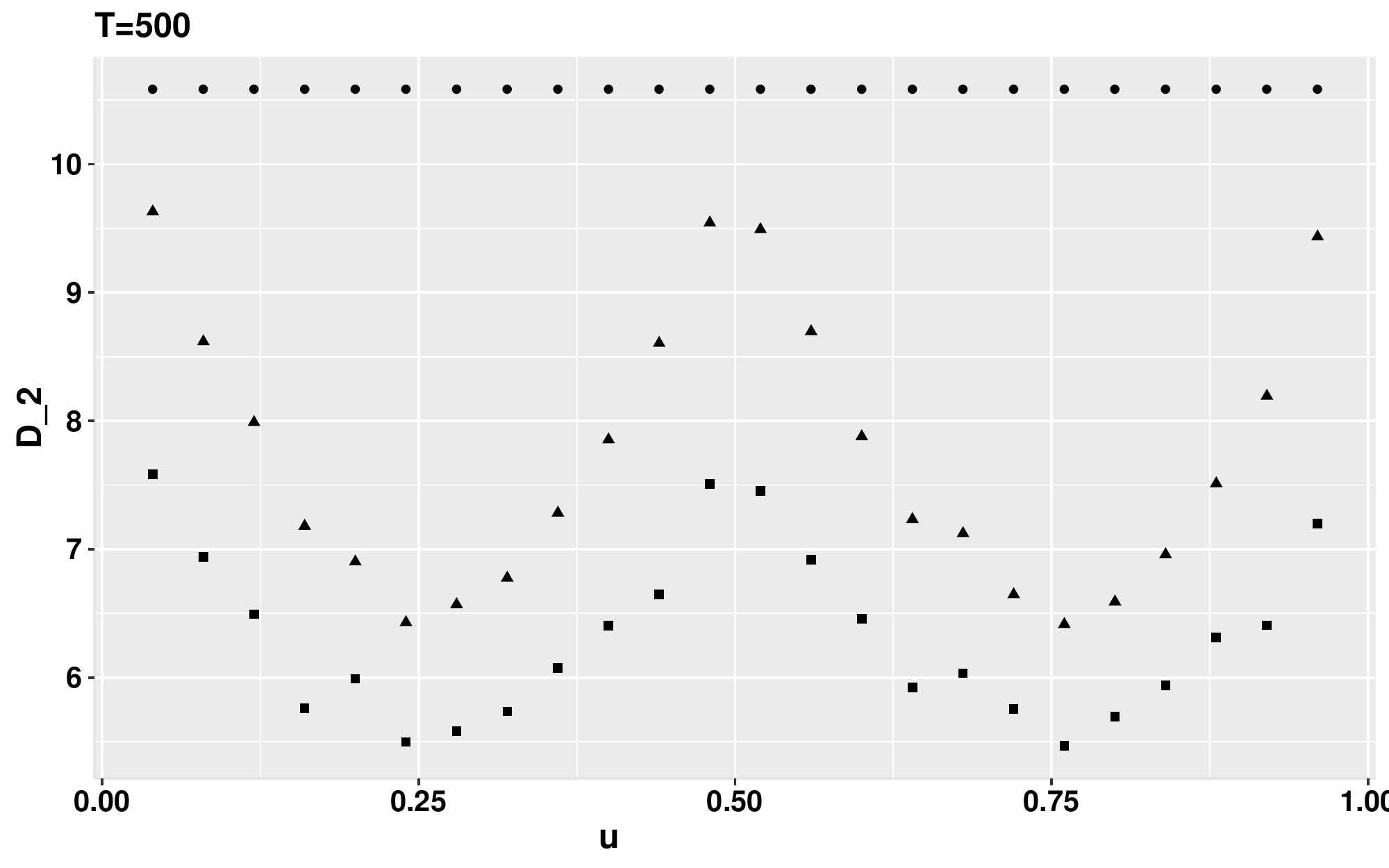}
\endminipage\hfill
\minipage{0.24\textwidth}%
  \includegraphics[width=\linewidth]{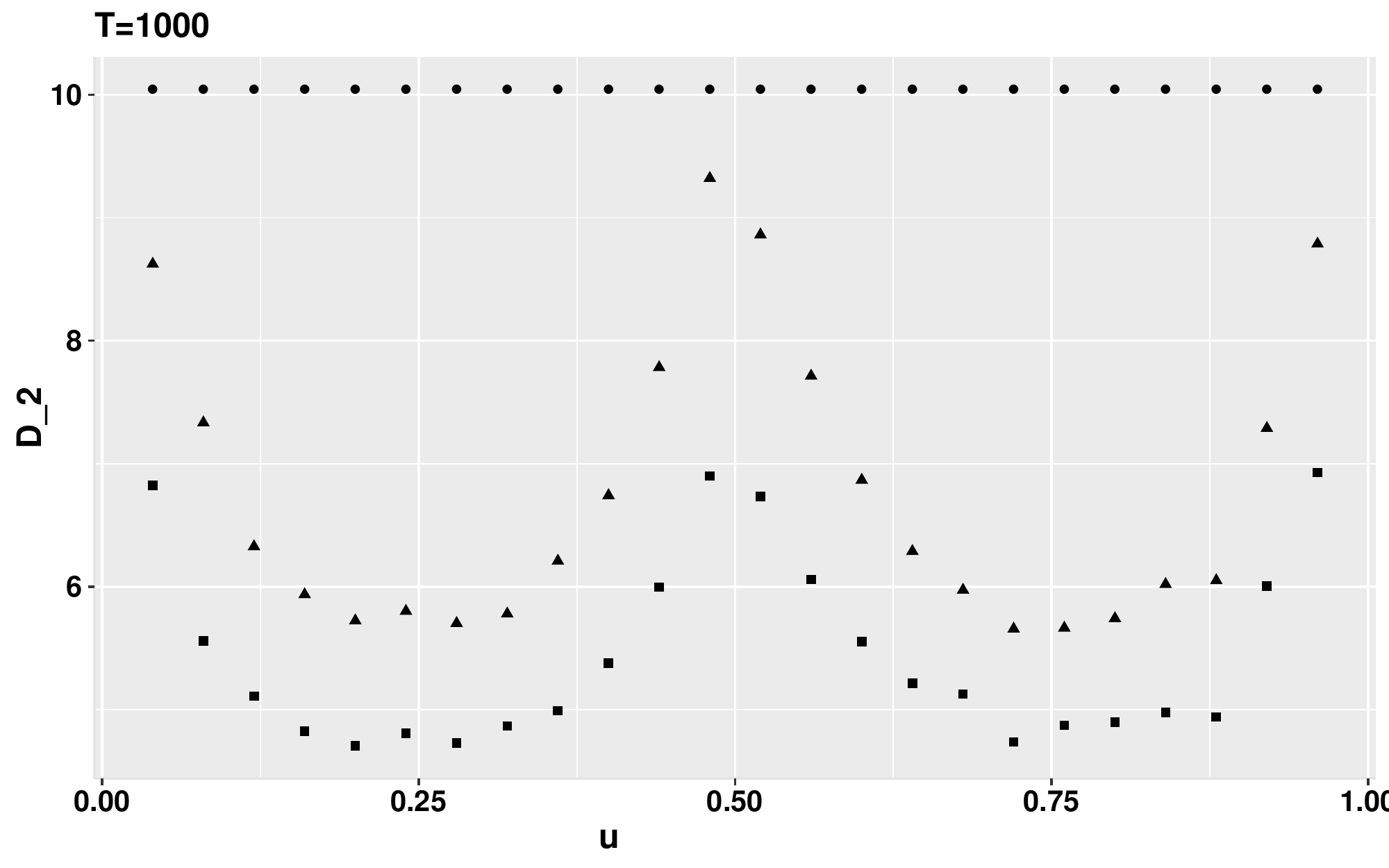}
\endminipage
\minipage{0.24\textwidth}%
  \includegraphics[width=\linewidth]{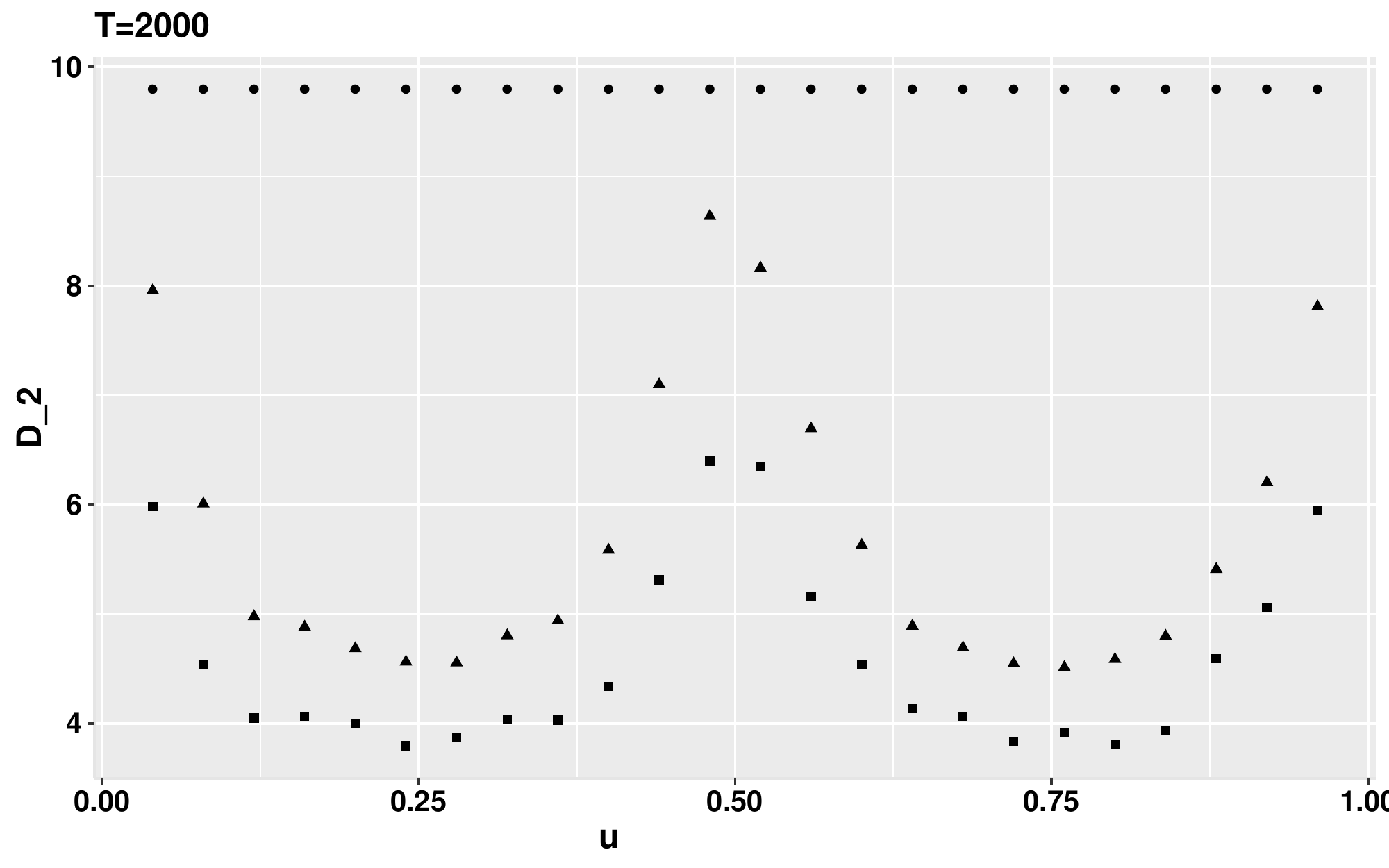}
\endminipage
\caption{Model 1 - Top:  Plot of $D_1(\widehat{B}_{1} (u) )$ against $u$ for the competing methods DSSA and VC and several sample sizes. VC (avg.) in triangles, VC (min.) in squares and DSSA in solid circles. Bottom: Analogous plot but with measure $D_2(\widehat{B}_{1} (u) )$.  } \label{fig:m1_compare_d1d2}
\end{figure}

%
%

\section{Application to BCI and EEG data}
\label{s:application}

Brain-Computer Interface (BCI) aims at connecting the human brain and the computer in a non-invasive manner. During the BCI study used here, individuals are asked to imagine movements with their left and right hands and these are referred to as trials. The trials are interspersed with break periods. The multivariate EEG brain signal is recorded during the entire course of the experiment and the objective is to associate the movements imagined by the individuals with the corresponding EEG signal. Note that the EEG signal is recorded through different channels (locations on the scalp) and each channel constitutes a component of the multivariate signal.

Several works including \citet{slex_ombao}, \citet{ssa09}, \citet{sundararajan:2017} and \citet{sundararajan:neuro} have treated the multivariate EEG signal from such experiments as a nonstationary time series. The advantages of finding a stationary subspace of the observed EEG signal are discussed in the above mentioned works. The key observation made was that nonstationary sources in the brain signal were associated with variations in the mental state that are unrelated to the motor imagery task. Hence, relying solely on stationary components was seen to improve classification accuracy.

We study the classification performance of the proposed  VC method  using the BCI Competition IV \footnote{See http://www.bbci.de/competition/iv/} dataset II in
  \citet{bci4data}. It consists of EEG signals from 9 subjects  performing 4 different
   motor imagery tasks: 1-left hand, 2-right hand, 3-feet and 4-tongue. We analyze
    the EEG signals only from classes 1 and 2 and treat the problem as a two-group
     classification. The continuous signal was sampled at discrete time
     steps at the sampling rate of 250 Hz where 1 second corresponds to 250 observations on the digital signal scale. The signal was recorded through 22 electrodes over the course of the experiment and the signal was band-pass filtered as in \citet{lotte11}. The experiment involved 144 trials for each subject wherein 72 trials belonged to Class 1 (left hand) and the other 72 to Class 2 (right hand). Every trial is followed by an adequate resting period for the subject before the start of the next one. In each trial, we use the observations between 0.5 seconds to 2.5 seconds after the cue instructing the subject to perform the motor imagery task. More precisely, for trial $j$ where $j=1,2 , \hdots , 144$, this interval comprises of 500 observations on the digital signal scale, denoted by $X_t^{(j)}$, $t =1,2,\hdots,500$.

We first restrict attention to 5 EEG electrodes and treat the input signal to have dimension $p=5$. These are 5 locations that  can be viewed as representatives of the different regions on the brain, namely, Frontal (Fz), Pre-Frontal (Pz) and Cortical (C3, C4, Cz).\footnote{See http://www.bbci.de/competition/iv/desc\_ 2a.pdf for additional details on the dataset.} We use the VC method to obtain a $d$ dimensional stationary process where $d<p$. For every trial $j=1,2,\hdots,144$, we have $\{ X_t^{(j)} \} \in \mathbb{R}^{p}$ as the observed multivariate signal and $ \{ Y_t^{(j)} \} \in \mathbb{R}^{d}$ as the stationary transformation.

 We first report on the estimated pseudo dimension $d$ for the observed signals for the 9 subjects in this study labeled S1, S2,$\hdots$, S9. For subjects S3, S5 and S8, the percentage of times the candidate dimensions ($d=0,1,2,3,4$) were estimated by the 2 competing methods, DSSA or VC, out of the 144 trials is provided in the left panel of Figure \ref{fig:bci_dim_plot}. This plot also includes the estimates of $d$ over the first and second halves of the data denoted as VC (First) and VC (Second). It is noted that DSSA always provides a lower estimate of $d$ than the VC method. Similar plots for all 9 subjects can be found \cite{vc-appendix}.


\begin{figure}[t]
\begin{subfigure}{.45\textwidth}
\begin{center}
\includegraphics[scale=0.3]{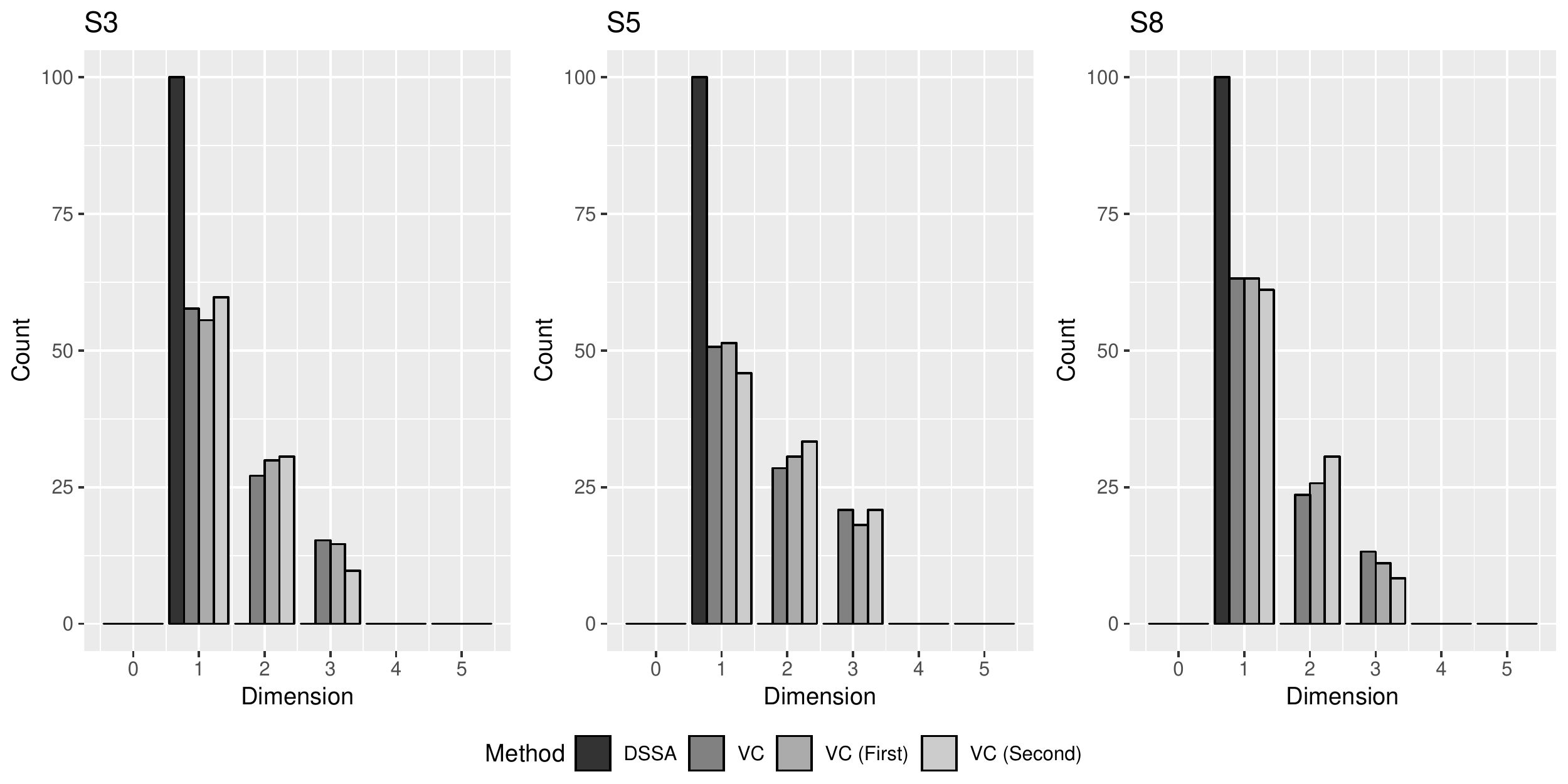}
\end{center}
\end{subfigure}
\begin{subfigure}{.5\textwidth}
\begin{center}
\includegraphics[scale=0.3]{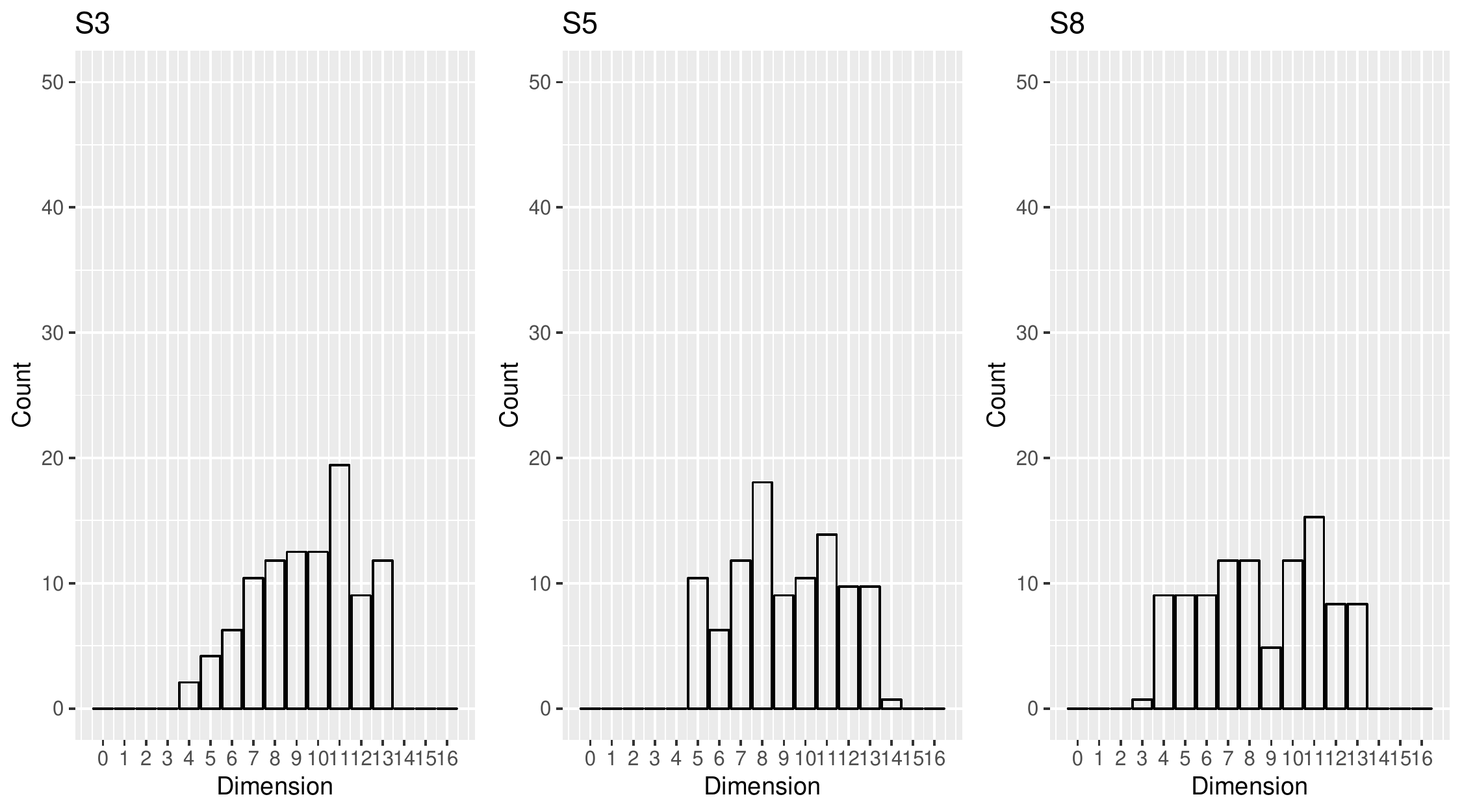}
\end{center}
\end{subfigure}
\caption{ Left: ($p=5$) Histogram of the dimension estimates $d$ by the two competing methods based on the 144 trials. Right: ($p=22$) Histogram of the stationary subspace dimension estimates for the VC method based on the 144 trials.   } \label{fig:bci_dim_plot}
\end{figure}

\begin{center}
\begin{table}[ht]
\begin{minipage}{0.5\hsize}
\begin{center}
\begin{tabular}{ | c | c | c | c | c | }
\hline
$d$ &  &  S3 & S5 & S8  \\
\hline
\multirow{2}{*}{$1$} & DSSA  & 56.25  &  49.03   &  51.11   \\
 & VC   &  50.69 & 52.08 & 46.15  \\
\hline
\multirow{2}{*}{$2$} & DSSA  &  54.86  & 54.16 & 56.45  \\
 & VC   &   48.61 & 55.56 &   45.05   \\
\hline
\multirow{2}{*}{$3$} & DSSA  &  59.02  & 59.33 & 64.39  \\
 & VC   &  47.22 & 57.63 &   54.54   \\
\hline
\multirow{2}{*}{$4$} & DSSA  & 56.25 & 62.50 & 66.28  \\
 & VC  &   65.97  & 64.58 &    59.44  \\
\hline
\end{tabular}
  \end{center}
  \end{minipage}
\begin{minipage}{0.45\hsize}
\begin{center}
\begin{tabular}{ |c | c |c |c| c|}
\hline
$d$ &  & S3 & S5 & S8 \\
\hline
\multirow{2}{*}{7} &  DSSA  &  70   & 60.48  & 68.30  \\
 & VC & 70.83 & 73.61 & 68.53 \\
\hline
\multirow{2}{*}{9} & DSSA &  77.90  & 65.50   & 70.38 \\
 & VC &  72.22 & 81.25 &  71.32 \\
\hline
\multirow{2}{*}{11} & DSSA &   72.92  &  69.58   & 71.33 \\
& VC & 74.30 & 88.19 &  81.25 \\
 \hline
\multirow{2}{*}{13} & DSSA &   85.10  &  70.83   & 78.38  \\
& VC & 85.41  & 89.58 &  84.72 \\
\hline
\end{tabular}
\end{center}
\end{minipage}
\caption{Left: Out-of-sample classification accuracy (in \%) for the 3 subjects S3, S5 and S8 for the two
  indicated methods with $p=5$. Right: Out-of-sample classification accuracy (in \%) for 3 subjects S3, S5 and S8 corresponding to $d$= 7, 9, 11, and 13 for the VC method with $p=22$.}  \label{tab:bci_acc_rate}  
\end{table}
\end{center}


Given the $d$-variate stationary
  processes  $Y_t^{(j)}$, we aim to find differences between the two classes (0 and 1) based on the covariance structure. For a given subject, this is achieved by computing the average spectral density matrices for the two classes over the Fourier frequencies:
 \begin{equation} \label{e:app-distance-measure}
\overline{g^i}(\omega_k) \; = \; \frac{1}{n_i} \; \sum_{j \in \textrm{Class } i} g_j(\omega_k), \;\; i=0,1,
\end{equation}
where $g_j(\omega_k)$ is the estimated $d \times d$ spectral matrix for trial $j$ using observations $\{ Y_t^{(j)} \}$, $n_i=72$, for $i=0,1$ and $\omega_k = \frac{2 \pi k}{500}$, $k=1,2,\hdots,500$, are the Fourier frequencies.

In order to train a classifier, for every trial $j=1,2,\hdots,144$, a distance vector  $p_{j,AB} = (p_{0,j,AB},p_{1,j,AB})$ is computed, where
\begin{equation} \label{e:app-logistic-model}
p_{i,j,AB} \; = \frac{1}{250} \; \sum_{k=1}^{250} \; || \; g_j(\omega_k) - \overline{g^i}(\omega_k)  \; ||_F^2, \; \; i=0,1,
\end{equation}
 and $||\cdot||_F$ is the Frobenius norm of a matrix. It measures the distance to the center of each of the two classes. This distance measure serves as our two-dimensional feature vector to be used in constructing
 a logistic regression classifier and assessing its out-of-sample
 classification accuracy.

Table \ref{tab:bci_acc_rate} (left) shows the out-of-sample classification accuracies for three subjects corresponding to  $d = 1, 2, 3, 4 $. A similar table containing results for all 9 subjects can be found in \cite{vc-appendix}. The accuracy rates reflect a comparable average performance in the two competing methods. Finally, the accuracy rate increases as the pseudo dimension $d$ increases from 1 to 4, a phenomenon witnessed and discussed in \citet{ssa09} and \citet{sundararajan:2017}. In dealing with brain signals from healthy individuals in this experiment, the nonstationarity is believed to be caused by artifacts such as as fatigue, physical movement, blinking. Hence, more stationary sources means greater elimination of nonstationary sources in the signal that are unrelated to the experimental task at hand.

The results in Figure \ref{fig:bci_dim_plot} (right) and Table \ref{tab:bci_acc_rate} (right) are analogous to those on their respective left panels but taking the input signal to have dimension $p=22$, that is, without restricting attention to 5 EEG electrodes. For the  out-of-sample classification accuracies, the values  $d = 7, 9, 11, 13 $ are considered. The results in Table \ref{tab:bci_acc_rate} (right) indicate a better performance of VC in comparison to results on the same dataset (with $p=22$) in \citet{sundararajan:2017}.



It must be noted that the results in Table \ref{tab:bci_acc_rate} is based on an estimated stationary subspace for a pre-fixed dimension $d$. The estimates of the dimensions of the stationary subspaces from \eqref{e:global-d}, namely $\widehat{d}_0$, $\widehat{d}_+$ and $\widehat{d}_-$, could potentially be different across different trials. For each trial, we obtained these estimates and compared them with the fixed $d$ and investigated results for several different combinations of  $\widehat{d}_0$, $\widehat{d}_+$ and $\widehat{d}_-$ that results in $d = \widehat{d}_0 + \min(\widehat{d}_+ , \widehat{d}_-)$. Having witnessed very similar results for the various combinations, we only present, in Table \ref{tab:bci_acc_rate}, the case wherein $d = \widehat{d_0}$.

\section{Concluding remarks}
\label{s:conclusion}

Our goal in this work is to (i) study existence of  linear combinations of components of a multivariate nonstationary
process which are stationary, and (ii) find the number of such stationary linear combinations. The true nature of the problem and richness of
its solution present themselves naturally when the general dependence setup in \cite{sundararajan:2017} is abandoned in favor of heterogenous independent
observations. In this simplified setup (\citet{ssa09}), solution of  the problem reduces to the  study of inertia or signs of the eigenvalues and the corresponding eigenvectors of certain symmetric time-varying matrices constructed  from varying covariance or heterogeneity of the vector observations.
This enable us to provide a direct linear-algebraic method to construct stationary subspaces which
outperforms the earlier computationally more expensive optimization-based SSA solutions.

Several directions related to this work could be explored in the future. The developed framework involving pseudo spaces and dimensions is general enough to apply to zero mean locally stationary processes, when working with their time-frequency spectra. Another possibility is to try to combine the models of this work and \citet{ssa-means}, so that both mean and the covariance are allowed to vary in time. Yet another direction is to explore connections to multivariate stochastic volatility models that are concerned with modeling the changing covariance across time. 



\appendix
\section{Proofs}
\label{s:proofs}

The appendix concerns the technical aspects of this work, including various assumptions and proofs of some of the stated results.

\subsection{Assumptions and proof of Proposition \ref{p:M-loctest-vc-an}}
\label{s:proofs-local}

We use the following assumptions for Proposition \ref{p:M-loctest-vc-an}, labeled according to the quantities they concern.  
\begin{itemize}
	\item [(Y1)] $Y_t$, $t=1,\ldots,T$, are i.i.d.\ random vectors with i.i.d.\ entries, $\EE(Y_t)=0$ and $\EE(Y_t Y_t') = I_p$. 
	\item [(Y2)] The entries of $Y_t$ have finite absolute moments of order $4+\epsilon$ for some $\epsilon>0$.
	\item [(K)] The kernel function $K$ is even (i.e.\ $K(u)=K(-u)$, $u\in\RR$), has bounded support $(-S,S)$, where it is positive, and is continuously differentiable on $(0,S)$. Furthermore, $\int_\RR K(u) du =1$.
	\item [(A1)] The matrix $A^2(u)$ is positive definite for $u\in (0,1)$.
	\item [(A2)] The matrix $A^2(u)$ is continuously differentiable for $u\in (0,1)$.  
\end{itemize}
We note that (A1) and (A2) imply the continuous differentiability of $A(u)$, $A(u)^{-1}$ and $A(u)^{-2}$ for $u\in (0,1)$.

\medskip

\noindent {\sc Proof of Proposition \ref{p:M-loctest-vc-an}:} In view of (\ref{e:models-M}) and (\ref{e:model-vc-Mu-estim}), write
\begin{equation}\label{e:proofs-local-MS}
	\widehat M(u) - M(u) = S_1 + S_2 - S_3 - S_4,	
\end{equation}
where 
\begin{eqnarray}
	S_1 & = & \frac{1}{T}\sum_{t=1}^T (X_tX_t' - A^2(\frac{t}{T})) K_h(u-\frac{t}{T}),\nonumber \\
	S_2 & = & \frac{1}{T}\sum_{t=1}^T A^2(\frac{t}{T}) K_h(u-\frac{t}{T}) - A^2(u),\nonumber \\
	S_3 & = & \frac{1}{T}\sum_{t=1}^T (X_tX_t' - A^2(\frac{t}{T})),\nonumber \\
	S_4 & = & \frac{1}{T}\sum_{t=1}^T A^2(\frac{t}{T}) - \int_0^1 A^2(u) du. \label{e:proofs-local-MS-list}
\end{eqnarray}
We will show that $(Th)^{1/2}S_j\to 0$ for $j=2,4$, and $\to_p 0$ for $j=3$, and that $S_1$ yields the normal limit in (\ref{e:M-loctest-vc-an}). This shall prove the convergence (\ref{e:M-loctest-vc-an}). In what follows, $C$, $C'$ and so on, denote generic positive constants that can change from line to line.

Since the derivatives of the entries of $A(u)$ (and $A^2(u)$) are assumed continuous and hence bounded on $(0,1)$, it follows that $|S_4|\leq C/T$ and hence indeed $(Th)^{1/2}S_4\to 0$, since $T\to\infty$ and $h\to 0$. Since by assumptions, $Y_t$ are i.i.d.\ and their entries have finite fourth moments, and $A(u)$ is bounded on $(0,1)$, we have
$$
\EE \|S_3\|_2^2 = \frac{1}{T^2} \EE \Big\| \sum_{t=1}^T A(\frac{t}{T}) (Y_tY_t'-I_p)A(\frac{t}{T})'\Big\|_2^2 \leq \frac{C}{T^2} \sum_{t=1}^T  \EE \|Y_t Y_t'-I_p\|_2^2 = \frac{C'}{T}.
$$
Thus, $S_3 = O_p(T^{-1/2})$ and $(Th)^{1/2}S_3 = O_p(h^{1/2})=o_p(1)$. For $S_2$, write $S_2 = S_{2,1}+S_{2,2}$, where 
$$
S_{2,1} = \frac{1}{T} \sum_{t=1}^T (A^2(\frac{t}{T}) - A^2(u)) K_h(u-\frac{t}{T}),\quad S_{2,2} = A^2(u) \Big( \frac{1}{T} \sum_{t=1}^T  K_h(u-\frac{t}{T}) - 1 \Big).
$$
Since the entries of $A^2(u)$ have bounded first derivatives, we have
$$
|S_{2,1}| \leq \frac{C}{T} \sum_{t=1}^T |u - \frac{t}{T}| K_h(u-\frac{t}{T}) = \frac{Ch}{T} \sum_{t=1}^T G_{1,h}(u-\frac{t}{T}) = O(h),
$$
where $G_1(u) = |u| K(u)$, $G_{1,h}(u) = h^{-1}G_1(h^{-1}u)$ and we used Lemma \ref{l:sum-int-kernel} below in the last step. Thus, $(Th)^{1/2}S_{2,1}\to 0$ since $Th^3\to 0$ by assumption. Similarly, $S_{2,2}=O((Th)^{-1})$. 

Finally, we show that $S_1$ gives the desired asymptotic limit. For this, we consider $A(u)^{-1}S_1A(u)'^{-1}$ and express it as
$$
A(u)^{-1}S_1A(u)'^{-1} = S_{1,1} + S_{1,2} + S_{1,3} + S_{1,4},
$$
where 
\begin{eqnarray*}
	S_{1,1} & = & \frac{1}{T} \sum_{t=1}^T (A(u)^{-1} - A(\frac{t}{T})^{-1}) (X_tX_t'-A^2(\frac{t}{T})) (A(u)'^{-1} - A(\frac{t}{T})'^{-1}) K_h(u-\frac{t}{T}), \\
	S_{1,2} & = & \frac{1}{T} \sum_{t=1}^T A(\frac{t}{T})^{-1} (X_tX_t'-A^2(\frac{t}{T})) (A(u)'^{-1} - A(\frac{t}{T})'^{-1}) K_h(u-\frac{t}{T}), \\
	S_{1,3} & = & \frac{1}{T} \sum_{t=1}^T (A(u)^{-1} - A(\frac{t}{T})^{-1}) (X_tX_t'-A^2(\frac{t}{T})) A(\frac{t}{T})'^{-1} K_h(u-\frac{t}{T}), \\
	S_{1,4} & = & \frac{1}{T} \sum_{t=1}^T A(\frac{t}{T})^{-1} (X_tX_t'-A^2(\frac{t}{T}))  A(\frac{t}{T})'^{-1} K_h(u-\frac{t}{T})  = \frac{1}{T} \sum_{t=1}^T (Y_tY_t'-I_p)  K_h(u-\frac{t}{T}).
\end{eqnarray*}
We will argue that $S_{1,1}$, $S_{1,2}$ and $S_{1,3}$ are asymptotically negligible for the limit, and that $S_{1,4}$ gives the desired normal limit.

For $S_{1,1}$, let $K_2(u)=K^2(u)/\|K\|_2^2$ be a kernel function obtained from the square of $K$ and $K_{2,h}(u) =  h^{-1} K_2( h^{-1} u)$. As in dealing with $S_2$ above, note that $\EE\|X_tX_t' - A^2(t/T)\|_2^2$ is bounded by a constant uniformly over $t=1,\ldots,T$. Then,
$$
\EE \|S_{1,1}\|_2^2 = \frac{1}{T^2}\sum_{t=1}^T \EE \Big\| (A(u)^{-1} - A(\frac{t}{T})^{-1}) (X_tX_t'-A^2(\frac{t}{T})) (A(u)'^{-1} - A(\frac{t}{T})'^{-1}) \Big\|_2^2 K_h^2(u-\frac{t}{T})
$$
$$
\leq \frac{\|K\|_2^2}{Th} \frac{1}{T} \sum_{t=1}^T \EE 
\Big\| X_tX_t'-A^2(\frac{t}{T}) \Big\|_2^2 
\Big\| A(u)^{-1} - A(\frac{t}{T})^{-1} \Big\|_2^4 K_{2,h}(u-\frac{t}{T})
$$
$$
\leq \frac{C}{Th} \frac{1}{T} \sum_{t=1}^T 
\Big\| A(u)^{-1} - A(\frac{t}{T})^{-1} \Big\|_2^4 K_{2,h}(u-\frac{t}{T})
$$
$$
\leq \frac{C}{Th} \frac{1}{T} \sum_{t=1}^T 
(u - \frac{t}{T})^2 K_{2,h}(u-\frac{t}{T}) = \frac{Ch}{T} \frac{1}{T} \sum_{t=1}^T 
G_{2,h}(u-\frac{t}{T}),
$$
where $G_2(u)=u^2K_2(u)$, $G_{2,h}(u) = h^{-1} G_2( h^{-1} u)$ and we used the assumed smoothness of $A(u)^{-1}$. By using Lemma \ref{l:sum-int-kernel} below, we have $S_{1,1}=O_p((h/T)^{1/2})$ and hence $(Th)^{1/2}S_{1,1}=O_p(h)=o_p(1)$. One can show similarly that 
$S_{1,2}=O_p(1/T^{1/2})$ and $S_{1,3}=O_p(1/T^{1/2})$, and hence that $(Th)^{1/2}S_{1,2}$ and $(Th)^{1/2}S_{1,3}$ are negligible as well. We next analyze the last term $S_{1,4}$.

To establish the asymptotic normality of $(Th)^{1/2}S_{1,4}$, it is enough to check the Lyapunov condition and to make sure that the variances converge to the desired quantities. For the Lyapunov condition, note that, with $\delta>0$,
$$
\sum_{t=1}^T \EE \Big\| \frac{(Th)^{1/2}}{T} (Y_tY_t'-I_p) K_h(u-\frac{t}{T}) \Big\|_{2+\delta}^{2+\delta} \leq \frac{C(Th)^{1+\delta/2}}{T^{2+\delta}} \sum_{t=1}^T  K^{2+\delta}_h(u-\frac{t}{T}) 
$$ 
$$
= \frac{C'}{(Th)^{\delta/2}} \frac{1}{T} \sum_{t=1}^T K_{2+\delta,h}(u-\frac{t}{T}), 
$$
where $K_{2+\delta}(u)=K(u)^{2+\delta}/\|K\|_{2+\delta}^{2+\delta}$. The last bound converges to $0$ since $Th\to\infty$ and since the sum in the bound converges to $\int_\RR K_{2+\delta}(u)du$ by Lemma \ref{l:sum-int-kernel} below. For the convergence of the variances, note that the different entries of the symmetric matrix $Y_tY_t'-I_p$ are independent, which translates into them being uncorrelated in the limit, as stated in the proposition. We thus need to consider the variances of 
$$
\frac{(Th)^{1/2}}{T} \sum_{t=1}^T (Y_{1,t}^2 - 1) K_h(u-\frac{t}{T}) =: S_{\scriptsize \rm diag}, \; 
\frac{(Th)^{1/2}}{T} \sum_{t=1}^T Y_{1,t} Y_{2,t} K_h(u-\frac{t}{T}) =: S_{\scriptsize \rm off}
$$
for the diagonal and off-diagonal asymptotic variances, respectively. For $S_{\scriptsize \rm diag}$, we have
$$
\EE S_{\scriptsize \rm diag}^2 = \frac{Th}{T^2} \sum_{t=1}^T \EE(Y_{1,t}^2 - 1)^2 K_h^2(u-\frac{t}{T}) = \mu_4 \|K\|_2^2 \sum_{t=1}^T K_{2,h}(u-\frac{t}{T}), 
$$
where again, the kernel $K_2(u) = K^2(u)/\|K\|_2^2$. By Lemma \ref{l:sum-int-kernel} below, $\EE S_{\scriptsize \rm diag}^2 \to \mu_4 \|K\|_2^2$, which matches the diagonal variance in (\ref{e:M-loctest-vc-an}). For $S_{\scriptsize \rm off}$, we have similarly
$$
\EE S_{\scriptsize \rm off}^2 = \frac{Th}{T^2} \sum_{t=1}^T \EE Y_{1,t}^2 \EE Y_{2,t}^2 K_h^2(u-\frac{t}{T}) = \|K\|_2^2 \sum_{t=1}^T K_{2,h}(u-\frac{t}{T})\to \|K\|_2^2.  
$$
This concludes the proof of the convergence (\ref{e:M-loctest-vc-an}).

Finally, the consistency of $\widehat A^2(u)$ follows from the proved convergence to $0$ of the terms $S_1$ and $S_2$ above. This implies the consistency of $\widehat A^{-2}(u)$ since $A^2(u)$ is assumed to be positive-definite. Since the square root operation is a continuous functional, we also get the consistency of $\widehat A(u)$ and $\widehat A^{-1}(u)$.
\quad \quad $\Box$

\medskip

Part $(i)$ of the following auxiliary lemma was used in the proof above. Its part $(ii)$ will be used in Appendices \ref{s:proofs-global} and \ref{s:proofs-global-2} below.

\begin{lemma}\label{l:sum-int-kernel}
Let $G:\RR\to\RR$ be even ($G(-v)=G(v)$, $v\in \RR$), have bounded support $(-S,S)$ and be continuously differentiable on $(0,S)$. Let also $G_h(v) = \frac{1}{h}G(\frac{v}{h})$, $v\in \RR$, with $h>0$ such that $h\to 0$ and $Th\to\infty$. Then:

$(i)$ for $u\in (0,1)$,
$$
\Big| \frac{1}{T} \sum_{t=1}^T G_h(u-\frac{t}{T}) - \int_{-S}^S G(v) dv\Big| \leq \frac{C}{Th}. 
$$

$(ii)$ The bound also hold uniformly for $u\in {\cal H}$, where ${\cal H}$ is a closed subinterval in $(0,1)$.
\end{lemma}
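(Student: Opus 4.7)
The plan is to recognize the normalized sum as a Riemann sum for $\int_{-S}^{S} G(v)\,dv$ after the change of variables $v = (u - t/T)/h$. Writing
\begin{equation*}
\frac{1}{T}\sum_{t=1}^T G_h\bigl(u - \tfrac{t}{T}\bigr) = \frac{1}{Th}\sum_{t=1}^T G(v_t),\qquad v_t := \frac{u - t/T}{h},
\end{equation*}
the $v_t$'s form an equally spaced grid of step $\Delta := 1/(Th)$, and only the $t$'s with $t/T\in (u-hS,\,u+hS)$ contribute, since $G$ vanishes outside $(-S,S)$. For $u\in (0,1)$ fixed and $h$ eventually small enough that $u - hS > 0$ and $u + hS < 1$, every relevant index lies inside $\{1,\ldots,T\}$, so the nonzero terms cover a $\Delta$-spaced grid that exhausts the support $(-S,S)$ up to one truncated cell of length at most $\Delta$ at each end.

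Next I would bound the Riemann-sum error. Because $G$ is even and continuously differentiable on $(0,S)$, the only regularity gap is at $v=0$, so I would split the sum into the parts with $v_t>0$ and $v_t<0$ (treating at most one $v_t=0$ as an $O(\Delta)$ remainder). On each half, $G$ is $C^1$ on an interval of length $S$ with $\|G\|_\infty$ and $\|G'\|_\infty$ finite, and the standard estimate for a Riemann sum of a $C^1$ integrand yields
\begin{equation*}
\Bigl|\Delta\sum_{v_t\in (0,S)} G(v_t) - \int_0^S G(v)\,dv\Bigr| \;\leq\; C\,\Delta\,(\|G\|_\infty + S\|G'\|_\infty),
\end{equation*}
with an analogous bound on $(-S,0)$ by evenness. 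Adding the two halves and the two boundary remainders (each $\leq \|G\|_\infty\cdot\Delta$) produces the desired $O(1/(Th))$ bound and proves part $(i)$.

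For part $(ii)$, the argument is unchanged except that the threshold on $h$ needed in the exhaustion step must be taken uniform over $u\in \mathcal{H}$. This is where compactness of $\mathcal{H}\subset (0,1)$ enters: $\delta := \min_{u\in\mathcal{H}}\min(u,1-u) > 0$, so $h < \delta/S$ (which eventually holds since $h\to 0$) suffices simultaneously for every $u\in\mathcal{H}$. The constants $\|G\|_\infty$, $\|G'\|_\infty$, $S$ appearing in the Riemann-sum estimate are intrinsic to $G$ and independent of $u$, giving the uniform bound. The main (and only real) obstacle is the routine bookkeeping at the boundary of the support and at the non-differentiability point $v=0$; both are absorbed into the $O(\Delta)$ remainder, and the rest is a textbook Riemann-sum calculation.
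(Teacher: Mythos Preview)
Your proof is correct and follows essentially the same route as the paper's: both rewrite the sum as a Riemann sum for $\int_{-S}^S G(v)\,dv$ with mesh $\Delta = 1/(Th)$ and then invoke the $C^1$ Riemann-sum error bound, using for part $(ii)$ that the inclusion $(u-hS,u+hS)\subset(0,1)$ holds for all small $h$ uniformly over the compact $\mathcal{H}$. You are in fact slightly more careful than the paper, which does not explicitly address the possible non-differentiability of $G$ at $v=0$; your split into $v_t>0$ and $v_t<0$ handles this cleanly.
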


\begin{proof}
The result $(i)$ follows from: for large enough $T$,
$$
\frac{1}{T} \sum_{t=1}^T G_h(u-\frac{t}{T}) = \frac{1}{Th} \sum_{t=1}^T G\Big(\frac{u-\frac{t}{T}}{h}\Big) = \frac{1}{Th} \sum_{t=1}^T G\Big(\frac{Tu-t}{Th}\Big) 
$$
$$
= \frac{1}{Th} \sum_{s=T(u-1)}^{Tu-1} G\Big(\frac{s}{Th}\Big) =  \sum_{s=-STh}^{STh} G\Big(\frac{s}{Th}\Big) \frac{1}{Th},
$$
since, for example, $Tu-1>STh$ for large enough $T$. The last sum above converges to $\int_{-S}^S G(v) dv$ no slower than the given rate since $Th\to\infty$ and $G$ is continuously differentiable. The part $(ii)$ follows in the same way since $Tu-1>STh$ for large enough $T$ uniformly for $u\in{\cal H}$.
\end{proof}

\subsection{Assumptions and proof of Proposition \ref{p:global-an}}
\label{s:proofs-global}

The proof of Proposition \ref{p:global-an} follows the path taken by \cite{donald:2011local}; see, in particular, its technical appendix online. We shall try to minimize repetitions by indicating only the key needed assertions. Some of the developments will be somewhat simpler since many of the considered matrices are symmetric. But we shall also need several new auxiliary results to account for the key difference from \cite{donald:2011local} in that smoothing through a kernel is carried out here in time $t$ while this was over the values of a random variable in \cite{donald:2011local}. Some of the auxiliary results for the proof of Proposition \ref{p:global-an} can be found in Appendix \ref{s:proofs-global-2} and the reader interested in proofs might want to look at them first, before going through the arguments in this section. 

We assume implicitly as stated in Section \ref{s:inf-dimens-global} that ${\cal H}$ is a closed subinterval of $(0,1)$. We shall also need to strengthen Assumption (A2) in Appendix \ref{s:proofs-local} to:
\begin{itemize}
	\item [(A2g)] The entries of the matrix $A^2(u)$ are real analytic for $u\in {\cal H}$.
\end{itemize}
According to one possible definition, a function $f$ is real analytic if its Taylor series converges to the function $f$ in a neighborhood of each point. As noted for similar assumptions G4, G5 in \cite{donald:2011local}, analyticity is assumed to have smoothness of the eigenvectors of analytic matrices involving $A^2(u)$. It is well known that smoothness of a matrix is not sufficient to have smooth eigenvectors (see, e.g., \cite{kato:1976}, \cite{bunse:1991numerical}). Alternatively, the smoothness of the eigenvectors of interest can be assumed. 
 
A number of comments concerning the  notation are also in place. To simplify the notation, we shall drop the dependence on $u$, and write $\widehat A$, $A$, $\widehat M$, $M$, $\widehat \xi_r$, $\widehat \gamma_{2,i}$, etc.\ instead of $\widehat A(u)$, $A(u)$, $\widehat M(u)$, $M(u)$, $\widehat \xi_r(u)$, $\widehat \gamma_{2,i}(u)$, etc.\ Similarly, $\sup$ will denote $\sup_{u\in {\cal H}}$ and $\xi = O_{p,{\scriptsize \rm sup}}(b_T)$ will stand for $\sup|\xi| = O_p(b_T)$. As throughout this work, $G_h(u)$ will stand for a scaled kernel function $h^{-1}G(h^{-1}u)$. The kernel function $G$ will be normalized to integrated to $1$ when it is important. We shall use both $K$ in Assumption (K) of Appendix \ref{s:proofs-local} and other functions related to $K$, which will be denoted as
$$
\overline K(u) = \int_\RR K(u-v) K(v) dv,\quad K^{(q)}_p(u) = C |u|^q K^p(u),
$$
where $C$ is such that $\|K^{(q)}_p\|_1 = 1$. Note that $K^{(0)}_1(u)=K(u)$. When $q=0$, we shall simply write $K_{p}(u)$.

\medskip
\noindent {\sc Proof of Proposition \ref{p:global-an}:} We shall first prove (\ref{e:global-an-1}). The first step consists of showing that the ordered eigenvalues $0\leq \widehat \gamma_{2,1}\leq \ldots \leq \widehat \gamma_{2,r}$ entering $\widehat \xi_r$ can be replaced in the asymptotic limit by the ordered eigenvalues $0\leq \widehat \lambda_{2,1}\leq \ldots \leq \widehat \lambda_{2,r}$ of the $r\times r$ matrix
\begin{equation}\label{e:proofs-global-CM}
	D'(\widehat M-M)D^2 (\widehat M-M)D = D' \widehat M D^2 \widehat M D,
\end{equation}
where $D$ is a $p\times r$ matrix described below. The key here is that the matrix (\ref{e:proofs-global-CM}) is $r\times r$ so that the sum of its $r$ eigenvalues is just its trace, which is amenable to easier manipulations. The matrix $D$ will also play an important role of standardization.

The $p\times r$ matrix $D$ and another $p\times (p-r)$ matrix $\widetilde D$ enter into a $p\times p$ matrix $D_0 = (\widetilde D ,  D)$ characterized as follows: $D_0$ consists of the ``eigenvectors'' associated with the eigenvalues $\gamma_{2,i}$ through the characteristic equation 
\begin{equation}\label{e:proofs-global-FM}
	|(FMF)^2 - \gamma_{2,i}I_p| = |FMF^2MF - \gamma_{2,i}I_p| = |MF^2M-\gamma_{2,i}F^{-2}|=0
\end{equation}
that satisfy
\begin{equation}\label{e:proofs-global-CF}
	D_0'F^{-2}D_0= F^{-1}D_0^2 F^{-1} =I_p.
\end{equation}
Said differently, $F^{-1}D_0$ consists of the eigenvectors of $(FMF)^2$, that is,
\begin{equation}\label{e:proofs-global-MCF}
	F MF^2M D_0 = F^{-1}D_0 \mbox{diag}(\gamma_{2,p},\ldots,\gamma_{2,1}).
\end{equation}
Since we deal with squared symmetric matrices, $F^{-1}D_0$ also consists of the eigenvectors of $FMF$, whose squared eigenvalues are $\gamma_{2,i}$, so that 
\begin{equation}\label{e:proofs-global-MCF-2}
	F M D_0 = F^{-1}D_0 \mbox{diag}(\gamma_{p'},\ldots,\gamma_{1'}),
\end{equation} 
where $(\gamma_{i'})^2=\gamma_{2,i}$ and the prime indicates that the order is not necessarily that of the increasing order in $\gamma_{1}\leq \ldots\leq \gamma_p$. This fact will be used below. Note also that $\gamma_{2,r}=\ldots=\gamma_{2,1}=0$ by assumption and hence that $D'M=0$.

The next result will justify the replacement of the eigenvalues $\widehat \gamma_{2,i}$.

\begin{lemma}\label{l:proof-global-gamma-replace}
With the above notation and under Assumptions (A.1), (A.2g), we have for $i=1,\ldots,r$,
\begin{equation}\label{e:proof-global-gamma-replace}
	\sup|a_T^2 \widehat \gamma_{2,i} - a_T^2 \widehat \lambda_i| = O_p\Big( (Th/\ln T)^{1/2} \Big).
\end{equation}	
\end{lemma}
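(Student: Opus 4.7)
My plan is to prove the lemma through a matrix perturbation argument on the eigenvalues of $(F\widehat M F)^2$ near zero, combined with uniform kernel smoothing bounds. The first ingredient is a uniform rate $\sup_{u\in \mathcal{H}}\|\widehat M(u) - M(u)\| = O_p(\sqrt{\log T/(Th)})$, and similarly for $\widehat F(u) - F(u)$. These follow from a standard chaining argument over $u\in\mathcal{H}$ combined with the fourth-moment bound (Y2), the kernel conditions (K), and the smoothness provided by (A1)--(A2g), in the spirit of the auxiliary results in Appendix \ref{s:proofs-local}. A direct application of Weyl's inequality to the difference $(\widehat F\widehat M \widehat F')^2 - (F\widehat M F')^2$ then reduces matters to comparing the $r$ smallest squared eigenvalues $\widetilde \gamma_{2,i}$ of $(F\widehat M F')^2$ with $\widehat \lambda_{2,i}$; the associated error is of order $\sup\|\widehat F - F\|\cdot \sup\|\widehat M\|$, and after multiplication by $a_T^2 = O(Th)$ remains $o_p((Th/\log T)^{1/2})$.

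Next I would exploit the block decomposition afforded by $V = F^{-1} D_0$, which by the normalization \eqref{e:proofs-global-CF} is orthogonal, $V'V = I_p$. In this basis, $V'(F\widehat M F)V = D_0' \widehat M D_0$, so the eigenvalues of $F\widehat M F$ equal those of $D_0' \widehat M D_0$. Writing $D_0 = (\widetilde D, D)$, the population matrix $D_0' M D_0$ has $(2,2)$ block equal to $0$ (from $MD = 0$, which follows from \eqref{e:proofs-global-MCF-2} applied to the columns of $D$ associated with the zero eigenvalues) and $(1,1)$ block $\widetilde D' M \widetilde D$ bounded away from singularity uniformly in $u\in\mathcal{H}$; this last fact uses the analyticity (A2g) of the population eigenvalues, which guarantees that each nonzero $\gamma_{i'}(u)$ admits a uniform lower bound on the closed subinterval $\mathcal{H}$. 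Substituting $\widehat M = M + (\widehat M - M)$ therefore represents $D_0'\widehat M D_0$ as a perturbation of a block-diagonal matrix whose off-diagonal and $(2,2)$ blocks are of order $\sqrt{\log T/(Th)}$.

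A standard two-block eigenvalue perturbation bound (along the lines of Lemma A.1 of the technical appendix of \citet{donald:2011local}, derivable via a Sylvester equation or a Davis--Kahan style argument for the perturbed invariant subspace) then yields that the $r$ smallest eigenvalues of $D_0'\widehat M D_0$ are captured, to order $\|\widehat M - M\|^2$, by the eigenvalues of the $r\times r$ matrix $D'(\widehat M - M)D$; squaring and absorbing a cross term of the same order produces $D'(\widehat M - M)DD'(\widehat M - M)D$, whose eigenvalues are precisely $\widehat \lambda_{2,i}$. The total discrepancy $|\widetilde \gamma_{2,i} - \widehat \lambda_{2,i}|$ is then of order $\|\widehat M - M\|^3 = O_p((\log T/(Th))^{3/2})$, and multiplying by $a_T^2 = O(Th)$ gives $O_p((\log T)^{3/2}(Th)^{-1/2})$, which is $o_p((Th/\log T)^{1/2})$, well within the claimed rate. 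The main obstacle in executing this plan is making every estimate uniform in $u \in \mathcal{H}$; this requires smooth dependence of the eigenprojections associated with $D_0(u)$ on $u$, which the analyticity assumption (A2g) supplies (cf.\ \citet{kato:1976}).
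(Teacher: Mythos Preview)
Your overall strategy---separate the $r$ small eigenvalues from the $p-r$ large ones via a block/Schur-complement perturbation in the orthonormal basis $V=F^{-1}D_0$, then bound the Schur correction by $O_p(\delta_T^2)$ with $\delta_T=(\ln T/(Th))^{1/2}$---is essentially the same idea as the paper's determinant factorization (which is a Schur complement computation packaged with determinants, following \citet{robin:2000tests}). Steps 3--5 of your plan are sound and indeed give $|\widetilde\gamma_{2,i}-\widehat\lambda_{2,i}|=O_p(\delta_T^3)$, hence $a_T^2|\widetilde\gamma_{2,i}-\widehat\lambda_{2,i}|=O_p((\ln T)^{3/2}(Th)^{-1/2})=o_p(1)$.

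The gap is Step~2. A ``direct application of Weyl's inequality'' to $(\widehat F\widehat M\widehat F)^2-(F\widehat M F)^2$ only gives the additive bound $\|\widehat F-F\|\cdot\|\widehat M\|\cdot O(1)=O_p(\delta_T)$, so after multiplying by $a_T^2\asymp Th$ you obtain $(Th\ln T)^{1/2}$, which is a factor $\ln T$ \emph{larger} than $(Th/\ln T)^{1/2}$, not smaller. Weyl is too crude here because it does not exploit that the eigenvalues in question are themselves $O_p(\delta_T)$. To repair this you need the congruence structure: $\widehat F\widehat M\widehat F=G(F\widehat M F)G'$ with $G=\widehat F F^{-1}=I+O_p(\delta_T)$, so by Ostrowski's theorem the small eigenvalues are perturbed \emph{multiplicatively} by $1+O_p(\delta_T)$, giving $|\widehat\gamma_i-\widetilde\gamma_i|=O_p(\delta_T^2)$ for $i\le r$ and hence $a_T^2|\widehat\gamma_{2,i}-\widetilde\gamma_{2,i}|=O_p((\ln T)^{3/2}(Th)^{-1/2})$. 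Alternatively, skip Step~2 altogether and carry $\widehat F$ into the block decomposition: the $(2,2)$ block becomes $D'F^{-1}\widehat F\widehat M\widehat F F^{-1}D$, and since $MD=0$ one has $D'F^{-1}\widehat F\widehat M\widehat F F^{-1}D=D'\widehat M D+O_p(\delta_T^2)$ directly. This is effectively what the paper does: it keeps $\widehat F$ inside the Schur complement $\widehat W$ and uses the algebraic identity $F^2-F^2M\widetilde D\,\mbox{diag}(\gamma_{2,p},\ldots,\gamma_{2,r+1})^{-1}\widetilde D'MF^2=DD'$ to reduce $\widehat W$ to $a_T^2 D'\widehat M DD'\widehat M D$ up to the required order.
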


\medskip

\noindent {\sc Proof:} Let $|B|$ denote the determinant of a matrix $B$. As on p.\ 173 of \cite{robin:2000tests}, we have
$$
0 = \Big| \widehat M\widehat F^2\widehat M - \widehat \gamma_{2,i} \widehat F^{-2} \Big|
= \Big| (\widetilde D ,  a_TD)'(\widehat M\widehat F^2\widehat M - \widehat \gamma_{2,i} \widehat F^{-2})  (\widetilde D ,  a_TD) \Big|
$$

\begin{equation*}
= \left|
\left(
\begin{array}{cc}
	\widetilde D^{'}( \widehat M \widehat F^2 \widehat M - \widehat \gamma_{2,i} \widehat F^{-2})\widetilde D & a_T \widetilde D^{'} ( \widehat M\widehat F^2\widehat M - \widehat \gamma_{2,i} \widehat F^{-2}) D \\
	a_T D^{'}(\widehat M \widehat F^2 \widehat M - \widehat \gamma_{2,i} \widehat F^{-2})\widetilde D & a_T^2 D'(\widehat M\widehat F^2\widehat M - \widehat \gamma_{2,i} \widehat F^{-2}) D
\end{array}
\right)
\right|.
\end{equation*}

By using the relation $|(B_{11}\ B_{12};\ B_{21}\ B_{22})|=|B_{11}|\cdot |B_{22}-B_{21}B_{11}^{-1}B_{12}|$, we have further that
\begin{equation}\label{e:proofs-global-2-det0}
	0 = |\widehat S|\cdot |\widehat W - a_T^2 \widehat \gamma_{2,i} \widehat V^{-1}|,
\end{equation}
where 
\begin{eqnarray*}
	\widehat S & = & \widetilde D'(\widehat M\widehat F^2\widehat M - \widehat \gamma_{2,i} \widehat F^{-2})\widetilde D,\\
	\widehat W & = & a_T^2 D'\widehat M\widehat F^2\widehat M D - a_T^2 D'\widehat M\widehat F^2\widehat M \widetilde D \widehat S^{-1} \widetilde D'\widehat M\widehat F^2\widehat M D,\\
	\widehat V^{-1} & = & D'\widehat F^{-2}D + \widehat \gamma_{2,i} D'\widehat F^{-2}\widetilde D \widehat S^{-1} \widetilde D'\widehat F^{-2}D\\
	& & - D'\widehat M\widehat F^2\widehat M \widetilde D \widehat S^{-1} \widetilde D'\widehat F^{-2}D - D'\widehat F^{-2}\widetilde D\widehat S^{-1} \widetilde D'\widehat M\widehat F^2\widehat M D. 
\end{eqnarray*}

By Proposition \ref{p:proofs-global-2-AM} and the smoothness of $\widetilde D$ by Proposition \ref{p:proofs-global-2-analytic}, note that 
$$
\widehat S = \widetilde D' MF^2M\widetilde D + O_{p,{\scriptsize \rm sup}}((Th/\ln T)^{-1/2}) = \mbox{\rm diag}(\gamma_{2,p},\ldots,\gamma_{2,r+1}) + O_{p,{\scriptsize \rm sup}}((Th/\ln T)^{-1/2}), 
$$
where the second equality follows from (\ref{e:proofs-global-MCF}) and (\ref{e:proofs-global-CF}). This shows that, asymptotically, $|\widehat S|>0$. Hence, in view of (\ref{e:proofs-global-2-det0}), we may suppose without loss of generality that $a_T^2\widehat \gamma_{2,i}$ are the eigenvalues of the matrix $\widehat W\widehat V$. The matrix $\widehat V$ is symmetric and its eigenvalues are positive asymptotically since $\widehat V\to_p D'F^{-2}D$. Then, $\widehat V$ may be assumed to be positive definite, and $a_T^2\widehat \gamma_{2,i}$ be taken as eigenvalues of $\widehat V^{1/2}\widehat W\widehat V^{1/2}$. Since the matrix is symmetric, by applying the Wielandt-Hoffman theorem (e.g.\ \cite{golub:2012matrix}), we get that 
$$
\sup |a_T^2 \widehat \gamma_{2,i} - a_T^2 \widehat \lambda_i| \leq \sup \Big| 
\widehat V^{1/2}\widehat W\widehat V^{1/2} - a_T^2 D'(\widehat M-M) D^2 (\widehat M-M) D
\Big|.
$$
By using Proposition \ref{p:proofs-global-2-AM} and the fact that the square root is a continuous operation on positive definite matrces, $\widehat V^{1/2} = (D'F^{-2}D)^{1/2} + O_{p,{\scriptsize \rm sup}}((Th/\ln T)^{-1/2}) = I_r + O_{p,{\scriptsize \rm sup}}((Th/\ln T)^{-1/2})$. Similarly, for $\widehat W$, we have $\widehat W = a_T^2 D'\widehat M D^2\widehat M D + O_{p,{\scriptsize \rm sup}}((Th/\ln T)^{-1/2}) = a_T^2 D'(\widehat M-M) D^2 (\widehat M-M) D+ O_{p,{\scriptsize \rm sup}}((Th/\ln T)^{-1/2})$, where the first equality follows from the relation 
$$
F^2 - F^2 M\widetilde D \mbox{\rm diag}(\gamma_{2,p},\ldots,\gamma_{2,r+1})^{-1} \widetilde D' M F^2  = D^2.
$$
The latter relation holds by the following argument. Note that it is equivalent to 
$$
I = F M\widetilde D \mbox{\rm diag}(\gamma_{2,p},\ldots,\gamma_{2,r+1})^{-1} \widetilde D' M F + F^{-1}D^2 F^{-1}
$$
and in view of (\ref{e:proofs-global-MCF}), follows from $F M\widetilde D \mbox{\rm diag}(\gamma_{2,p},\ldots,\gamma_{2,r+1})^{-1} \widetilde D' M F = F^{-1}\widetilde D^2 F^{-1}$ or 
$$
F^2 M\widetilde D \mbox{\rm diag}(\gamma_{2,p},\ldots,\gamma_{2,r+1})^{-1} \widetilde D' M F^2 = \widetilde D^2,
$$
which is a consequence of (\ref{e:proofs-global-MCF-2}).  \quad \quad $\Box$

\medskip

By Lemma \ref{l:proof-global-gamma-replace}, instead of working with $\int \widehat \xi_r du$, we can focus instead on 
\begin{equation}\label{e:proofs-global-L}
	\widehat L_r = a_T^2 \sum_{i=1}^r \widehat \lambda_i = a_T^2 \mbox{tr}\{ D' \widehat M D^2 \widehat M D \} = a_T^2 \mbox{tr}\{ D' (\widehat M -M) D^2 (\widehat M -M) D \} 
\end{equation}
and $\int \widehat L_r du$. Write $\widehat M- M = S_1+S_2-S_3-S_4$ as in (\ref{e:proofs-local-MS}) of the proof of Proposition \ref{p:M-loctest-vc-an}. Then,
\begin{equation}\label{e:proofs-global-L-dec}
	\int \widehat L_r du = \sum_{j,k=1}^4 (\pm 1) \int \widehat L_{r,jk} du =: \sum_{j,k=1}^4 \widehat G_{r,jk},
\end{equation}
where 
\begin{equation}\label{e:proofs-global-L-dec-0}
	\widehat L_{r,jk} = a_T^2 \mbox{tr}\{ D' S_j D^2 S_k D \} 
\end{equation}
and $(\pm 1)$ in (\ref{e:proofs-global-L-dec}) accounts for the signs of $S_j, S_k$ in the decomposition (\ref{e:proofs-local-MS}). The next two lemmas concern the asymptotics of $\widehat G_{r,jk}$.

\begin{lemma}\label{l:proofs-global-G11}
Under Assumptions (Y1), (Y2), (K), (A1), (A2g), and 
$$
T\to\infty,\ h\to 0,\ Th^{3/2}\to\infty,\quad T^{\epsilon/(4+\epsilon)} h^{1/2}\to\infty,
$$
we have
\begin{equation}\label{e:proofs-global-G11}
\frac{\widehat G_{r,11}- |{\cal H}| \frac{r(  \mu_4+r-1)}{ \mu_4}}{\sqrt{h \frac{\|\overline K\|_2^2}{\|K\|_2^4} \frac{2(r \mu_4^2+2r(r-1))}{ \mu_4^2} |{\cal H}|}} \stackrel{d}{\to} {\cal N}(0,1).
\end{equation}
\end{lemma}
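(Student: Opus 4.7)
My plan is to decompose $\widehat G_{r,11}$ into a diagonal piece, which produces the deterministic centering, and an off-diagonal piece, which produces the Gaussian fluctuation. Substituting $X_t X_t' - A^2(t/T) = A(t/T) U_t A(t/T)'$ with $U_t = Y_t Y_t' - I_p$, setting $R_t(u) = D(u)' A(t/T)$, and using $D^2 = DD'$ so that $\mbox{tr}\{D' S_1 D^2 S_1 D\} = \mbox{tr}\{(D' S_1 D)^2\}$, I obtain
\begin{equation*}
\widehat G_{r,11} = \frac{a_T^2}{T^2} \sum_{t_1, t_2=1}^T \int_{\mathcal{H}} \mbox{tr}\{R_{t_1}(u) U_{t_1} R_{t_1}(u)' R_{t_2}(u) U_{t_2} R_{t_2}(u)'\} K_h\bigl(u-\tfrac{t_1}{T}\bigr) K_h\bigl(u-\tfrac{t_2}{T}\bigr) \, du,
\end{equation*}
and split into $\widehat D_T$ (diagonal, $t_1 = t_2$) and $\widehat O_T$ (off-diagonal).

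For $\widehat D_T$, I would write $K_h(u-t/T)^2 = h^{-1} \|K\|_2^2 K_{2,h}(u-t/T)$ with $K_2 = K^2/\|K\|_2^2$, apply Lemma \ref{l:sum-int-kernel}, and localize $R_t(u) \approx R_t(t/T)$ using smoothness guaranteed by Assumption (A2g). The normalization $D'A^2 D = I_r$ yields $R_t(t/T) = Q(t/T)'$ with $Q'Q = I_r$, so $R_t(t/T) U_t R_t(t/T)' = Z_t Z_t' - I_r$ for $Z_t = Q(t/T)' Y_t$. An LLN then gives
\begin{equation*}
\widehat D_T = \frac{1}{T \mu_4} \sum_{t:\, t/T \in \mathcal{H}} \mbox{tr}\{(Z_t Z_t' - I_r)^2\} + o_p(h^{1/2}),
\end{equation*}
and a direct moment computation under (Y1)-(Y2) produces the centering $|\mathcal{H}|\, r(\mu_4 + r - 1)/\mu_4$.

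For $\widehat O_T$, the $u$-integral equals $\overline K_h\bigl((t_1-t_2)/T\bigr)$ up to an asymptotically negligible boundary term handled by Lemma \ref{l:sum-int-kernel}(ii). Writing $(\widehat O_T - \EE \widehat O_T)/\sqrt{h} = \sum_t V_t$ with
\begin{equation*}
V_t = \frac{2 a_T^2}{T^2 \sqrt{h}} \sum_{s<t} \mbox{tr}\{R_s(u_*) U_s R_s(u_*)' R_t(u_*) U_t R_t(u_*)'\} \overline K_h\bigl(\tfrac{s-t}{T}\bigr)
\end{equation*}
after localization (with $u_* = (s+t)/(2T)$, say), the sequence $(V_t)$ is a martingale difference with respect to $\mathcal{F}_t = \sigma(Y_1,\ldots,Y_t)$. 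I would then invoke Brown's martingale CLT: the conditional variance $\sum_t \EE[V_t^2 \mid \mathcal{F}_{t-1}]$ is computed via the Riemann sum estimate $\sum_{s \neq t} \overline K_h((s-t)/T)^2 \sim Th^{-1} |\mathcal{H}|\, \|\overline K\|_2^2$ combined with fourth-moment formulas for $U_t$, producing the claimed variance; a conditional Lyapunov bound of order $2+\delta$, verified via (Y2), delivers Lindeberg's condition under the rates $Th^{3/2}\to\infty$ and $T^{\epsilon/(4+\epsilon)} h^{1/2}\to\infty$.

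The hard part will be controlling the smoothness approximations uniformly in $u$: the replacement $R_t(u) \to Q(u)'$ requires the analyticity from (A2g) to obtain smooth eigenvector bases (not merely smooth eigenvalues), and the boundary effects from restricting to $\mathcal{H}$ must be bounded carefully, both for the diagonal mean and for the variance of $\widehat O_T$. A secondary difficulty is the combinatorial fourth-moment expansion of $\mbox{tr}\bigl\{(R_s U_s R_s')(R_t U_t R_t')\bigr\}^2$, which is the source of the $\mu_4$-dependent constants in the asymptotic variance and is the reason the $(4+\epsilon)$-moment Assumption (Y2) and the specific bandwidth conditions are built into the hypotheses.
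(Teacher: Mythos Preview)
Your overall decomposition into diagonal and off-diagonal pieces matches the paper's, and your localization $D(u)\to D(t/T)$ via the analyticity assumption (A2g) is exactly the device the paper uses. The off-diagonal CLT, however, is handled differently: the paper does \emph{not} replace $\int_{\cal H}K_h(u-\tfrac{t_1}{T})K_h(u-\tfrac{t_2}{T})\,du$ by $\overline K_h((t_1-t_2)/T)$ nor invoke a martingale CLT. It keeps the integral over ${\cal H}$ intact and applies de Jong's CLT for degenerate quadratic forms, verifying the variance limit and the three conditions $G_{T,1},G_{T,2},G_{T,4}=o(1)$ via a dedicated kernel lemma (sums of powers of $\int_{\cal H}K_h K_h\,du$). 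Your martingale route is a legitimate alternative, but Lemma~\ref{l:sum-int-kernel}(ii) is the wrong tool for the boundary control you describe---that lemma handles single Riemann sums, not the two-variable convolution; the paper's Lemma~\ref{l:proofs-global-2-kernel-result} is what does that job.

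The genuine gap is in the diagonal piece, and it is tied to your misattribution of the hypothesis $T^{\epsilon/(4+\epsilon)}h^{1/2}\to\infty$. Writing ``an LLN then gives $\widehat D_T=\ldots+o_p(h^{1/2})$'' does not deliver a rate: the centered summands $\xi_t=\mbox{tr}\{(Z_tZ_t'-I_r)^2\}-\EE[\cdot]$ have, under (Y2), only a finite moment of order $1+\epsilon/4$ (four factors of $Y$ in each term), so no CLT-type $O_p(T^{-1/2})$ bound is available. The paper controls $T^{-1}\sum_t\xi_t$ by the von Bahr--Esseen inequality with exponent $p=1+\epsilon/4$, obtaining the rate $O_p(T^{-(1-1/p)})=O_p(T^{-\epsilon/(4+\epsilon)})$; this is $o_p(h^{1/2})$ precisely when $T^{\epsilon/(4+\epsilon)}h^{1/2}\to\infty$. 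So that bandwidth condition belongs to the diagonal step, not to a Lyapunov verification in your martingale CLT. (For the latter you would additionally need to show that the conditional variance $\sum_t\EE[V_t^2\mid\mathcal F_{t-1}]$ converges in probability, which itself involves quartic moments of $U_t$ and is one reason the paper prefers de Jong's quadratic-form conditions.)
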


As noted following Assumption (A2g), this assumption is needed to apply Proposition \ref{p:proofs-global-2-analytic} to have the smoothness of $D=D(u)$.

\medskip

\noindent {\sc Proof:} In view of the definition of $S_1$ following (\ref{e:proofs-local-MS}), we can write
$$
\widehat G_{r,11} = \int_{\cal H} a_T^2 \mbox{tr}\{ D' S_1 D^2 S_1 D \} du = I_1 + I_2
$$
$$
:= \frac{a_T^2}{T^2} \sum_{t=1}^T \int_{\cal H} \mbox{tr}\Big\{ D' (X_tX_t' - A^2(\frac{t}{T})) D^2 (X_tX_t' - A^2(\frac{t}{T})) D \Big\} K_h^2(u - \frac{t}{T}) du 
$$
$$
+ \frac{2a_T^2}{T^2} \sum_{t_1<t_2} \int_{\cal H} \mbox{tr}\Big\{ D' (X_{t_1}X_{t_1}' - A^2(\frac{t_1}{T})) D^2 (X_{t_2}X_{t_2}' - A^2(\frac{t_2}{T})) D \Big\} K_h(u - \frac{t_1}{T})K_h(u - \frac{t_2}{T}) du. 
$$
We shall argue first that $D=D(u)$ in $I_1$ and $I_2$ above can be replaced by $D(t/T)$, and shall denote the respective terms by $\widetilde I_1$ and $\widetilde I_2$. The relation (\ref{e:proofs-global-CF}) will then be used to simplify $\widetilde I_1$ and $\widetilde I_2$. Finally, we will show that $\widetilde I_1$ produces the centering in (\ref{e:proofs-global-G11}) and $\widetilde I_2$ yields the asymptotic normality in (\ref{e:proofs-global-G11}).

To see why $I_1$ can be replaced by $\widetilde I_1$, that is, $D=D(u)$ be replaced by $D(t/T)$, consider one of the terms in the difference between $I_1$ and $\widetilde I_1$, namely,
$$
R_1 = \frac{a_T^2}{T^2} \sum_{t=1}^T \int_{\cal H} \mbox{tr}\Big\{ (D-D(\frac{t}{T}))' (X_tX_t' - A^2(\frac{t}{T})) D^2 (X_tX_t' - A^2(\frac{t}{T})) D \Big\} K_h^2(u - \frac{t}{T}) du.
$$
(The other terms in the difference can be dealt with similarly.) Then, by using the smoothness of $D(u)$ by Proposition \ref{p:proofs-global-2-analytic}, the expression (\ref{e:M-loctest-ass-def}) for $a_T$ and Lemma \ref{l:sum-int-kernel}, $(ii)$, we have 
$$
|R_1| \leq \frac{C}{T} \sum_{t=1}^T \|X_tX_t' - A^2(\frac{t}{T})\|_2^2 \int_{\cal H} \|D(u) - D(\frac{t}{T})\| K_{2,h}(u - \frac{t}{T}) du
$$ 
$$
\leq \frac{C'h}{T} \sum_{t=1}^T \|X_tX_t' - A^2(\frac{t}{T})\|_2^2 \int_{\cal H}  K_{2,h}^{(1)}(u - \frac{t}{T}) du
$$
$$
\leq \frac{C''h}{T} \sum_{t=1}^T \|X_tX_t' - A^2(\frac{t}{T})\|_2^2  \leq C(w) h,
$$
for a random constant $C(w)$. The term $R_1$ would then not affect the asymptotics (\ref{e:proofs-global-G11}) since $h/\sqrt{h}\to 0$.

For $I_2$ and $\widetilde I_2$, consider similarly a term from their difference given by 
$$
R_2 = \sum_{t_1<t_2} b_{T,t_1,t_2}(X_{t_1}X_{t_1}',X_{t_2}X_{t_2}')
$$
$$
:=  \sum_{t_1<t_2} \frac{2a_T^2}{T^2} \int_{\cal H} \mbox{tr}\Big\{ (D-D(\frac{t}{T}))' (X_{t_1}X_{t_1}' - A^2(\frac{t_1}{T})) D^2 (X_{t_2}X_{t_2}' - A^2(\frac{t_2}{T})) D \Big\} K_h(u - \frac{t_1}{T})K_h(u - \frac{t_2}{T}) du.
$$
We shall use Proposition \ref{p:proofs-global-2-Ustat-result} to obtain a convergence rate for $R_2$. To apply the proposition, a number of (conditional) expectations involving $b_{T,t_1,t_2}(X_{t_1}X_{t_1}',X_{t_2}X_{t_2}')$ need to be evaluated. Note that $\EE b_{T,t_1,t_2}(X_{t_1}X_{t_1}',X_{t_2}X_{t_2}') = 0$ and $\EE (b_{T,t_1,t_2}(X_{t_1}X_{t_1}',X_{t_2}X_{t_2}')|X_{t_1}X_{t_1}')=0$ and similarly when conditioning on $X_{t_2}X_{t_2}'$. We thus only need to consider $\EE (b_{T,t_1,t_2}(X_{t_1}X_{t_1}',X_{t_2}X_{t_2}'))^2$. By using the generalized Minkowski's inequality and the smoothness of $D(u)$ by Proposition \ref{p:proofs-global-2-analytic}, note that
$$
\Big( \EE (b_{T,t_1,t_2}(X_{t_1}X_{t_1}',X_{t_2}X_{t_2}'))^2 \Big)^{1/2}
$$
$$
\leq \int_{\cal H} \Big( \EE  \Big( \frac{2a_T^2}{T^2} \mbox{tr}\Big\{ (D-D(\frac{t}{T}))' (X_{t_1}X_{t_1}' - A^2(\frac{t_1}{T})) D^2 (X_{t_2}X_{t_2}' - A^2(\frac{t_2}{T})) D \Big\} K_h(u - \frac{t_1}{T})K_h(u - \frac{t_2}{T}) \Big)^2  \Big)^{1/2} du
$$
$$
\leq \frac{Ca_T^2}{T^2}   \int_{\cal H} \|D (u) - D(\frac{t_1}{T})\|_2 K_h(u - \frac{t_1}{T})K_h(u - \frac{t_2}{T}) du
$$
$$
\leq \frac{C'h^2}{T}   \int_{\cal H} K_{h}^{(1)}(u - \frac{t_1}{T})K_h(u - \frac{t_2}{T}) du,
$$
where we used the definiton of $a_T$. It follows from Lemma \ref{l:proofs-global-2-kernel-result} that 
$$
\sum_{t_1<t_2} \EE (b_{T,t_1,t_2}(X_{t_1}X_{t_1}',X_{t_2}X_{t_2}'))^2 \leq \frac{Ch^4}{T^2} \sum_{t_1<t_2} \Big(\int_{\cal H} K_{h}^{(1)}(u - \frac{t_1}{T})K_h(u - \frac{t_2}{T}) du \Big)^2
\leq  C'h^3.
$$
Hence, by Proposition \ref{p:proofs-global-2-Ustat-result}, $R_2$ is of the order $O_p(h^{3/2})$ and hence does not affect the asymptotics (\ref{e:proofs-global-G11}) since $h^{3/2}/\sqrt{h}\to 0$.

We can thus replace $I_1$ and $I_2$ by $\widetilde I_1$ and $\widetilde I_2$, respectively, which in view of (\ref{e:proofs-global-CF})  can be expressed as 
$$
\widetilde I_1 = \frac{a_T^2}{T^2} \sum_{t=1}^T \mbox{tr}\Big\{ (\widetilde Y_t\widetilde Y_t' - I_r) (\widetilde Y_t\widetilde Y_t' - I_r) \Big\} \int_{\cal H}  K_h^2(u - \frac{t}{T}) du, 
$$
$$
\widetilde I_2 = \frac{2a_T^2}{T^2} \sum_{t_1<t_2}  \mbox{tr}\Big\{ (\widetilde Y_{t_1}\widetilde Y_{t_1}' - I_r) (\widetilde Y_{t_2}\widetilde Y_{t_2}' - I_r) \Big\}  \int_{\cal H} K_h(u - \frac{t_1}{T})K_h(u - \frac{t_2}{T}) du,
$$
where $\widetilde Y_t=D^{'}(t/T) X_t$. 

For $\widetilde I_1$, write it as 
$$
\widetilde I_1 = \frac{1}{\mu_4T} \sum_{t=1}^T \sum_{i,j=1}^r (\widetilde Y_t\widetilde Y_t' - I_r)_{ij}^2  \int_{\cal H}  K_{2,h}(u - \frac{t}{T}) du 
$$
$$
= \frac{1}{\mu_4T} \sum_{t=1}^T \sum_{i,j=1}^r \Big( (\widetilde Y_t\widetilde Y_t' - I_r)_{ij}^2 - \EE (\widetilde Y_t\widetilde Y_t' - I_r)_{ij}^2  \Big)  \int_{\cal H}  K_{2,h}(u - \frac{t}{T}) du 
$$
$$
+ \sum_{i,j=1}^r \EE (\widetilde Y_1\widetilde Y_1' - I_r)_{ij}^2 \frac{1}{\mu_4T} \sum_{t=1}^T  \int_{\cal H}  K_{2,h}(u - \frac{t}{T}) du  = \widetilde I_{1,1}  + \widetilde I_{1,2}.
$$
It can be checked that
$$
\sum_{i,j=1}^r \EE (\widetilde Y_1\widetilde Y_1' - I_r)_{ij}^2 = r\mu_4 + r(r-1).
$$
Hence, by Lemma \ref{l:sum-int-kernel}, $(ii)$,
$$
\widetilde I_{1,2} = |{\cal H}| \frac{r\mu_4 + r(r-1)}{\mu_4}  + O(\frac{1}{Th}).
$$
For $\widetilde I_{1,1}$, setting $\xi_t = \sum_{i,j=1}^r ( (\widetilde Y_t\widetilde Y_t' - I_r)_{ij}^2 - \EE (\widetilde Y_t\widetilde Y_t' - I_r)_{ij}^2 )$, $p=1+\epsilon/4$ and using the von Bahr-Essen inequality (\cite{von:1965inequalities}), we obtain that
$$
\EE|\widetilde I_{1,1}|^p = C \EE \Big| \frac{1}{T} \sum_{t=1}^T \xi_t  \int_{\cal H}  K_{2,h}(u - \frac{t}{T}) du  \Big|^p \leq \frac{C' \EE|\xi_1|^p}{T^p} \sum_{t=1}^T \Big| \int_{\cal H}  K_{2,h}(u - \frac{t}{T}) du  \Big|^p \leq \frac{C''}{T^{p-1}}
$$
and hence $\widetilde I_{1,1} = O_p(T^{1/p-1})$. By assumption, $Th^{3/2}\to \infty$ and $T^{1-1/p}h^{1/2} = T^{\epsilon/(4+\epsilon)}h^{1/2}\to\infty$. This shows that the error term in $\widetilde I_{1,1}$ and $\widetilde I_{1,2}$ do not affect the asymptotics (\ref{e:proofs-global-G11}), and that $\widetilde I_{1,2}$ produces the desired asymptotic mean.

For $\widetilde I_2$, consider 
$$
\frac{\widetilde I_2}{\sqrt{h}} = \sum_{t_1<t_2} b_{T,t_1,t_2},
$$
where
$$
b_{T,t_1,t_2} = \frac{2h^{1/2}}{\|K\|_2^2 \mu_4 T}\mbox{tr}\Big\{ (\widetilde Y_{t_1}\widetilde Y_{t_1}' - I_r) (\widetilde Y_{t_2}\widetilde Y_{t_2}' - I_r)\Big\} \int_{\cal H} K_h(u - \frac{t_1}{T})K_h(u - \frac{t_2}{T}) du.
$$
We will argue that $\widetilde I_2/\sqrt{h}$ is asymptotically normal with the desired limiting variance. By Proposition 3.2 in \cite{dejong:1987central}, it is enough to show that 
\begin{enumerate}
	\item $\mbox{Var}(\frac{\widetilde I_2}{\sqrt{h}}) \to \frac{\|\overline{K}\|_2^2}{\|K\|_2^4} \frac{2(r\mu_4^2+2r(r-1))}{\mu_4^2} |{\cal H}|$;
	\item $G_{T,i}=o(1)$, $i=1,2,4$, where
	\begin{eqnarray*}
		G_{T,1} & = & \sum_{t_1<t_2} \EE b_{T,t_1,t_2}^4,\\
		G_{T,2} & = & \sum_{t_1<t_2<t_3} \EE b_{T,t_1,t_2}^2 b_{T,t_1,t_3}^2 + \EE b_{T,t_1,t_2}^2 b_{T,t_2,t_3}^2 + \EE b_{T,t_1,t_3}^2 b_{T,t_2,t_3}^2 , \\
		G_{T,4} & = & \sum_{t_1<t_2<t_3<t_4} \EE b_{T,t_1,t_2} b_{T,t_1,t_3} b_{T,t_2,t_4} b_{T,t_3,t_4} + \EE b_{T,t_1,t_2} b_{T,t_1,t_4} b_{T,t_2,t_3} b_{T,t_3,t_4} \\
		& & \quad \quad \quad \quad \quad \quad  + \EE b_{T,t_1,t_3} b_{T,t_1,t_4} b_{T,t_2,t_4} b_{T,t_2,t_4}.
	\end{eqnarray*}
\end{enumerate}
The first point above follows from 
$$
\mbox{Var}(\frac{\widetilde I_2}{\sqrt{h}}) = \frac{4h}{\|K\|_2^4 \mu_4^2 T^2} \sum_{t_1<t_2} \EE \mbox{tr}^2\Big\{ (\widetilde Y_{t_1}\widetilde Y_{t_1}' - I_r) (\widetilde Y_{t_2}\widetilde Y_{t_2}' - I_r)\Big\}^2 \Big( \int_{\cal H} K_h(u - \frac{t_1}{T})K_h(u - \frac{t_2}{T}) du \Big), 
$$
the observation that
$$
\EE \mbox{tr}^2\Big\{ (\widetilde Y_{t_1}\widetilde Y_{t_1}' - I_r) (\widetilde Y_{t_2}\widetilde Y_{t_2}' - I_r)\Big\} = r\mu_4^2 + 2r(r-1)
$$
and Lemma \ref{l:proofs-global-2-kernel-result}. For the second point above, $G_{T,1}$ can be bounded up to a constant by
$$
\frac{h^2}{T^4} \sum_{t_1<t_2} \Big( \int_{\cal H} K_h(u - \frac{t_1}{T})K_h(u - \frac{t_2}{T}) du \Big)^4.
$$
For example, the first term in the sum of $G_{T,2}$ can be bounded up to a constant by
$$
\frac{h^2}{T^4} \sum_{t_1<t_2<t_3} \Big( \int_{\cal H} K_h(u - \frac{t_1}{T})K_h(u - \frac{t_2}{T}) du \Big)^2\Big( \int_{\cal H} K_h(u - \frac{t_1}{T})K_h(u - \frac{t_3}{T}) du \Big)^2.
$$
For example, the first term in the sum of $G_{T,4}$ can be bounded up to a constant by
$$
\frac{h^2}{T^4} \sum_{t_1<t_2<t_3<t_4} \Big( \int_{\cal H} K_h(u - \frac{t_1}{T})K_h(u - \frac{t_2}{T}) du \Big)\Big( \int_{\cal H} K_h(u - \frac{t_1}{T})K_h(u - \frac{t_3}{T}) du \Big)\times
$$
$$
\times\ \Big( \int_{\cal H} K_h(u - \frac{t_2}{T})K_h(u - \frac{t_4}{T}) du \Big)\Big( \int_{\cal H} K_h(u - \frac{t_3}{T})K_h(u - \frac{t_4}{T}) du \Big) .
$$
The rate $o(1)$ for each of these bounds follows from Lemma \ref{l:proofs-global-2-kernel-result}. This completes the proof of the lemma. \quad \quad $\Box$

\medskip

\begin{lemma}\label{l:proofs-global-Gjk}
Under Assumptions (Y1), (Y2), (K), (A1), (A2g), and 
$$
T\to\infty,\ h\to 0,\ Th^{5/2}\to\infty,\quad T h^{3}\to 0,
$$
we have, for $(j,k)\neq (1,1)$,
\begin{equation}\label{e:proofs-global-Gjk}
\widehat G_{r,jk} = \int \widehat L_{r,jk} du = o_p(h^{1/2}).
\end{equation}
\end{lemma}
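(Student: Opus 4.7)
The plan is to analyze each of the $15$ cross terms $\widehat G_{r,jk}$ for $(j,k)\in\{1,2,3,4\}^2\setminus\{(1,1)\}$ using the decomposition $\widehat M-M=S_1+S_2-S_3-S_4$ from the proof of Proposition~\ref{p:M-loctest-vc-an}. By the cyclicity of the trace together with the symmetry of $S_j$, $S_k$ and $D^2=DD'$, one has $\widehat L_{r,jk}=\widehat L_{r,kj}$, so only pairs with $j\le k$ need be considered. The pointwise building blocks are $\EE\|S_1(u)\|_F^2=O((Th)^{-1})$, $\|S_3\|_F=O_p(T^{-1/2})$, $\|S_4\|_2=O(T^{-1})$, together with the identity $\mbox{tr}\{D'S_j D^2 S_k D\}=\mbox{tr}\{(D'S_j D)(D'S_k D)\}$, which follows from $D^2=DD'$ and reduces each integrand to a product of two $r\times r$ projected matrices.

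The crucial refinement is a sharpened bound on the projected bias $D(u)'S_2(u)D(u)$. The identity $D(u)'M(u)=0$ --- valid since the columns of $D(u)$ are eigenvectors of $F(u)M(u)F(u)$ with zero eigenvalue --- combined with symmetry of $K$, smoothness of $M(u)$ under (A2g), and Lemma~\ref{l:sum-int-kernel}, yields
\[
\sup_{u\in\mathcal{H}}\|D(u)'S_2(u)D(u)\|_F \;=\; O\bigl(h^2+(Th)^{-1}\bigr).
\]
Two cancellations drive this: the constant part $D'\bar A^2 D$ is multiplied by $(T^{-1}\sum_t K_h(u-t/T)-1)=O((Th)^{-1})$; and the first-order Taylor contribution vanishes, for differentiating $D(u)'M(u)=0$ in $u$ and right-multiplying by $D(u)$ gives $D(u)'M'(u)D(u)=-(\partial_u D)(u)'M(u)D(u)=0$ via $M(u)D(u)=0$. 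Only the second-order Taylor term survives, contributing $O(h^2)$ by symmetry of $K$.

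With these tools, every pair except $(1,3)$ and its transpose admits a direct Cauchy-Schwarz bound. For instance,
\[
\widehat G_{r,22} \;\le\; a_T^2\int_{\mathcal{H}}\|D'S_2 D\|_F^2\,du \;=\; O\bigl(Th\cdot(h^4+(Th)^{-2})\bigr) \;=\; O\bigl(Th^5+(Th)^{-1}\bigr) \;=\; o(h^{1/2}),
\]
since $Th^5/h^{1/2}=Th^{9/2}=Th^3\cdot h^{3/2}\to 0$ and $(Th)^{-1}/h^{1/2}=(Th^{3/2})^{-1}\to 0$ (as $Th^{3/2}\ge Th^{5/2}\to\infty$ using $h<1$). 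For the mixed term $(1,2)$,
\[
|\widehat G_{r,12}| \;\le\; a_T^2\Bigl(\int\|D'S_1 D\|_F^2\,du\Bigr)^{1/2}\Bigl(\int\|D'S_2 D\|_F^2\,du\Bigr)^{1/2} \;=\; O_p\bigl((Th)^{1/2}h^2+(Th)^{-1/2}\bigr) \;=\; o_p(h^{1/2}),
\]
using $Th^4=Th^3\cdot h\to 0$ for the first piece and $Th^2\ge Th^{5/2}\to\infty$ for the second. The remaining pairs $(1,4),(2,3),(2,4),(3,3),(3,4),(4,4)$ and their transposes are analogous and easier, with the smallness of $\|S_3\|_F$ or $\|S_4\|_2$ doing the work.

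The main obstacle is the pair $(1,3)$ (and its transpose $(3,1)$), for which naive Cauchy-Schwarz yields only $O_p(h^{1/2})$, precisely because both $S_1$ and $S_3$ are linear in the same mean-zero random matrices $Z_t=X_tX_t'-A^2(t/T)$. I would exploit the $u$-independence of $S_3$ by pulling it out of the integral,
\[
\widehat G_{r,13} \;=\; a_T^2\,\mbox{tr}\{S_3\, H_1\}, \qquad H_1 \;=\; \frac{1}{T}\sum_{t=1}^T \int_{\mathcal{H}} D(u)D(u)'Z_t D(u)D(u)'K_h\bigl(u-\tfrac{t}{T}\bigr)\,du,
\]
and splitting $\mbox{tr}\{S_3 H_1\}=T^{-2}\sum_{s,t}\mbox{tr}\{Z_s \Psi_t(Z_t)\}$ into the diagonal part ($s=t$), whose expectation is $O(1/T)$ so that $\EE\widehat G_{r,13}=O(h)=o(h^{1/2})$ and whose variance is $o(h)$ by the independence of the $Z_t$'s, and an off-diagonal degenerate U-statistic ($s\ne t$) of mean zero and variance $O(h^2)$ treated via Proposition~\ref{p:proofs-global-2-Ustat-result}, analogously to the $\widetilde I_2$ argument in the proof of Lemma~\ref{l:proofs-global-G11}. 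Combining, $\widehat G_{r,13}=O_p(h)=o_p(h^{1/2})$ since $h\to 0$. This correlation structure is the only real obstacle: without the diagonal/off-diagonal decomposition, the natural bound misses by a factor of $o(1)$.
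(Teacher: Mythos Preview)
Your argument follows the paper's term-by-term strategy: decompose $\widehat M-M=S_1+S_2-S_3-S_4$, use the identity $\mbox{tr}\{D'S_jD^2S_kD\}=\mbox{tr}\{(D'S_jD)(D'S_kD)\}$, and bound each $\widehat G_{r,jk}$ separately. The paper treats only $(2,2),(3,3),(4,4),(1,2),(1,3),(1,4)$ and leaves the remaining pairs to symmetry, as you do.

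The substantive difference is your handling of $S_2$. The paper uses only the crude bound $S_2=O(h+(Th)^{-1})$, obtaining $\widehat G_{r,22}=O(Th^3)$, and then asserts this is $o(h^{1/2})$ ``since $Th^{5/2}\to 0$ by assumption'' --- which contradicts the lemma's stated hypothesis $Th^{5/2}\to\infty$. Your sharpening $\sup_u\|D(u)'S_2(u)D(u)\|=O(h^2+(Th)^{-1})$ repairs this inconsistency: it yields $\widehat G_{r,22}=O(Th^5+(Th)^{-1})=o(h^{1/2})$ under the hypotheses as stated. Your derivative identity $D'M'D=0$ is correct and pleasant, but not strictly required: under (A2g) the matrix $A^2(\cdot)$ is $C^2$, so a second-order Taylor expansion combined with the evenness of $K$ (Assumption (K)) already gives $S_2=O(h^2+(Th)^{-1})$ uniformly in $u$, without projecting onto $D$. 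Either route gives what is needed.

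For $(1,3)$ the paper simply writes down the entrywise term $R_{1,3}$ and computes its second moment, obtaining $\EE R_{1,3}^2=O(a_T^4 T^{-2})=O(h^2)$ (the paper prints $h^4$, a typo), hence $R_{1,3}=O_p(h)$. Your explicit diagonal/off-diagonal split via Proposition~\ref{p:proofs-global-2-Ustat-result} is more transparent about where the correlation between $S_1$ and $S_3$ enters, but reaches the same rate. Both approaches are valid; yours is cleaner.
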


In fact, the proof of lemma establishes sharper rates of convergence of $\widehat G_{r,jk}$ to 0. The rate given in the lemma is what is needed to conclude the convergence (\ref{e:global-an-1}). Indeed, the latter now follows immediately from the arguments above and, in particular, Lemmas \ref{l:proofs-global-G11}  and \ref{l:proofs-global-Gjk} .

\medskip

\noindent {\sc Proof:} 
We consider only the cases $(j,k)=(2,2)$, $(3,3)$, $(4,4)$, $(1,2)$, $(1,3)$ and $(1,4)$. The other mixed cases can be dealt with similarly. 

For $(j,k)=(2,2)$, we have 
$\int \widehat L_{r,22} du = a_T^2 \int \mbox{tr}\{D'S_2D^2S_2D\}du$ where $S_2$ is defined in (\ref{e:proofs-local-MS-list}). As in the proof of Proposition \ref{p:M-loctest-vc-an}, $S_2 = O(h+(Th)^{-1})$ uniformly in $u$, where the latter follows by using Lemma \ref{l:sum-int-kernel}, $(ii)$. Then, by using the smoothness of $D=D(u)$ by Proposition \ref{p:proofs-global-2-analytic}, we have $\int \widehat L_{r,22} du = O(Th (h + (Th)^{-1})^2) = O(Th^3 + (Th)^{-1}+h)$. This is of the order $o(h^{1/2})$ since, in particular, $Th^{5/2}\to 0$ by assumption.  For $(j,k)=(3,3)$, we have similarly $\int \widehat L_{r,33} du = a_T^2 \int \mbox{tr}\{D'S_3D^2S_3D\}du$ where $S_3$ is defined in (\ref{e:proofs-local-MS-list}). The term $S_3$ does not depend on $u$ and its rate is $O_p(T^{-1/2})$ as shown in the proof of Proposition \ref{p:M-loctest-vc-an}. This leads to $\int \widehat L_{r,33} du = O_p(Th (T^{-1/2})^2) = O_p(h)$, which is again of the order $o_p(h^{1/2})$ as desired. For $(j,k)=(4,4)$, $S_4$ does not depend on $u$ either and its rate is $O(T^{-1})$ as shown in the proof of Proposition \ref{p:M-loctest-vc-an}. This leads to $\int \widehat L_{r,44} du = O(Th (T^{-1})^2) = O(h/T)=o(h^{1/2})$.

For $(j,k)=(1,2)$, we need to consider $\int \widehat L_{r,12} du = a_T^2 \int \mbox{tr}\{D'S_1D^2S_2D\}du$. After matrix multiplication and taking the trace, a general term in $\int \widehat L_{r,12} du$ has the form 
$$
R_{1,2} = a_T^2 \int D_\ast'(S_1)_\ast (D^2)_\ast (S_2)_\ast D_\ast du,
$$
where $\ast$ refers to an index pair that can change from matrix to matrix. Furthermore, in view of (\ref{e:proofs-local-MS-list}),
$$
R_{1,2} = \frac{a_T^2}{T} \sum_{t=1}^T ((X_tX_t')_\ast - A^2(\frac{t}{T})_\ast) \int D_\ast D_\ast^2 D_\ast' (S_2)_\ast K_h(u-\frac{t}{T}) du.
$$
By using the facts that $S_2 = O(h + (Th)^{-1})$ uniformly in $u$ as above and $\int_\RR K_h(u-t/T) du=1$, it follows that $\EE R_{1,2}^2 = O(a_T^4 T^{-1} (h + (Th)^{-1})^2) = O(Th^4 + h^2 + 1/T)$ and hence that $R_{1,2} = o_p(h^{1/2})$ since, in particular, $Th^3 \to 0$. For $(j,k)=(1,3)$, a general term of interest is similarly,
$$
R_{1,3} = a_T^2 \int D_\ast'(S_1)_\ast (D^2)_\ast (S_3)_\ast D_\ast' du =  \frac{a_T^2  (S_3)_\ast }{T} \sum_{t=1}^T ((X_tX_t')_\ast - A^2(\frac{t}{T})_\ast) \int D_\ast D_\ast^2 D_\ast' K_h(u-\frac{t}{T}) du
$$
and hence $\EE R_{1,3}^2 = a_T^4 T^{-2} = h^4$. This leads to $R_{1,3} = o_p(h^{1/2})$. The case $(j,k)=(1,4)$ can be dealt with similarly by using the fact that $S_4 = O(T^{-1})$. \quad \quad $\Box$

\medskip

Finally, we prove the last statement of Proposition \ref{p:global-an} concerning the behavior of the test statistic under the alternative. For this, note that by Proposition \ref{p:proofs-global-2-AM}, 
$\sup_{u\in {\cal H}} | a_T^{-2} \xi_r(u) - \sum_{i=1}^r \gamma_{2,i}(u)  | \to_p  0$.
Furthermore, under the considered alternative $H_1$, and by using smoothness of $\gamma_{2,i}(u)$, we have $\sup_{u \in {\cal H}} \sum_{i=1}^r \gamma_{2,i}(u) >0$. It follows that $\widehat{\xi}_r = (a_T^2 \int_{\cal H}a_T^{-2} \xi_r(u) du - C_1 )/(C_2 h^{1/2})$ with constants $C_1$, $C_2$ behaves asymptotically as $C a_T^2/h^{1/2} = C' Th^{1/2}  \rightarrow \infty$ (for example, since Lemma \ref{l:proofs-global-G11} assumes $Th^{3/2} \rightarrow \infty$). This concludes the proof of Proposition \ref{p:global-an}.

\subsection{Auxiliary technical results for proof of Proposition \ref{p:global-an}}
\label{s:proofs-global-2}

The following auxiliary results were used in the proof of Proposition \ref{p:global-an}. The first result is analogous to the results in e.g.\ \cite{newey:1994kernel}, Lemma B.1. A separate proof though is needed since averaging and smoothing in the estimators here are with respect to time $t$, rather than the values of a random variable as in the aforementioned results.

\begin{proposition}\label{p:proofs-global-2-AM}
Under Assumptions (Y1), (Y2), (K), (A1), (A2g), and 
$$
T\to\infty,\ h\to 0,\ T^{\epsilon/(4+\epsilon)}h/\ln T\to \infty,\ Th^3\to 0,
$$
we have,
\begin{equation}\label{l:proofs-global-2-AM-1}
	\sup|\widehat A^{2k} - A^{2k}| = O_p\Big( (Th/\ln T)^{-1/2} \Big),\quad k=-1,1,
\end{equation}	
and 
\begin{equation}\label{l:proofs-global-2-AM-2}
	\sup|\widehat M - M| = O_p\Big( (Th/\ln T)^{-1/2} \Big).
\end{equation}	
\end{proposition}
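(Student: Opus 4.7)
\textbf{Proof proposal for Proposition \ref{p:proofs-global-2-AM}.} The plan is to reuse the decomposition
\[
\widehat M(u)-M(u) \;=\; S_1(u)+S_2(u)-S_3-S_4
\]
from the proof of Proposition~\ref{p:M-loctest-vc-an} and upgrade the pointwise estimates there to uniform ones in $u\in\mathcal{H}$. Three of the terms are immediate. $S_4$ is deterministic, $u$-free, and $O(T^{-1})$. $S_3$ is $u$-free and $O_p(T^{-1/2})$, and $T^{-1/2}=o((Th/\ln T)^{-1/2})$ since $h/\ln T\to 0$. For $S_2$, the proof of Proposition~\ref{p:M-loctest-vc-an} already bounds $\|S_2(u)\|=O(h+(Th)^{-1})$ uniformly in $u$ by Lemma~\ref{l:sum-int-kernel}(ii); and since $Th^3\to 0$, we have $h(Th/\ln T)^{1/2}=(Th^3/\ln T)^{1/2}=o(1)$, so $S_2$ is negligible at the target rate. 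It therefore suffices to prove $\sup_{u\in\mathcal{H}}\|S_1(u)\|=O_p((Th/\ln T)^{-1/2})$; since $\widehat A^2-A^2=S_1+S_2$, this also yields the conclusion for $\widehat A^2$.

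Writing $S_1(u)=(1/T)\sum_{t=1}^{T}\xi_t K_h(u-t/T)$ with independent centered matrices $\xi_t=X_tX_t'-A^2(t/T)$, I would apply the standard discretization-plus-maximal-inequality scheme. Pick a grid $\{u_j\}_{j=1}^{N_T}\subset\mathcal{H}$ of spacing $\delta_T=T^{-a}$ for a sufficiently large $a$, so that $N_T$ stays polynomial in $T$. Split
\[
\sup_{u\in\mathcal{H}}\|S_1(u)\| \;\leq\; \max_{j\leq N_T}\|S_1(u_j)\| \;+\; \max_{j}\sup_{|u-u_j|\leq\delta_T}\|S_1(u)-S_1(u_j)\|.
\]
Assumption (K) gives $|K_h(v_1)-K_h(v_2)|\leq Ch^{-2}|v_1-v_2|$ on the support of $K$, so the oscillation term is bounded by $C\delta_T h^{-2}\cdot (1/T)\sum_t\|\xi_t\|=O_p(\delta_T h^{-2})$, which is $o_p((Th/\ln T)^{-1/2})$ for $a$ large enough.

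The main step is the maximum over the grid, handled by truncation and Bernstein. Write $\xi_t=\xi_t^{\leq}+\xi_t^{>}$ with $\xi_t^{\leq}=\xi_t\mathbf{1}\{\|\xi_t\|\leq M_T\}$ for a threshold $M_T$. Since $\xi_t$ is quadratic in $Y_t$, Assumption (Y2) yields $\mathbb{E}\|\xi_t\|^{q}<\infty$ for $q=2+\epsilon/2$. The tail part obeys
\[
\mathbb{E}\sup_{u\in\mathcal{H}}\|S_1^{>}(u)\| \;\leq\; \|K\|_\infty h^{-1}\,\mathbb{E}\|\xi_1\|\mathbf{1}\{\|\xi_1\|>M_T\} \;\leq\; C h^{-1}M_T^{-(q-1)},
\]
which is $o((Th/\ln T)^{-1/2})$ for $M_T$ chosen large enough; the truncation bias $\|\mathbb{E}\xi_t^{\leq}\|$ has the same bound. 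For the truncated centered sum, the summands $\xi_t^{\leq}K_h(u_j-t/T)/T$ are bounded entrywise by $CM_T/(Th)$ and their variances add to $O((Th)^{-1})$, so Bernstein's inequality gives
\[
\mathbb{P}\Bigl(\|S_1^{\leq}(u_j)-\mathbb{E}S_1^{\leq}(u_j)\|>C\sqrt{\ln T/(Th)}\Bigr) \;\leq\; T^{-c_0}
\]
provided $M_T^2=O(Th/\ln T)$; a union bound over the polynomial grid then finishes. Choosing $M_T$ so that the tail and Bernstein constraints hold simultaneously is exactly what the assumption $T^{\epsilon/(4+\epsilon)}h/\ln T\to\infty$ encodes.

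The convergence of $\widehat A^{-2}$ follows from that of $\widehat A^2$ at the same rate: (A1) together with the uniform continuity of $A^2(u)$ on the closed subinterval $\mathcal{H}$ coming from (A2g) gives a uniform lower bound on the eigenvalues of $A^2(u)$, so matrix inversion is uniformly Lipschitz on the relevant neighborhood. The main obstacle is the Bernstein/truncation balance: matching the variance-driven grid maximum, the tail, and the truncation bias all to the rate $(Th/\ln T)^{-1/2}$ is precisely the bookkeeping that pins down the bandwidth condition as a function of the moment exponent in (Y2).
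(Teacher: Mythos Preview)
Your proposal is correct and follows essentially the same scheme as the paper: control the deterministic pieces by Lemma~\ref{l:sum-int-kernel}(ii), then handle the stochastic part by grid discretization, a Lipschitz bound on $K_h$ for the oscillation, truncation of the summands, and Bernstein's inequality over the polynomial grid. Two minor differences are worth noting. First, the paper refines $S_1+S_2$ one step further into $A^2(u)(R_1(u)+R_2(u))+R_3(u)+R_4(u)$ so that the leading stochastic term $R_1(u)=T^{-1}\sum_t (Y_t^2-1)K_h(u-t/T)$ has genuinely i.i.d.\ summands; this is cosmetic, since your $\xi_t=A(t/T)(Y_tY_t'-I_p)A(t/T)'$ are uniformly dominated by $C\|Y_tY_t'-I_p\|$ and the same moment bounds apply. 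Second, the paper disposes of the truncation by a coupling argument, choosing the threshold $C_0T^{1/p}$ with $p=2+\epsilon/2$ and bounding $\PP(\exists\, t:\ |Y_t^2-1|>C_0T^{1/p})\le \EE|Y_1^2-1|^p/C_0^p$, whereas you bound the expectation of the tail contribution directly; both routes reduce to the same Bernstein constraint $M_T(\ln T/(Th))^{1/2}=O(1)$, which is precisely what $T^{\epsilon/(4+\epsilon)}h/\ln T\to\infty$ delivers.
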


\noindent {\sc Proof:} We outline the proof in the case $p=1$. Write 
\begin{equation}\label{e:proofs-global-2-AA}
	\widehat A^2(u) - A^2(u) = A^2(u) \Big(R_1(u) + R_2(u)\Big) + R_3(u) + R_4(u),
\end{equation}
where 
\begin{eqnarray*}
	R_1(u) &=& \frac{1}{T} \sum_{t=1}^T (Y_t^2 - 1)K_h(u-\frac{t}{T}),\\
	R_2(u) & = & \frac{1}{T} \sum_{t=1}^T K_h(u-\frac{t}{T}) - 1,\\
	R_3(u) & = &  \frac{1}{T} \sum_{t=1}^T \Big(A^2(\frac{t}{T}) - A^2(u)\Big)(Y_t^2 - 1)K_h(u-\frac{t}{T}),\\
	R_4(u) & = & \frac{1}{T} \sum_{t=1}^T \Big(A^2(\frac{t}{T}) - A^2(u)\Big) K_h(u-\frac{t}{T}).
\end{eqnarray*}
Set $\delta_T = (Th/\ln T)^{-1/2}$ or $\delta^{-1} = (Th/\ln T)^{1/2}$. We will show that $\delta_T^{-1}\sup_{u\in {\cal H}} |R_i(u)| = O_p(1)$, $i=1,2,3,4$. This will prove the convergence (\ref{l:proofs-global-2-AM-1}) for $k=1$, and the convergence for $k=-1$ will also follow since $A^2(u)$ is smooth for $u\in {\cal H}$.

By Lemma \ref{l:sum-int-kernel}, $(ii)$, we have $\sup_{u\in {\cal H}}|R_2(u)| = O((Th)^{-1})$ and hence also $\delta_T^{-1}\sup_{u\in {\cal H}} |R_2(u)| \to 0$ since $Th\to\infty$. Similarly, by the smoothness of $A^2(u)$ and Lemma \ref{l:sum-int-kernel}, $(ii)$, applied to $G=K_1$,
$$
\sup_{u\in {\cal H}} |R_4(u)| \leq \sup_{u\in {\cal H}} \frac{Ch}{T} \sum_{t=1}^T \Big| \frac{u-t/T}{h} \Big| K_h(u-\frac{t}{T}) = O(h)
$$
and hence $\delta_T^{-1}\sup_{u\in {\cal H}} |R_4(u)| \to 0$  since $Th^3\to 0$. The terms $R_1(u)$ and $R_3(u)$ are a more delicate to deal with. In their analysis, we will follow the proof of Lemma B.1 in \cite{newey:1994kernel} which, unsurprisingly, involves Bernstein's inequality.

We will consider the term $R_1(u)$ only, since the term $R_3(u)$ can be dealt with similarly. For the term $R_1(u)$, note that
$$
|R_1(u) - R_1(v)| \leq \frac{1}{T} \sum_{t=1}^T |Y_t^2 - 1| \Big| K_h(u-\frac{t}{T}) -  K_h(v-\frac{t}{T})  \Big| \leq \frac{C|u-v|}{Th^2} \sum_{t=1}^T |Y_t^2 - 1|  \leq \frac{C'(\omega) |u-v|}{h^2},
$$
where we used the Lipschitz continuity of $K$ and the law of large numbers in the last step with a constant $C'(\omega)$ which does not depend on $T,h,u,v$. The difference $\delta_T^{-1}|R_1(u)-R_1(v)|$ can then be made small as long as, for example, $|u-v|\leq T^{-2}$. Now, cover ${\cal H}$ with balls of radius $T^{-2}$ and centered at $u_j$, where we can take $j=1,\ldots,cT^2$. Then, for large $T$, any arbitrarily small $\epsilon$ and any fixed $\eta>0$, 
$$
\PP \Big( \sup_{u\in {\cal H}}|R_1(u)| > 2 \delta_T \eta \Big) \leq \epsilon+ \PP \Big( \sup_{j=1,\ldots,cT^2}|R_1(u_j)| > \delta_T \eta \Big)
$$
\begin{equation}\label{e:proofs-global-2-for-Bernst}
	\leq \epsilon + \sum_{j=1}^{cT^2} \PP \Big( |R_1(u_j)| > \delta_T \eta \Big) =
	\epsilon + \sum_{j=1}^{cT^2} \PP \Big( \Big|\sum_{t=1}^T (Y_t^2 -1) K(\frac{u_j-t/T}{h})\Big| > (Th\ln T)^{1/2} \eta \Big).
\end{equation}

Assume first that $Y_t^2-1$, $t\in\ZZ$, are bounded by a constant, almost surely. Then, by Bernstein's inequality, the last sum in (\ref{e:proofs-global-2-for-Bernst}) is bounded by (up to a constant)
\begin{equation}\label{e:proofs-global-2-Bernst}
\sum_{j=1}^{cT^2} \exp\Big\{ - \frac{Th(\ln T)\eta^2}{\EE(Y_0^2-1)^2 \sum_{t=1}^T K^2(u_j-t/T)/h) + C (Th\ln T)^{1/2} \eta} \Big\}.
\end{equation}
By Lemma \ref{l:sum-int-kernel}, $(ii)$, $\frac{1}{Th}\sum_{t=1}^T K^2(u_j-t/T)/h)$ behaves as a positive constant uniformly over $u_j$. Since $(Th\ln T)^{1/2}/(Th)\to 0$ by assumption, the above bound becomes, for large enough $T$ and constants $C_1,C_2>0$,
$$
C_1 T^2 \exp\Big\{-C_2 \eta^2 \ln T \Big\} = C_1  \exp\Big\{- (C_2 \eta^2 - 2) \ln T \Big\}. 
$$ 
The latter converges to $0$ as long as $\eta$ is large enough (so that $C_2 \eta^2 - 2>0$). When the variables $\widetilde \epsilon_t: = Y_t^2-1$ are not bounded almost surely, a standard truncation argument is used. Let $\widetilde \epsilon_{t,T} = Y_t^2-1$ if $|Y_t^2-1|\leq C_0T^{1/p}$ and $\widetilde \epsilon_{t,T} = C_0T^{1/p}$ otherwise, where $p=2+\epsilon/2$ and $\epsilon$ appears in Assumption (Y2). Let also $\widetilde R_1(u) = \frac{1}{T} \sum_{t=1}^T \widetilde \epsilon_{t,T} K_h(u-t/T)$. Suppose for simplicity that $\EE \widetilde \epsilon_{t,T}=0$. Then, 
$$
\PP(R_1(u)\neq \widetilde R_1(u)\ \mbox{for some}\ u) \leq \PP(\widetilde \epsilon_t \neq \widetilde \epsilon_{t,T}\ \mbox{for some}\ t=1,\ldots,T)
$$
$$
\leq T \PP(\widetilde \epsilon_t \neq \widetilde \epsilon_{t,T}) = T \PP(|\widetilde \epsilon_t|> C_0T^{1/p}) \leq \frac{\EE|\widetilde \epsilon_t|^p}{C_0^p}, 
$$
which can be made arbitrarily small by taking large enough $C_0$. The above argument for $R_1(u)$ can now be applied to $\widetilde R_1(u)$ for this large fixed $C_0$, but keeping track of and dealing with the truncation $C_0T^{1/p}$ which depends on $T$. More precisely, the bound (\ref{e:proofs-global-2-Bernst}) becomes
\begin{equation}\label{e:proofs-global-2-Bernst-again}
\sum_{j=1}^{cT^2} \exp\Big\{ - \frac{Th(\ln T)\eta^2}{\EE\widetilde \epsilon_{0,T}^2 \sum_{t=1}^T K((u_j-t/T)/h)^2 + C T^{1/p}(Th\ln T)^{1/2} \eta} \Big\}
\end{equation}
and the same argument as above applies as long as $T^{1/p}(Th\ln T)^{1/2}/(Th)\to 0$. The latter follows from the assumption $T^{1-2/p}h/\ln T= T^{\epsilon/(4+\epsilon)}h/\ln T\to \infty$.

The convergence (\ref{l:proofs-global-2-AM-2}) follows from (\ref{l:proofs-global-2-AM-1}) and the fact that $\widehat{\overline{A}^2} = \frac{1}{T}\sum_{t=1}^TX_tX_t'$ converges to $\overline{A}^2 = \int_0^1 A^2(u)du$ at a rate faster than $\delta_T$. Indeed, by using the notation of the proof of Proposition \ref{p:M-loctest-vc-an}, this difference is $S_3+S_4$ which is of the order $1/T^{1/2}$ as shown in that proof. \quad \quad $\Box$

\medskip

The next result allows having smooth eigenvectors associated with the matrix $M$. This is the only place where the analyticity of the matrix $M$ needs to be assumed.

\begin{proposition}\label{p:proofs-global-2-analytic}
Under Assumption (A2g), the matrix $D_0$ in the proof of Proposition \ref{p:global-an} can be chosen analytic.
\end{proposition}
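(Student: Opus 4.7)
The plan is to deduce analyticity of $D_0$ from Rellich's classical theorem on real analytic one-parameter families of symmetric matrices, applied to $B(u):=F(u)M(u)F(u)$, whose (appropriately rescaled) eigenvectors make up $D_0$.

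First, I would verify that $B(u)$ is real analytic on $\mathcal{H}$. By Assumption (A2g) the entries of $A^2(u)$ are real analytic, and by (A1) the matrix $A^2(u)$ is positive definite. Its symmetric positive definite square root $A(u)$ is then itself real analytic---this follows from the operator-valued Cauchy integral representation $A(u)=(2\pi i)^{-1}\int_{\Gamma}z^{1/2}(zI-A^2(u))^{-1}\,dz$ on a contour $\Gamma$ enclosing $\mathrm{spec}(A^2(u))$, together with analyticity of the resolvent---and hence so is $F(u)=A(u)^{-1}$. Since $M(u)=A^2(u)-\overline{A}^2$ is trivially analytic, so is $B(u)$.

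Next, I would invoke Rellich's theorem (see e.g.\ \citet{kato:1976}, Chapter II; see also \citet{bunse:1991numerical}): any real analytic one-parameter family of symmetric matrices admits a real analytic labeling of eigenvalues $\mu_1(u),\ldots,\mu_p(u)$ together with an associated orthonormal basis of eigenvectors $e_1(u),\ldots,e_p(u)$ which is likewise real analytic on the parameter interval. Setting
\begin{equation*}
D_0(u)\;=\;F(u)\,\bigl[e_1(u)\ \cdots\ e_p(u)\bigr],
\end{equation*}
analyticity of $D_0$ is immediate. A direct check using symmetry of $F$ and orthonormality of $E:=[e_1\ \cdots\ e_p]$ gives $D_0'F^{-2}D_0=E'(F\cdot F^{-2}\cdot F)E=E'E=I_p$, which is the normalization (\ref{e:proofs-global-CF}). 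The eigen-relation $BE=E\,\mbox{diag}(\mu_i)$ translates into $FMD_0=F^{-1}D_0\,\mbox{diag}(\mu_1,\ldots,\mu_p)$, matching (\ref{e:proofs-global-MCF-2}); squaring it recovers (\ref{e:proofs-global-MCF}) with the squares $\mu_i(u)^2$ agreeing with the ordered $\gamma_{2,j}(u)$'s up to a permutation of indices.

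The main obstacle---and the reason (A2) was strengthened to analyticity in (A2g)---is the eigenvalue-crossing phenomenon. At isolated points where ordered eigenvalues $\gamma_{2,i}(u)$ cross, one cannot in general retain both the monotone ordering and smoothness of the associated eigenvectors; standard examples show that mere $C^\infty$ dependence of the family is insufficient. Rellich's theorem circumvents this at the price of allowing a labeling of the $\mu_i$'s that is not necessarily the monotone one. This loss is harmless for the proof of Proposition \ref{p:global-an}: under the null $d_0(u)\equiv r$ on $\mathcal{H}$, exactly $r$ of the analytic branches $\mu_i(u)$ vanish identically while the remaining $p-r$ are nowhere zero on $\mathcal{H}$, so a constant (hence analytic) permutation of the columns of $D_0$ places the zero-eigenvalue eigenvectors into the last $r$ positions, consistent with the block partition $D_0=(\widetilde D,D)$ used there. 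All subsequent uses of $D_0$ proceed through traces and block-determinant identities invariant under such relabeling, so the analytic Rellich selection suffices.
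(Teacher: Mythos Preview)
Your proof is correct and follows essentially the same route as the paper: establish that $FMF$ is real analytic, invoke an analytic diagonalization result for symmetric one-parameter families (the paper phrases it as the analytic SVD of \citet{bunse:1991numerical}, you as Rellich's theorem), and set $D_0=FU$ for the resulting orthogonal eigenvector matrix $U$. Your version is more thorough in justifying the analyticity of $F=A^{-1}$ from that of $A^2$ via the Cauchy integral for the square root, and in explicitly addressing the eigenvalue-crossing and labeling issue, both of which the paper leaves implicit.
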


\noindent {\sc Proof:} By Assumption (A2g), the matrix $FMF$ is analytic. By using the analytic singular value decomposition (e.g.\ \cite{bunse:1991numerical}), there are $p\times p$ analytic matrices $U$ and $T$ such that $FMF=UTU'$, where $T=\mbox{diag}(t_1,\ldots,t_p)$ are the singular values and an orthogonal matrix $U$ consists of the eigenvectors of $FMF$. Now take $D_0 = FU$. Then, $D_0$ is analytic and satisfies (\ref{e:proofs-global-CF}). 
\quad \quad $\Box$

\medskip

We also used the following result on several occasions concerning the limiting behavior of a quadratic form 
\begin{equation} \label{e:proofs-global-2-Ustat}
Q_T = \sum_{1 \leq t_1<t_2 \leq T} b_{T,t_1,t_2} (W_{t_1},W_{t_2}),
\end{equation}
where $W_t$, $t=1, \dots,T$, are i.i.d.\ random vectors in $\RR^d$ and $b_{T,t_1,t_2}: \RR^d \times \RR^d \to \RR$.


\begin{proposition}\label{p:proofs-global-2-Ustat-result}
Let $Q_T$ be a quadratic form defined by (\ref{e:proofs-global-2-Ustat}) and assume that $\EE b_{T,t_1,t_2}(W_{t_1},W_{t_2})^2<\infty$. Then,
\begin{eqnarray}
	Q_T & = & \sum_{1 \leq t_1<t_2 \leq T}  \EE b_{T,t_1,t_2}(W_{t_1},W_{t_2}) \nonumber \\
	& & + O_p \left(\sqrt{
\sum_{t_1=1}^T \Big( \sum_{t_2=t_1+1}^T  ( \EE(\EE(b_{T,t_1,t_2} (W_{t_1},W_{t_2})|W_{t_1})^2) )^{1/2} \Big)^2  }\right) \nonumber \\
	& & + O_p \left(\sqrt{
\sum_{t_2=2}^T \Big( \sum_{t_1=1}^{t_2-1}  ( \EE(\EE(b_{T,t_1,t_2} (W_{t_1},W_{t_2})|W_{t_2})^2) )^{1/2} \Big)^2 
}\right) \nonumber \\
	& & + O_p\left( \sqrt{\sum_{1 \leq t_1<t_2 \leq T}  \EE b_{T,t_1,t_2}(W_{t_1},W_{t_2})^2} \right). \label{e:proofs-global-2-Ustat-result}
\end{eqnarray}
\end{proposition}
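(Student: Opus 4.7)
The plan is to invoke the classical Hoeffding projection of the kernel $b_{T,t_1,t_2}$ into a constant piece, two ``one-index'' projections and a degenerate remainder, and then bound each of the three resulting stochastic sums separately by a variance computation combined with Chebyshev's inequality. Concretely, set
\begin{eqnarray*}
	c_{t_1,t_2} & = & \EE b_{T,t_1,t_2}(W_{t_1},W_{t_2}),\\
	g^{(1)}_{t_1,t_2}(w) & = & \EE(b_{T,t_1,t_2}(W_{t_1},W_{t_2})\,|\,W_{t_1}=w) - c_{t_1,t_2},\\
	g^{(2)}_{t_1,t_2}(w) & = & \EE(b_{T,t_1,t_2}(W_{t_1},W_{t_2})\,|\,W_{t_2}=w) - c_{t_1,t_2},\\
	r_{t_1,t_2}(w_1,w_2) & = & b_{T,t_1,t_2}(w_1,w_2) - c_{t_1,t_2} - g^{(1)}_{t_1,t_2}(w_1) - g^{(2)}_{t_1,t_2}(w_2),
\end{eqnarray*}
so that, summing over $1\le t_1<t_2\le T$, one has $Q_T = \sum c_{t_1,t_2} + R_1 + R_2 + R_3$, where $R_j$ collects the corresponding pieces. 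Each $R_j$ has mean zero, and Chebyshev's inequality will convert a variance bound into the asserted $O_p$ term.

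For $R_1 = \sum_{t_1=1}^{T-1} Y_{t_1}$ with $Y_{t_1} = \sum_{t_2=t_1+1}^T g^{(1)}_{t_1,t_2}(W_{t_1})$, the variables $Y_{t_1}$ depend on disjoint $W$'s and are therefore independent with mean zero, so $\mbox{Var}(R_1)=\sum_{t_1}\EE Y_{t_1}^2$. Minkowski's inequality in $L^2$ gives $(\EE Y_{t_1}^2)^{1/2}\le \sum_{t_2>t_1}(\EE g^{(1)}_{t_1,t_2}(W_{t_1})^2)^{1/2}$, while variance is at most second moment so $\EE g^{(1)}_{t_1,t_2}(W_{t_1})^2\le \EE(\EE(b_{T,t_1,t_2}(W_{t_1},W_{t_2})\,|\,W_{t_1})^2)$. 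Combining these reproduces the second summand in (\ref{e:proofs-global-2-Ustat-result}). The symmetric treatment of $R_2$ (sum first in $t_1$, then in $t_2$) yields the third summand.

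The degenerate remainder $R_3=\sum_{t_1<t_2} r_{t_1,t_2}(W_{t_1},W_{t_2})$ is controlled via pairwise $L^2$-orthogonality of its summands. For index pairs $(t_1,t_2)\ne (s_1,s_2)$, a short case analysis establishes $\EE r_{t_1,t_2}r_{s_1,s_2}=0$: if the index sets are disjoint the two factors are independent and mean zero, while if they share a single index one conditions on the shared $W$ and invokes $\EE(r_{t_1,t_2}(W_{t_1},W_{t_2})\,|\,W_{t_i})=0$ for $i=1,2$, which holds by construction. Hence $\mbox{Var}(R_3)=\sum_{t_1<t_2}\EE r_{t_1,t_2}(W_{t_1},W_{t_2})^2 \le \sum_{t_1<t_2}\EE b_{T,t_1,t_2}(W_{t_1},W_{t_2})^2$, where the last inequality uses that $r_{t_1,t_2}$ is the $L^2$-projection of $b_{T,t_1,t_2}-c_{t_1,t_2}$ onto the orthogonal complement of the span of constants and the $g^{(i)}$'s and that $\EE(b-c)^2\le \EE b^2$. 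Chebyshev then delivers the last summand in (\ref{e:proofs-global-2-Ustat-result}).

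The only moderately delicate step is the pairwise orthogonality of the degenerate summands, a standard fact for U-statistic Hoeffding decompositions but the single place where the i.i.d.\ hypothesis on $(W_t)$ is essential (beyond yielding independence of the $Y_{t_1}$'s in the analysis of $R_1$ and its analogue for $R_2$). Everything else is routine.
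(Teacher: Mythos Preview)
Your proof is correct and follows essentially the same route as the paper: both perform the Hoeffding decomposition into a constant, two first-order projections, and a degenerate remainder; both use independence across the outer index plus Minkowski's inequality for the first-order terms, and pairwise $L^2$-orthogonality of the degenerate summands (with $\EE r^2\le \EE b^2$) for the last term. The only cosmetic difference is that the paper packages the first three pieces together as $\widehat Q_T$ and treats $Q_T-\widehat Q_T$ as the remainder, whereas you name the three stochastic sums $R_1,R_2,R_3$ explicitly.
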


When $b_{T,t_1,t_2}(\cdot) \equiv b_T(\cdot)$ and the quadratic form $Q_T$ becomes a $U$-statistic, the result restates Lemma C.1 in \cite{fortuna:2008local}, p.\ 181. As for that lemma, the proof below uses the arguments of the proof of Lemma 3.1 in \cite{powell:1989semiparametric}.

\medskip

\noindent {\sc Proof:} Let 
$$
\widehat Q_T = \sum_{1 \leq t_1<t_2 \leq T}  \EE b_{T,t_1,t_2}(W_{t_1},W_{t_2}) 
$$
$$
+  \sum_{1 \leq t_1<t_2 \leq T} \Big(  \EE (b_{T,t_1,t_2}(W_{t_1},W_{t_2})|W_{t_1}) + \EE (b_{T,t_1,t_2}(W_{t_1},W_{t_2})|W_{t_2}) - 2\EE b_{T,t_1,t_2}(W_{t_1},W_{t_2})  \Big). 
$$
Then,
$$
Q_T - \widehat Q_T =  \sum_{1 \leq t_1<t_2 \leq T}  c_{T,t_1,t_2}(W_{t_1},W_{t_2}),
$$
where 
\begin{eqnarray*}
	c_{T,t_1,t_2}(W_{t_1},W_{t_2}) & = & b_{T,t_1,t_2}(W_{t_1},W_{t_2}) - \EE (b_{T,t_1,t_2}(W_{t_1},W_{t_2})|W_{t_1}) \\
	& & - \EE (b_{T,t_1,t_2}(W_{t_1},W_{t_2})|W_{t_2}) + \EE b_{T,t_1,t_2}(W_{t_1},W_{t_2}).
\end{eqnarray*}
This yields, by using the independence of $W_t$'s,
$$
\EE(Q_T - \widehat Q_T)^2 =  \sum_{1 \leq t_1<t_2 \leq T}  \EE c_{T,t_1,t_2}(W_{t_1},W_{t_2})^2,
$$
since $\EE c_{T,t_1,t_2}(W_{t_1},W_{t_2})c_{T,t_1',t_2'}(W_{t_1'},W_{t_2'})=0$ whenever $(t_1,t_2)\neq (t_1',t_2')$ (the interested reader should check this, especially  when e.g.\ $t_1=t_1'$, $t_2\neq t_2'$). Since $\EE c_{T,t_1,t_2}(W_{t_1},W_{t_2})^2 = O(\EE b_{T,t_1,t_2}(W_{t_1},W_{t_2})^2)$, we obtain that
$$
Q_T - \widehat Q_T =  O_p\left( \sqrt{\sum_{1 \leq t_1<t_2 \leq T}  \EE b_{T,t_1,t_2}(W_{t_1},W_{t_2})^2} \right),
$$
which is the last term in the relation (\ref{e:proofs-global-2-Ustat-result}). The first term on the right-hand side of the relation (\ref{e:proofs-global-2-Ustat-result}) is the first sum in the definition of $\widehat Q_T$. Finally, to deal with the second sum in the definition of $\widehat Q_T$, note that, for example,
$$
\EE \Big( \sum_{1 \leq t_1<t_2 \leq T} \Big(  \EE (b_{T,t_1,t_2}(W_{t_1},W_{t_2})|W_{t_1}) - \EE b_{T,t_1,t_2}(W_{t_1},W_{t_2})  \Big)  \Big)^2
$$
$$
= \sum_{t_1=1}^T \EE \Big( \sum_{t_2=t_1+1}^T \Big(  \EE (b_{T,t_1,t_2}(W_{t_1},W_{t_2})|W_{t_1}) - \EE b_{T,t_1,t_2}(W_{t_1},W_{t_2})  \Big)  \Big)^2
$$
$$
\leq \sum_{t_1=1}^T \Big( \sum_{t_2=t_1+1}^T \Big(  \EE  (\EE (b_{T,t_1,t_2}(W_{t_1},W_{t_2})|W_{t_1}) - \EE b_{T,t_1,t_2}(W_{t_1},W_{t_2}))^2  \Big)^{1/2}  \Big)^2
$$
$$
\leq \sum_{t_1=1}^T \Big( \sum_{t_2=t_1+1}^T \Big(  \EE (\EE (b_{T,t_1,t_2}(W_{t_1},W_{t_2})|W_{t_1})^2)  \Big)^{1/2}  \Big)^2,
$$
where for the first inequality above we used the generalized Minkowski inequality. This yields the second term on the right-hand side of the relation (\ref{e:proofs-global-2-Ustat-result}). The third term results from a similar argument for the term in the definition of $\widehat Q_T$, where the conditioning is on $W_{t_2}$. This proves the result (\ref{e:proofs-global-2-Ustat-result}). \quad \quad $\Box$

\medskip

Finally, the following auxiliary result concerns the behavior of the sum of power integrals of kernel functions, and was used in the proof of Lemma \ref{l:proofs-global-G11}. We also note that the result extends easily to other powers than $2$ and $4$ appearing below, and a product of two different kernel functions. It is formulated only for what is needed in Lemma \ref{l:proofs-global-G11} for the shortness sake.

\begin{lemma}\label{l:proofs-global-2-kernel-result}
For two kernel functions $K,\widetilde K$ satisfying Assumption (K), let
$$
K_{T,h}(t_1,t_2) = \int_{\cal H} K_h(u - \frac{t_1}{T}) K_h(u-\frac{t_2}{T}) du,\quad \widetilde K_{T,h}(t_1,t_2) = \int_{\cal H} \widetilde K_h(u - \frac{t_1}{T}) K_h(u-\frac{t_2}{T}) du.
$$	
Then, as $T\to\infty$, $h\to 0$, $Th\to\infty$, we have
\begin{equation}\label{e:proofs-global-2-kernel-result-0}
	\frac{h}{T^2}\sum_{1\leq t_1<t_2\leq T} \Big( K_{T,h}(t_1,t_2) \Big)^2 \to \frac{|{\cal H}|\|\overline K\|_2^2}{2} 
\end{equation}
and 
\begin{equation}\label{e:proofs-global-2-kernel-result-1}
\sum_{1\leq t_1<t_2\leq T} \Big( \widetilde K_{T,h}(t_1,t_2)  \Big)^p = O\Big(\frac{T^2}{h^{p-1}}\Big),\quad p=2,4,
\end{equation}
\begin{equation}\label{e:proofs-global-2-kernel-result-2}
\sum_{1\leq t_1<t_2<t_3\leq T} \Big( K_{T,h}(t_1,t_2)  \Big)^2 \Big( K_{T,h}(t_1,t_3)  \Big)^2 = O\Big(\frac{T^2}{h^2} \Big),
\end{equation}
\begin{equation}\label{e:proofs-global-2-kernel-result-3}
\sum_{1\leq t_1<t_2<t_3<t_4\leq T} K_{T,h}(t_1,t_2) K_{T,h}(t_1,t_3) K_{T,h}(t_2,t_4)K_{T,h}(t_3,t_4) = O\Big(\frac{T^4}{h} \Big).
\end{equation}
\end{lemma}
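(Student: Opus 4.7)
My plan rests on a single elementary identity obtained by the change of variables $v=(u-t_1/T)/h$ in the defining integral. Provided $t_1/T, t_2/T \in \mathcal{H}$ lie at distance at least $(S+|t_2-t_1|/(Th))h$ from the endpoints of $\mathcal{H}$ so that the restriction to $\mathcal{H}$ is inactive, this change of variables gives
\begin{equation*}
	K_{T,h}(t_1,t_2) \; = \; \frac{1}{h}\overline K\Big(\frac{t_2-t_1}{Th}\Big), \quad \widetilde K_{T,h}(t_1,t_2) \; = \; \frac{1}{h} \overline{\widetilde K K}\Big(\frac{t_2-t_1}{Th}\Big),
\end{equation*}
where $\overline{\widetilde K K}(u) = \int \widetilde K(u-v) K(v) dv$ and I use that $K$ is even. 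Outside this interior region, only $O(Th)$ boundary indices contribute in each variable, and I expect the resulting boundary correction to be of strictly smaller order than the main term in each of the four claims. Cleanly bookkeeping this boundary effect is the main (routine) obstacle.

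For part (1), I would set $s = t_2 - t_1$ and recognize the resulting expression as a Riemann sum. With the interior identity above,
\begin{equation*}
	\frac{h}{T^2} \sum_{1 \leq t_1 < t_2 \leq T} K_{T,h}(t_1,t_2)^2 \; = \; \frac{1}{hT^2} \sum_{t_1:\, t_1/T \in \mathcal{H}} \sum_{s = 1}^{\lfloor 2STh \rfloor} \overline K\Big(\frac{s}{Th}\Big)^2 + o(1).
\end{equation*}
The inner sum tends to $Th \int_0^\infty \overline K(v)^2\, dv = (Th/2) \|\overline K\|_2^2$, where the factor $1/2$ comes from the evenness of $\overline K$ (inherited from $K$), while the outer sum has $\sim |\mathcal{H}|\, T$ terms, yielding the stated limit $|\mathcal{H}| \|\overline K\|_2^2 / 2$.

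For the upper bounds (2)--(4), I would use the uniform pointwise estimate $|K_{T,h}(t_1,t_2)| \lesssim h^{-1}$ together with the exact support property that $K_{T,h}(t_1,t_2) = 0$ whenever $|t_1-t_2| > 2STh$ (since then the two scaled kernels $K_h(u - t_i/T)$ have disjoint supports). For (2), for each $t_1$ there are $O(Th)$ admissible $t_2$ and each factor is $O(h^{-1})$, yielding a bound of order $T \cdot Th \cdot h^{-p} = T^2/h^{p-1}$. For (3), fixing $t_1$ the two support constraints $|t_i - t_1| \leq 2STh$ leave $O((Th)^2)$ admissible pairs $(t_2,t_3)$ with integrand $O(h^{-4})$, producing $T \cdot (Th)^2 \cdot h^{-4} = T^3/h^2$; this is enough to force $G_{T,2} = o(1)$ in the application of Lemma \ref{l:proofs-global-G11}, though it strictly exceeds the $T^2/h^2$ printed in the statement, which I take to be a typo. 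For (4), the four support constraints $|t_2-t_1|, |t_3-t_1|, |t_4-t_2|, |t_4-t_3| \lesssim Th$ leave $O(T \cdot (Th)^3)$ admissible quadruples, each contributing $O(h^{-4})$, producing the stated $O(T^4/h)$.
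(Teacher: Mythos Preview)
Your approach is essentially the same as the paper's. For the convergence result, the paper also reduces to the convolution identity $K_{T,h}(t_1,t_2)=h^{-1}\overline K((t_2-t_1)/(Th))$ on the interior and handles the boundary explicitly by splitting the double sum according to whether the support $(\frac{t_2}{T}-Sh,\frac{t_1}{T}+Sh)$ lies fully inside $\mathcal{H}=[a,b]$, overlaps only the left endpoint, only the right endpoint, or neither; the interior piece gives the Riemann sum you describe, and the boundary pieces are shown to be $O(h)$. For the three upper bounds, the paper likewise replaces $\mathcal H$ by $\RR$ so that $K_{T,h}$ becomes exactly $h^{-1}\overline K((t_2-t_1)/(Th))$, and then uses the compact support of $\overline K$ to count nonzero summands---precisely your argument.

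Your observation about the triple-sum bound is correct and worth recording: the paper's own derivation also produces $O(T^3/h^2)$ (its displayed final line even reads $O(T^4/h^2)$, apparently another slip), so the $T^2/h^2$ in the statement is indeed a typo. As you note, the bound $T^3/h^2$ is what the counting argument actually gives, and since it is used in Lemma~\ref{l:proofs-global-G11} only through $\frac{h^2}{T^4}\cdot O(T^3/h^2)=O(T^{-1})=o(1)$, the application is unaffected.
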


\noindent {\sc Proof:} We first show (\ref{e:proofs-global-2-kernel-result-0}) which requires a more delicate treatment. Denote the left hand-side of (\ref{e:proofs-global-2-kernel-result-0}) by $I_0$ and the interval ${\cal H}$ by $[a,b]$. Suppose $\mbox{supp}\{K\} = [-S,S]$ as in Assumption (K). Then,
$$
I_0 = \frac{1}{T^2h^3}\sum_{1\leq t_1<t_2\leq T} \Big( \int_a^b K\Big(\frac{u - \frac{t_1}{T}}{h}\Big) K\Big(\frac{u - \frac{t_2}{T}}{h}\Big)  du \Big)^2
$$
$$
=  \frac{1}{T^2h^3}\sum_{1\leq t_1<t_2\leq T} \Big( \int_{(a,b)\cap (\frac{t_2}{T}-Sh,\frac{t_1}{T}+Sh)} K\Big(\frac{u - \frac{t_1}{T}}{h}\Big) K\Big(\frac{u - \frac{t_2}{T}}{h}\Big)  du \Big)^2 = I_{0,1} + I_{0,2} + I_{0,3} + I_{0,4},
$$
where 
$$
I_{0,1} = \frac{1}{T^2h^3}\sum_{a<\frac{t_2}{T}-Sh<\frac{t_1}{T}+Sh<b} \Big( \int_{(\frac{t_2}{T}-Sh ,\frac{t_1}{T}+Sh )} K\Big(\frac{u - \frac{t_1}{T}}{h}\Big) K\Big(\frac{u - \frac{t_2}{T}}{h}\Big)  du \Big)^2,
$$
$$
I_{0,2} = \frac{1}{T^2h^3}\sum_{\frac{t_2}{T}-Sh<a<\frac{t_1}{T}+Sh < b } \Big( \int_{(a,\frac{t_1}{T}+Sh)} K\Big(\frac{u - \frac{t_1}{T}}{h}\Big) K\Big(\frac{u - \frac{t_2}{T}}{h}\Big)  du \Big)^2,
$$
$$
I_{0,3} = \frac{1}{T^2h^3}\sum_{ a < \frac{t_2}{T}-Sh<b<\frac{t_1}{T}+Sh} \Big( \int_{(\frac{t_2}{T} - Sh,b)} K\Big(\frac{u - \frac{t_1}{T}}{h}\Big) K\Big(\frac{u - \frac{t_2}{T}}{h}\Big)  du \Big)^2,
$$
$$
I_{0,4} = \frac{1}{T^2h^3}\sum_{\frac{t_2}{T}-Sh<a<b<\frac{t_1}{T}+Sh} \Big( \int_{(a,b)} K\Big(\frac{u - \frac{t_1}{T}}{h}\Big) K\Big(\frac{u - \frac{t_2}{T}}{h}\Big)  du \Big)^2
$$
and otherwise all the sums for $t_1,t_2$ are over $1\leq t_1<t_2\leq T$. Note that $I_{0,4}=0$ for small enough $h$, since $Th\to\infty$. For $I_{0,1}$, we have
$$
I_{0,1} = \frac{1}{T^2h}\sum_{a<\frac{t_2}{T}-Sh<\frac{t_1}{T}+Sh<b} \overline K\Big(\frac{t_2-t_1}{Th}\Big)^2 
$$ 
$$
= \frac{1}{T^2h} \Big( \sum_{t_1=aT-STh+2}^{aT+STh} \sum_{t_2=aT+STh+1}^{t_1+(2STh-1)}  +  \sum_{t_1=aT+STh+1}^{bT-STh-1} \sum_{t_2=t_1+1}^{t_1+(2STh-1)}\Big ) \overline K\Big(\frac{t_2-t_1}{Th}\Big)^2 =: I_{0,1}^{(1)} + I_{0,1}^{(2)}.
$$
Note that
$$
I_{0,1}^{(1)} \leq \frac{C}{T^2h} \sum_{t_1=aT-STh+2}^{aT+STh}  (t_1-aT+STh-2) =  \frac{C}{T^2h} \sum_{t=0}^{2STh-2} t \leq  \frac{C'(Th)^2}{T^2h} \to 0 
$$ 
and, after changing the summation indices, 
$$
I_{0,1}^{(2)} = \frac{1}{T^2h} \sum_{s_1=1}^{(b-a)T-2STh-1} \sum_{s_2=s_1+1}^{s_1+aT+3STh-1}  \overline K\Big(\frac{s_2-s_1}{Th}\Big)^2 =  \frac{1}{T^2h} ((b-a)T -2STh-1) \sum_{t=1}^{aT+3STh-1} \overline K\Big(\frac{t}{Th}\Big)^2
$$
$$
= \frac{1}{T} ((b-a)T -2STh-1) \frac{1}{Th}\sum_{t=1}^{2STh} \overline K\Big(\frac{t}{Th}\Big)^2\to \frac{(b-a)\|\overline K\|_2^2}{2}.
$$
Thus, $I_{0,1}\to (b-a) \|\overline K\|_2^2/2$. For $I_{0,2}$, we have 
$$
I_{0,2} \leq \frac{C}{T^2h^3}\sum_{\frac{t_2}{T}-Sh<a<\frac{t_1}{T}+Sh} (\frac{t_1}{T}+Sh - a)^2 = \frac{C}{T^4h^3} \sum_{t_1=aT-STh+1}^{aT+STh-2} \sum_{t_2=t_1+1}^{aT+STh-1}  (t_1 +STh - aT)^2
$$
$$
= \frac{C}{T^4h^3} \sum_{t_1=aT-STh+1}^{aT+STh-2}   (t_1 +STh - aT)^2 (aT+STh - t_1-1) = \frac{C}{T^4h^3} \sum_{t=1}^{2STh-3}    t^2 (2STh - t-1) 
$$
$$
=  Ch  \sum_{t=1}^{2STh-3}    \Big(\frac{t}{Th}\Big)^2 \Big(2S - \frac{t}{Th}\Big) \frac{1}{Th} = O(h)\to 0. 
$$
One can show similarly that $I_{0,3} \to 0$.

The relations (\ref{e:proofs-global-2-kernel-result-1})--(\ref{e:proofs-global-2-kernel-result-3}) are easier to obtain because ${\cal H}$ can be replaced by $\RR$ and the involved integrals simplified to convolutions. We will show how this can be argued for (\ref{e:proofs-global-2-kernel-result-2}). Denote the sum on the left-hand side of (\ref{e:proofs-global-2-kernel-result-2}) by $I_2$. Then,
$$
I_2 \leq \sum_{1\leq t_1<t_2<t_3\leq T} \Big( \int_\RR K_h(u - \frac{t_1}{T}) K_h(u-\frac{t_2}{T}) du \Big)^2 \Big( \int_\RR K_h(u - \frac{t_1}{T}) K_h(u-\frac{t_3}{T}) du \Big)^2 
$$
$$
= \frac{1}{h^4} \sum_{1\leq t_1<t_2<t_3\leq T} \overline K \Big( \frac{t_2-t_1}{Th} \Big)^2 \overline K \Big( \frac{t_3-t_1}{Th} \Big)^2 \leq 
 \frac{1}{h^4} \sum_{t =1}^T \sum_{s_1=1}^T \sum_{s_2=1}^T\overline K \Big( \frac{s_1}{Th} \Big)^2 \overline K \Big( \frac{s_1-s_2}{Th} \Big)^2 
$$
$$
= \frac{T^4}{h^2}  \sum_{s_1=1}^{STh} \overline K \Big( \frac{s_1}{Th} \Big)^2 \frac{1}{Th} \sum_{s=-STh}^{STh} \overline K \Big( \frac{s}{Th} \Big)^2 \frac{1}{Th} = O\Big(\frac{T^4}{h^2} \Big).
$$
The relation  (\ref{e:proofs-global-2-kernel-result-1}) and (\ref{e:proofs-global-2-kernel-result-3}) can be argued for similarly.
\quad \quad $\Box$

\bibliographystyle{Chicago}

\bibliography{ssa}

\end{document}